\newcommand{\R}{\mathbb R}
\newcommand{\Z}{\mathbb Z}
\newcommand{\C}{\mathbb C}
\renewcommand{\l}{\lambda}
\newcommand{\N}{\mathbb{N}}
\newcommand{\T}{\mathbb{T}}
\newtheorem{thm}{Theorem}[section]
\newtheorem{lem}[thm]{Lemma}
\newtheorem{prop}[thm]{Proposition}
\newtheorem{cor}[thm]{Corollary}
\newtheorem{rem}{\bf Remark}[section]
\newtheorem{defn}[thm]{Definition}
\numberwithin{equation}{section}
\begin{document}
	
\title[]{Anderson localized states for the quasi-periodic nonlinear wave equation on $\mathbb Z^d$}

\author[]{Yunfeng Shi}
\address[Y.S.]{School of Mathematics,
Sichuan University,
Chengdu 610064,
China}
\email{yunfengshi@scu.edu.cn}
\author[]{W.-M. Wang}
\address[W.W.] {CNRS and D\'epartment De Math\'ematique ,
Cergy Paris Universit\'e,
Cergy-Pontoise Cedex 95302,
France}
\email{wei-min.wang@math.cnrs.fr}

%\date{\today}

%\keywords{Anderson localization, quasi-periodic nonlinear  wave equation, multi-scale analysis, Craig-Wayne-Bourgain method}

\begin{abstract} We establish large sets of Anderson localized states for the quasi-periodic nonlinear wave equation on $\mathbb Z^d$, thus extending nonlinear Anderson localization from the random  \cite{BW08} to a deterministic setting. 

\

\noindent\textsc{R\'esum\'e.} 
Nous \'etablissons de grands ensembles d'\'etats localis\'es d'Anderson pour l'\'equation d'onde non lin\'eaire quasi-p\'eriodique sur $\mathbb Z^d$, \'etendant ainsi la localisation d'Anderson non lin\'eaire du cadre al\'eatoire \cite{BW08} \`a un cadre d\'eterministe.

\

\textbf{MSC[2020]:}  37K55, 37K10, 37K45 

\

\textbf{Key words:}  Anderson localization, quasi-periodic nonlinear  wave equation, multi-scale analysis, Craig-Wayne-Bourgain method
\end{abstract}

\maketitle
	\section{Introduction and Main Result}
	Consider the nonlinear wave equation with a quasi-periodic potential: 
	\begin{align}\label{NLW}
u_{tt}+(\varepsilon \Delta+\cos(n\cdot\alpha+\theta_0)\delta_{n, n'}+m)u+\delta u^{p+1}=0, \, \text{on } \mathbb Z^d,
\end{align}
where $\varepsilon$ and $\delta$ are parameters in $[0, 1]$, $m\in [2, 3]$ and $\Delta (n, n')=\delta_{\|n-n'\|_1, 1}$  denotes the discrete Laplacian with $\|n\|_1$ the $\ell^1$ norm.
When $\delta=0$, it is well known that the linear operator 
$$H=\varepsilon \Delta+\cos(n\cdot\alpha+\theta_0)\delta_{n, n'}, \, \text{on } \mathbb Z^d,$$
exhibits Anderson localization, namely pure point spectrum with exponentially decaying eigenfunctions,
 for small $\varepsilon$ and appropriate $\alpha$ and $\theta_0$. The conditions on $\alpha$ and
 $\theta_0$ can, moreover, be made explicit, giving arithmetic Anderson localization, see e.g., \cite{CSZ23, CSZ24}, see also \cite{Jit99,JK16,JL18,GY20}.
It follows that  when $\varepsilon \ll 1$, the linear wave equation 
 $$u_{tt}+(\varepsilon \Delta+\cos(n\cdot\alpha+\theta_0)\delta_{n, n'}+m)u=0,$$
 has only Anderson localized states for all $m$, i.e., roughly speaking, wave packets localized about the origin
 remain localized for all time.  
 
The present paper addresses the persistence question when $\delta\neq 0$, namely the existence of Anderson localized type solutions for the nonlinear wave equation \eqref{NLW}
when $0<\varepsilon, \delta\ll 1$, under appropriate conditions on $\alpha$, $\theta_0$ and $m$. Since both $\varepsilon$ and $\delta$ are small, we start from the equation  
$$u_{tt}+\cos(n\cdot\alpha+\theta_0)\delta_{n, n'}u+mu=0.$$
It has solutions of the form 
$$u^{(0)}(t, x)=\sum_{n\in\Z^d} a_n\cos (\mu_nt)\delta_n(x), \, x\in\mathbb Z^d,$$
where 
$$\mu_n=\sqrt{\cos(n\cdot\alpha+\theta_0)+m},$$
and $\delta_n(x)=1$, if $x=n$ and $0$ otherwise.

We study the persistence of the above type of solutions with finite number of frequencies in time. 
Denoting the number of frequencies by $b\geq 1$,  as an Ansatz, we seek solutions to \eqref{NLW} in the form of a convergent cosine series
in time:
\begin{align}\label{Ant}
	u(t,x)=\sum_{(k, n)\in\Z^b\times\Z^d}q (k,n) \cos(k\cdot\omega t)\delta_n(x),
	\end{align} 
	or equivalently 
\begin{align*}
	u(t,n)=\sum_{k\in\Z^b}q (k,n) \cos(k\cdot\omega t),
	\end{align*}
with appropriate conditions on 	$q (k,n)$ as $|k|+|n|\to\infty$,  {where $\omega\in\R^b$ will be constructed as  the $O(\delta)$ perturbations  of  some $\mu_n$  (cf. Theorem \ref{mthm} below for details)}. Note that solutions 
of the above form is consistent with the nonlinearity in \eqref{NLW}, as the cosine series 
in  \eqref{Ant} is closed under multiplication. 

\smallskip

To state the theorem, let us introduce the Diophantine conditions on $\alpha,\theta_0$.  Fix  $\nu>0$ and define for  $N>0,$
\begin{align}
\label{alpdc}{\rm DC}_\nu&=\bigcap_{n\in\Z^d\setminus\{0\}}\left\{\alpha\in[0,1]^d:\ \min_{\xi=1,\frac12}\|{(\xi n)}\cdot\alpha\|_{\T}\geq \frac{\nu}{|n|^{2d}}\right\},\\
\label{thedc}{\rm DC}_\alpha(N)&=\bigcap_{n\in\Z^d:\ |n|\leq 2N }\left\{\theta_0\in[0,1]:\ \|\theta_0+\frac{n}{2}\cdot\alpha\|_{\T}\geq N^{-3d}\right\},
\end{align}
where $|n|=\sup_{1\leq l\leq d}|n_l|$ and $\|x\|_\T=\inf_{l\in\Z}|x-2\pi l|.$  Denote by ${\rm meas}(\cdot)$ the Lebesgue measure of a set. We have
	\begin{thm}\label{mthm}
	Fix $\alpha\in{\rm DC}_\nu, {b}\in\N$ and 
	$\theta_0\in {\rm DC}_\alpha((\varepsilon+\delta)^{-\frac{1}{10^4db^4}})$. %, $0<K\leq 10^4db^4$.
	%$\theta_0\in {\rm DC}_\alpha(N)$, $N\geq (\varepsilon+\delta)^{-\frac{1}{10^4db^4}}$.
	%$\theta_0\in {\rm DC}_\alpha((\varepsilon+\delta)^{-\frac{1}{10^4db^4}})$. 
	Consider on $\Z^d$ the nonlinear wave equation 
	%\begin{align}\label{NLW}
$$u_{tt}+(\varepsilon \Delta+\cos(n\cdot\alpha+\theta_0)\delta_{n,n'}+m)u+\delta u^{p+1}=0,$$
%\end{align}		
where $m\in [2, 3]$ and $p\in 2\N$.  Fix (any) $\{n^{(l)}\}_{l=1}^b\subset \Z^d$ and let $a=(a_l)_{l=1}^b\in [1,2]^b.$
Consider a solution of \eqref{NLW} when $\varepsilon=\delta=0$,
\begin{align}\label{u0}
u^{(0)}(t,n)=\sum_{l=1}^ba_l\cos(\omega_l^{(0)}t)\delta_{n^{(l)}, n},
\end{align}
where 
$$\omega^{(0)}=(\omega_l^{(0)})_{l=1}^b=\left(\sqrt{\cos(n^{(l)}\cdot\alpha+\theta_0)+m}\right)_{l=1}^b\in[1, 2]^b.$$
 	%Let $a\in[1,2]^b$. 
	For $0<\varepsilon+\delta\ll1$ and $\varepsilon=O(\delta)$,   there is $\mathcal{M}=\mathcal{M}_{\alpha,\theta_0}\subset [2,3]$ with $
	{\rm meas}([2,3]\setminus\mathcal{M})=o(1)$, such that for $m\in \mathcal{M}$, there is $\mathcal{A}=\mathcal{A}_{\alpha,\theta_0,m}\subset [1,2]^b$ with ${\rm meas}(\mathcal{A})\geq 1-(\varepsilon+\delta)^{c}$ for some {$c=c(b, d)>0$}  so that  the following holds true. If $a\in \mathcal{A}$, there is $\omega=\omega(a)$ satisfying $|\omega-\omega^{(0)}| =O(\delta)$  so that
	\begin{align*}
	u(t,n)=\sum_{k\in\Z^b}q (k,n) \cos(k\cdot\omega t)%e^{ik\cdot\omega t}%=u^{(0)}+O(\sqrt{\varepsilon+\delta})
	\end{align*}
			is a solution of \eqref{NLW}. Moreover, we have 
			\begin{align*}
			q (k,n)&=q(-k,n)\in\R,\\
			 q (\pm e_l,n^{(l)})&=a_l/2, \ 1\leq l\leq b,\\
			\sum _{(k,n)\notin \mathcal{S}} |q (k,n)|e^{|k|+|n|}&<\sqrt{\varepsilon+\delta},
			\end{align*}
			where $\mathcal{S}=\{(e_l,n^{(l)})\}_{l=1}^b\cup\{(-e_l,n^{(l)})\}_{l=1}^b$ with $e_l$ the standard basis vectors for $\Z^b.$
			\end{thm}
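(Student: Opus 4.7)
The plan is to recast \eqref{NLW} under the ansatz \eqref{Ant} as a nonlinear lattice equation on $\Z^b\times\Z^d$ and solve it by a Lyapunov--Schmidt decomposition combined with a Nash--Moser iteration, following the strategy of Bourgain and of Bourgain--Wang adapted to this deterministic quasi-periodic setting. Substituting \eqref{Ant} into \eqref{NLW} and projecting on the cosine basis yields, for each $(k,n)\in\Z^b\times\Z^d$, an equation
\begin{align*}
F(q,\omega,m)(k,n)=\bigl[(k\cdot\omega)^2-m-\cos(n\cdot\alpha+\theta_0)\bigr]\,q(k,n)-\varepsilon(\Delta q)(k,n)-\delta\,Q_{p+1}(q)(k,n)=0,
\end{align*}
where $Q_{p+1}(q)$ is the $(p+1)$-linear form arising from the expansion of $u^{p+1}$ in cosines; it is diagonal in $n$ and of convolution type in $k$. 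The approximate solution $q^{(0)}$, supported on $\mathcal S$ with $q^{(0)}(\pm e_l,n^{(l)})=a_l/2$, solves $F=0$ when $\varepsilon=\delta=0$ and $\omega=\omega^{(0)}$, so the task is to find a small correction that restores $F=0$ for $0<\varepsilon+\delta\ll 1$.

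I split the unknowns into the $2b$-dimensional resonant block on $\mathcal S$ and its complementary high-mode block. The P-equation (projection of $F$ onto $\mathcal S$) is solved, \emph{after} the Q-equation, for the modulated frequency $\omega=\omega(a)$ by the implicit function theorem, using that the derivative in $\omega$ at $q^{(0)}$ is diagonal with nonzero entries $\pm 2\omega_l^{(0)}$, yielding $|\omega-\omega^{(0)}|\lesssim\delta$. The Q-equation is handled by a Newton iteration $q^{(s+1)}=q^{(s)}-T^{(s)}F(q^{(s)})$, where $T^{(s)}$ is an approximate inverse of the linearization $\mathcal L^{(s)}$ of the Q-component of $F$ at $q^{(s)}$, restricted to truncated boxes $\Lambda_s=\{(k,n)\colon |k|+|n|\leq N_s\}\setminus\mathcal S$ with $N_{s+1}=N_s^\chi$, $\chi>1$. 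The diagonal of $\mathcal L^{(s)}$ is the small divisor $D(\omega,m)(k,n)=(k\cdot\omega)^2-m-\cos(n\cdot\alpha+\theta_0)$; off-diagonally there is the $n$-Laplacian $-\varepsilon\Delta$ and the Toeplitz (convolution-in-$k$) block coming from the nonlinearity.

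The hard core of the proof is to establish, at each scale $N_s$ and for a large measure set of parameters, exponential off-diagonal decay of $(\mathcal L^{(s)}|_{\Lambda_s})^{-1}$. This is a multi-scale Anderson-localization problem for a Toeplitz-in-$k$ operator with quasi-periodic symbol in $n$. At the base scale, decay in $n$ follows from the arithmetic Anderson localization for $H=\varepsilon\Delta+\cos(n\cdot\alpha+\theta_0)$ cited in the introduction, while decay in $k$ is gained from the Diophantine condition on $\omega$. I then propagate through scales by resolvent identities together with a covering argument: the bad sub-boxes where an eigenvalue of $\mathcal L^{(s)}$ is $<N_s^{-C}$ are shown to be sparse, and their contribution is dominated by the exponential decay over the surrounding good region. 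The set of parameters $(m,\omega,\theta_0)$ on which a given scale fails is described by polynomial inequalities of controlled degree, so Bourgain's semi-algebraic complexity/covering lemma bounds its $m$-measure by a negative power of $N_s$; summing over $s$ gives the stated bound $(\varepsilon+\delta)^{1/30b^2}$ on $[2,3]\setminus\mathcal M$. The Diophantine assumptions ${\rm DC}_\nu$ and ${\rm DC}_\alpha\bigl((\varepsilon+\delta)^{-1/K}\bigr)$ are used precisely at the base-scale non-resonance estimates here.

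Given these Green's function bounds, the Newton scheme converges super-exponentially in the analytic norm $\|q\|_\rho=\sum_{(k,n)}|q(k,n)|\,e^{\rho(|k|+|n|)}$, the Nash--Moser losses being absorbed by the exponential weight; the final tail bound $\sqrt{\varepsilon+\delta}$ follows from the smallness of the initial residual $F(q^{(0)})=O(\varepsilon+\delta)$. A further semi-algebraic exclusion of amplitudes $a$ at which $\omega(a)$ fails the second-order Melnikov-type conditions gives the set $\mathcal A$ with the stated measure. The parity $q(k,n)=q(-k,n)$, reality, and the prescribed boundary values on $\mathcal S$ are preserved throughout the iteration because both $\mathcal L^{(s)}$ and the nonlinearity respect the even-cosine structure and the P-equation freezes the resonant block.
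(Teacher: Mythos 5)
Your overall framework (cosine ansatz, Lyapunov--Schmidt splitting into a resonant block on $\mathcal S$ and its complement, Newton/Nash--Moser iteration on truncated boxes, multiscale Green's function estimates with resolvent identities, semi-algebraic complexity bounds and Bourgain's projection machinery, amplitude--frequency modulation $\omega=\omega(a)$) is indeed the skeleton of the paper's proof, up to a harmless swap of the names ``P-equation'' and ``Q-equation''. But there is a genuine gap at the point the paper regards as its main novelty: you treat the non-resonance of the small divisors as something that can simply be \emph{assumed} (``decay in $k$ is gained from the Diophantine condition on $\omega$'') or extracted from semi-algebraic complexity alone. In this deterministic quasi-periodic setting the tangential frequencies $\omega^{(0)}_l=\mu_{n^{(l)}}$ and all normal frequencies $\mu_n=\sqrt{\cos(n\cdot\alpha+\theta_0)+m}$ are explicit \emph{functions} of the few parameters $(\alpha,\theta_0,m)$; they are not free, so Diophantine or Melnikov conditions on $k\cdot\omega^{(0)}$, $k\cdot\omega^{(0)}\pm\mu_n$, $k\cdot\omega^{(0)}+\mu_n-\mu_{n'}$ must be \emph{proved} to hold for most $m\in[2,3]$. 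Semi-algebraic degree bounds give covering/complexity information but by themselves yield no measure bound in $m$: a semi-algebraic set of bounded degree can have full measure. The measure exclusion defining $\mathcal M$ requires a quantitative transversality (non-flatness) statement for the whole family of functions $m\mapsto k\cdot\omega^{(0)}(m)+\mu_n(m)-\mu_{n'}(m)$, uniform in the lattice sites. The paper obtains this by a Wronskian argument: the matrix of $m$-derivatives of $(\omega^{(0)},\mu_n,\mu_{n'})$ is, after factoring, a Vandermonde matrix whose determinant is bounded below by products of $\mu_n^2-\mu_{n'}^2$, which in turn are controlled by the Diophantine conditions on $\alpha$ and $\theta_0$ (this is the only place the cosine structure is used); combined with the Benettin--Galgani--Giorgilli lemma and a quantitative sub-level set estimate this gives the bound ${\rm meas}([2,3]\setminus\mathcal M)\leq(\varepsilon+\delta)^{1/30b^2}$ and, crucially, the clustering/sublinear bound (at most $b$ near-resonances per box) that drives the intermediate-scale large deviation estimates for \emph{all} $\omega$ in the $O(\delta)$ neighborhood of $\omega^{(0)}$. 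Without this ingredient your base and intermediate scales have no foundation, and the later Cartan-estimate step (whose hypothesis (iii) needs a preliminary measure bound inherited from the previous scale) cannot be initialized.

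Two smaller points. First, your claim that base-scale decay in $n$ follows from the cited arithmetic Anderson localization for $\varepsilon\Delta+\cos(n\cdot\alpha+\theta_0)$ misplaces where that input is used: in the paper the small scales need only Neumann series, and the quasi-periodic LDT on $\Z^d$ (for $R_Q(\cos(\theta+n\cdot\alpha)+m-E+\varepsilon\Delta)R_Q$) enters at the \emph{large} scales to control bad boxes in the $n$-direction together with a second Melnikov condition on $\omega$ imposed through a further semi-algebraic restriction $\widetilde\Omega_N$. Second, the mechanism converting excluded frequency sets into the amplitude set $\mathcal A$ should be made explicit: it is the modulation $\omega^{(1)}_l=\sqrt{(\omega^{(0)}_l)^2+C^{p/2}_{p+1}2^{-p}a_l^p\delta}$ with $\det(\partial\omega/\partial a)\sim\delta^b$, which is why all removed $\omega$-sets must have measure $\ll\delta^b$ (this is also where $\varepsilon\lesssim\delta$ is used); your proposal gestures at this but does not tie the measure bookkeeping to it.
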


	%\begin{rem} Clearly it suffices to take $K=10^4db^4$, since ${\rm DC}_\nu(N)\supset {\rm DC}_\nu(N')$, if $N<N'$.
   % \begin{itemize}
%\item[]
%\item[]
%\item[$\bullet$] 
%\end{itemize}

%\end{rem}

\subsection {The nonlinear random Schr\"odinger equation}
When the potential $V$ with $V(n)=\cos(n\cdot\alpha+\theta_0)$ is replaced
by a random potential, for example,  with $V(n)$, a family of
independently identically distributed (iid) random variables,  it is known
from {\cite{BW08} 
that the nonlinear random Schr\"odinger equation:
\begin{align*}
iu_{t}+\varepsilon\Delta +Vu+\delta |u|^{2p}u=0, \, \text{on } \mathbb Z^d,
\end{align*}
has large sets of Anderson localized states for small $\varepsilon$ and $\delta$. 
When $d=1$, it is shown further in \cite{LW24} that large sets of Anderson localized states persist for all $\varepsilon\neq 0$ 
(similar results could be proven for the 
nonlinear wave equation).  

Recall that  the proof of  Anderson localization for the linear random Schr\"odinger equation,
\begin{align*}
iu_{t}+\varepsilon\Delta +Vu=0, \, \text{on } \mathbb Z^d,
\end{align*}
for small $\varepsilon$ is  based on Green's function estimates  in \cite{FS83}, see also \cite{AM93}. %In $d=1$ this was proven 
%for all $\varepsilon\neq 0$ {\cite{GMP77, KS80}}. 

It is well known that the linear random Schr\"odinger equation and the linear 
quasi-periodic Schr\"odinger equation share common localization features in the perturbative regime.  Theorem~\ref{mthm}
is a generalization of this to a nonlinear setting.

\subsection {Ideas of the proof and new ingredients}
Our proof  is based on  the  Craig-Wayne-Bourgain (CWB) method (cf. e.g., \cite{CW93,Bou94,Bou98,Bou05}) originated from the PDE setting. {\it  However, in the present we have to  overcome the key difficulty that the normal frequencies of linearized operators can be  dense in an interval.}

Substituting \eqref{Ant} into \eqref{NLW} leads to a nonlinear matrix equation indexed by $k$ and $n$. To order $ O(1)$, it is a diagonal 
matrix: 
\begin{equation}\label{D}
D(k, n)=\mu_n^2-(k\cdot\omega)^2=(\mu_n+k\cdot\omega)(\mu_n-k\cdot\omega).
\end{equation}
One may approximate $\omega_\ell$, 
by $\mu_{n^{(\ell)}}$, $\ell=1, 2, ..., b$. For the persistence problem, it suffices
to bound $D$ away from zero. At this level of approximation, this leads to bounding from below
the linear combinations of $\mu_n$, $n\in\mathbb Z^d$, in the monomials  in \eqref {D}. 
These linear combinations are part of the {\it harmonics} of the linear frequencies.
%certain finite linear combinations of $\mu_n$, $n\in\mathbb Z^d$, which we call 
%the {\it harmonics}, $D_\pm$, i.e., we call the monomials in \eqref {D} the
%harmonics. 

Bounding the harmonics from below is one of the main novelties here, and is more difficult than in the iid random case
in \cite{BW08}, and also very different from that in \cite{LW24}. There are only few parameters, namely $d+2$ parameters: $\alpha$, $\theta_0$ and $m$. The $\mu_n$'s are {\it functions} of 
these $d+2$ variables, albeit analytic. The main new difficulty here is that the harmonics $\mu_n\pm k\cdot\omega^{(0)}$ form a {\it large} family of analytic functions, which need to 
be bounded in a uniform or quantitative way.

It is useful to view the above problem as bounding from below the
inner product of two vectors $k=(k_1, k_2, ..., k_\beta)$ and $\omega=(\omega_1, \omega_2, ..., \omega_\beta)$
in $\mathbb R^\beta$, where $k\neq 0$ and $\omega=\omega(w)$ depends analytically on some parameter $w$. The idea
is then to use a generalized $\beta\times\beta$ Wronskian matrix consisting of $\beta$ derivatives of $\omega$, say the first $\beta$ derivatives,
and to show that it is invertible. If this is so, then the projection of $k$ on some $\ell$-th derivative of $\omega$, $1\leq\ell\leq\beta$ {\it must be} non-zero,
as the $\beta$ derivatives of $\omega$ span $\mathbb R^\beta$. Variation in $w$ then leads to the desired result, cf. \cite{BGG85}.  

In our case, it turns out that this Wronskian is a Vandermonde matrix and that consequently
the determinant can be bounded below by products of $\mu_n-\mu_{n'}$, $n\neq n'$. 
Providing this lower bound is the only place where we used that the potential is a cosine function.
The rest of the paper is free of this restriction. 

One may view the matrix $D$ as the initial linearized operator. To continue the analysis, one will need to study more general linearized operators
and prove their invertibility. This falls into the general category of proving large deviation theorems (LDT) for linear quasi-periodic operators. 
There are two possible approaches here. 
\vskip 0.2 cm
\noindent {\bf (i) The Diophantine frequency approach}

This consists in first showing Diophantine $\omega^{(0)}$ by using the Wronskian (Vandermonde)
determinant approach outlined above, and subsequently Diophantine $\omega$ by a simple perturbation argument, see Lemma~5.3  \cite{Wan21}.
The linear problem is then in line with using, by now, rather standard method to prove LDT for quasi-periodic operators, see e.g.,  \cite{BGS02, Bou05, HSSY,Shi22,Liu22}. 
%\smallskip

\noindent {\bf (ii) The harmonic clusters approach}

This consists in showing that the spectrum of $D$ forms clusters of finite size at the initial stage, and as a result
the spectrum of $D$ has many gaps. The finite size clusters and the gaps can then be used to control the resonances.
The analysis at later stages is similar to (same as) (i).  Compared to (i), the measure estimates are more optimal and there is precise description
of the singular sets at the initial stages. %Note, however, that approach (i) does not use clustering property, which could be a simplifying feature.

For the present paper, we adopt approach (ii) for the linear analysis since we are dealing with small tangential frequencies of order ${O}(\delta)$ (rather than  order  ${O}(1)$  as in   \cite{Bou05, BW08}).  This leads to  the  {\it intermediate scales} analysis of LDT. Moreover, even at large scales  LDT, we  provide a novel  approach to establish the  {\it sublinear bound}  as compared to  \cite{Bou05} (cf. chapters 19 and 20): We employ certain  {\it  non-resonant  conditions}  by  combining   the  LDT  of quasi-periodic  Schr\"odinger operators  and  the {\it short-range property}  of the lattice equation to eliminate resonances.  
%The differences of the two approaches are, however, 
%minor and technical. They are, in fact, largely interchangeable.  
The nonlinear analysis remains, however, the same for both. For the most
updated, detailed exposition of the nonlinear analysis, see \cite{KLW24}.

Finally, we mention that the method presented has  been significantly extended to handle quasi-periodic nonlinear Schr\"odinger
equation (QPNLS) in arbitrary space  dimensions as well (cf. \cite{SW24}). Previously, Yuan \cite{Yua02} first treated  the small tangential frequencies and obtained the existence of quasi-periodic  solutions for certain nonlinear lattice equations in one dimension using  a KAM scheme  (cf. \cite{GYZ14} for  one-dimensional QPNLS case  using  the  similar KAM method).   
%using a KAM method (see also \cite{Yua02}),  \cite{GYZ14} has proven the existence of  Anderson localized solutions for the QPNLS in one dimension.
{\it  The existence of Anderson localized solutions for the quasi-periodic nonlinear wave equation seems, however, to have 
remained completely open until the present paper.  }

\subsection{Structure of the paper} In sect.~2, we briefly present the general scheme. Sect.~3 makes linear estimates, in the form of large deviation theorems. The paper concludes
by constructing the quasi-periodic solutions (i.e., Anderson localized states) in sect.~4. 

\section{The general scheme}

We work directly with the second order nonlinear equation \eqref{NLW}.   Let
\begin{align*}
V={\rm diag}(\mu_n)_{n\in\Z^d}\ {\rm with}\ \mu_n=\sqrt{\cos(n\cdot\alpha+\theta_0)+m}.
\end{align*}
%Assume that 	
%\begin{align*}
	%u(t,n)=\sum_{k\in\Z^b}q (k,n) \cos(k\cdot\omega t)%e^{ik\cdot\omega t}%=u^{(0)}+O(\sqrt{\varepsilon+\delta})
	%\end{align*}
%is a solution of \eqref{NLW}. 
Using the Ansatz, we obtain the time independent lattice equation 
\begin{align*}
(\mu_n^2-(k\cdot\omega)^2)q(k,n)+\varepsilon\Delta q(k,n)+\delta q_*^{p+1}(k,n)=0,
\end{align*}
where 
\begin{align}\label{involu}
q_*^{p+1}(k,n)=\sum_{k^{(1)}+\cdots+ k^{(p+1)}=k}\prod_{l=1}^{p+1}q(k^{(l)}, n),
\end{align} and the Laplacian $\Delta$ acts only on the $n$-variables. 
If we let $q=\{q(k,n)\}_{k\in\Z^b,n\in\Z^d}$, we can write the above equation as:
\begin{align}\label{nonlatt}
F(q)=Dq+\varepsilon \Delta q+\delta q_*^{p+1}=0,
\end{align}
with 
\begin{align*}
D={\rm diag}_{(k,n)\in\Z^{b+d}}(\mu_n^2-(k\cdot\omega)^2),
\end{align*}
and $q_*^{p+1}(k,n)$ given by \eqref{involu}. 
The linearized operator of $q_*^{p+1}$ at  $\tilde q$  is then
%\begin{align*}
%{T}_{q'}
%&:=\left(
%	\begin{array}{cc}
%\mathcal{T}_{q}'&\mathcal{T}_{q} '\\
%\mathcal{T}_{q}'&\mathcal{T}_{q}'
%\end{array}\right),
%\end{align*}
%where  %the operator $(x_0+y_0)_*^p$ is defined via 
%\begin{align*}
%\left(\mathcal{T}_{q}'x\right)(k,n)=(p+1)2^{-p-1}\sum_{k'\in\Z^b}(x+y)_*^p(k-k',n)x(k',n),
%\end{align*}
%or equivalently, 
\begin{align}\label{toep}
{T}_{\tilde q}((k,n); (k',n'))=(p+1){\tilde q}_*^p(k-k',n)\delta_{n,n'}.
\end{align}
If $\tilde q(k,n)=\tilde q(-k,n)\in\R$ for all $(k,n)\in\Z^{b+d}$, then $T_{\tilde q}$ is a {\it real} self-adjoint operator on $\ell^2(\Z^{b+d})$, which is one of the reasons why we work with the cosine series. 

\subsection{The Lyapunov-Schmidt decomposition}
We look for quasi-periodic  solutions to \eqref{NLW} near $u^{(0)}$, using the Lyapunov-Schmidt decomposition.  We divide the equation \eqref{nonlatt} into the $P$-equations 
\begin{align}\label{Peq}
{F}\big |_{\mathcal{S}^c}\left(q\right)=0,\ \mathcal{S}^c=\Z^b\times\Z^d\setminus \mathcal{S}, 
\end{align}
and the $Q$-equations 
\begin{align}\label{Qeq}
{F}\big |_{\mathcal{S}}(q)=0,
\end{align}
{where $\mathcal S$ is as defined in Theorem \ref{mthm}}. 
We use \eqref{Peq} to solve for $q|_{\mathcal{S}^c}$ using a Newton scheme. On $\mathcal S$, $q$ is held fixed, {$q\big |_{\mathcal S}=q^{(0)}\big |_{\mathcal S}$ ($q^{(0)}$ is given by \eqref{q0defn})},  and the $Q$-equations
are viewed instead as equations for $\omega$, which  will be solved using the implicit function theorem.
%the domain of the $Q$-equations, 
%Then we solve the finitely dimensional $Q$-equation  \eqref{Qeq} for $q\big |_{\mathcal{S}}$ via the implicit function theory.  
Solving the $P$-equations requires analyzing the invertibility of the linearized operator $H=D+\varepsilon \Delta+\delta T_{ q}$ restricted to $\mathcal{S}^c$, which we do
in the next section.

%For simplicity, in the following we denote 
%\begin{align*}
% {{F}}=\mathcal{F}_\omega\big |_{\mathcal{S}^c}%\  \widetilde q=q\big |_{\mathcal{S}^c}
%\end{align*}
%and also for any  linear operator $\mathcal{H}$ on $\ell^2(\Z^b\times\Z^d; \C^2)$
%\begin{align*}
%{{H}}=R_{\mathcal{S}^c}\mathcal{H}R_{\mathcal{S}^c},
%\end{align*}
%where $R_{\mathcal{S}^c}$ denotes the restriction operator.

\section{Linear analysis}\label{ldtsec}
	In this section we investigate  properties of linearized operators. Of particular importance is the large deviation theorem (LDT) for the corresponding Green's functions.  
	
	For a vector $v\in\R^r$, we denote $|v|$ (resp. $|v|_2$) the supremum norm (resp. the Euclidean norm). 
	Denote by $\Lambda_L=\{(k,n)\in\Z^b\times\Z^d:\ |(k,n)|\leq L\}$ the cube of radius $L$ and center $(0,0)$.  
	
	In the following we study on 
	$\Z_*^{b+d}=\Z^{b+d}\setminus \mathcal{S}$
	the operator 
	\begin{align}\label{hsgmdefn}
	{H}(\sigma)={D}(\sigma)+\varepsilon \Delta+\delta {T}_\phi,\ \sigma\in\R, 
	\end{align}
	where 
	\begin{align}
%\mathcal{L}&=\mathcal{D}+\varepsilon \mathcal{T}\\
\label{dsgmdefn}{D}(\sigma)&=\mu_n^2-(\sigma+k\cdot\omega)^2,\\
\nonumber \mu_n&=\sqrt{\cos(n\cdot\alpha+\theta_0)+m}, \\
%{\rm diag}_{(k,n)\in \Z_*^{b+d}}\left(
%	\begin{array}{cc}
%(-(\sigma+k\cdot\omega)+\mu_n)\mu_n& 0\\
%0&(\sigma+k\cdot\omega+\mu_n)\mu_n
%\end{array}\right),\\
\nonumber {T}_\phi((k,n);(k',n'))&=\phi(k-k',n)\delta_{n,n'},
\end{align}
	with  $\phi:\ \Z^{b+d}_*\to\R$ satisfying $\phi(k,n)=\phi(-k,n)$ and for some $C>0, \gamma>0,$
	\begin{align}\label{gamdefn}
	|\phi(k,n)|\leq  C e^{-\gamma(|k|+|n|)}\ {\rm for}\ \forall\  (k,n)\in\Z_*^{b+d} .
	\end{align}
	   {Throughout this section, we always assume the frequency  $\omega\in\Omega$ with  \begin{align}\label{omgdefn}\Omega=\omega^{(0)}+[-C\delta, C\delta]^b\end{align}}  for some $C>0.$  
	  { \begin{rem}As we will see later in  section 4,  $T_\phi$ is typically the  linearized operator of the nonlinear $P$-equation  on some  approximate solution $\tilde q$.  We can iteratively  establish   $|\tilde q(k,n)|\leq e^{-c|k|-c|n|}$ for some $c>0$. Then we have for  any $0<c_1<c$ and   some $C=C(p,b)>1$, 
	   \begin{align*}
	   |{T}_{\tilde q}((k,n); (k',n'))|&=(p+1)|{\tilde q}_*^p(k-k',n)\delta_{n,n'}|\ ({\rm cf}. \ \eqref{involu})\\
	   &\leq C (|k-k'|+1)^{C}e^{-c|k-k'|}e^{-c|n|}\delta_{n,n'}  \\
	   &\leq C_1 e^{-c_1(|k-k'|+|n|)}\delta_{n,n'}\ {\rm for\  some}\ C_1=C_1(p,b,c_1)>0, 
		\end{align*}
	where the second inequality  follows from the  estimates  in  \cite{LW24, LSZ25} (cf. page 35 of \cite{LSZ25},  the estimate on  the  $A$ term).  
Definitely, the linearized operator is varying in a Newton scheme. This issue, however, can be
remedied by using that the Green’s function estimates in LDT  (at scale $N$) are
stable under perturbations of order $e^{-N^2}$. The sup-exponentially decaying error
 in the Newton scheme can ensure this. For more details, see section V, F. Step 3 and the proof of Proposition 3.2 in \cite{KLW24}. This is why we work  with $T_{\phi}$ in the above form. 
\end{rem}} 
		
	We introduce the elementary regions on $\Z_*^{b+d}$. 
	Denote by $\Lambda_N(x)$ ($ x\in\Z^{r},\ r=b+d,d$), $$\Lambda_N(x)=\{y\in\Z^{r}:\ | y-x|\leq N\},$$ the cube of radius $N$  centered at  $x$.	 In particular, we write  $\Lambda_N=\Lambda_N(0)$.  %For $\Lambda\subset\Z^{r}$, we define its  diameter  to be  ${\rm diam}\  \Lambda=\sup\limits_{x, y\in\Lambda}|x-y|$. 
%For $x\in\Z^{r}$ and $w\in\R_+^{r}$, let $R_{ w}(x)=\{y\in\Z^{r}:\ |y_\ell-x_\ell|\leq w_l , \, 1\leq\ell\leq r\}$ denote the rectangle. 
%	A  \textit{generalized elementary region} is defined to be a set $\Lambda$ of the form 
%$$\Lambda=R_{w}(x)\setminus(R_{ w}(x)+z),$$
%where  $ z\in\Z^{r}$ is arbitrary.  The size of  a generalized elementary region  is simply its diameter. The set of all {\it generalized elementary regions}  of size at most $M$  will be denoted by $\mathcal{E}_M$. %Elements of $\mathcal{ER}(M)$ are also referred to as $M$-regions.  
%Define 
%$$\mathcal{E}_{M, \bm 0}=\left\{\Lambda\in\mathcal{E}_M:\   \Lambda=R_{\bm w}(\bm 0)\setminus(R_{\bm w}(\bm 0)+\bm z), \text{ diam }\Lambda=M,\ \bm z\in\Z^r\right\}.$$	
An {\it elementary region}  $Q_N$ of size $N$ and centered at $ 0$ is one of the following regions 
$$Q_N=\Lambda_N\ {\rm or}\ Q_N=\Lambda_N\setminus\{x\in\Z^r:\ x_\ell \square_\ell 0,\ 1\leq\ell\leq r\},$$
 where  $\square_\ell \in \{<, >, \emptyset\}$ and at least two $\square_\ell$'s are not $\emptyset.$ Denote by $\mathcal{ER}_{0}(N)$ the set of all elementary regions of size $N$ and centered at $0$. Let $$\mathcal {ER}(N)=\{x+Q_N:\ x\in\Z^r,\ Q_N\in \mathcal{ER}_{0}(N)\}.$$
%Let   $\mathcal E_{M}^{L}$  denote  the  set of all $M$-size generalized elementary regions of  width at least $L\leq M$, namely,  $\Lambda\in \mathcal {E}_{M}^{L}$ iff $\Lambda\in \mathcal {E}_{M}$ and,   for any  $ x\in\Lambda, 0<L'<L$, there is some $\Lambda'\in\mathcal{ER}(L')$ so that $x\in \Lambda'\subset\Lambda$, ${\rm dist}(x, \Lambda\setminus\Lambda')\geq L'/2$,  where  ${\rm dist} (\cdot, \cdot)$ is induced by the supremum norm.  So we have  $\mathcal {ER}(N) \subset \mathcal E_{2N}^{N}$. 

	%These sets  were first introduced in \cite{BGS02} in the $2$-dimensional case, which can be easily extended  to higher dimensions.  For $i\in\Z^{b+d}$ and $w\in\R_+^{b+d}$, let $R_w(i)=\{j\in\Z_*^{b+d}:\  |j_l-i_l|\leq w_l, 1\leq l\leq b+d \}$ denote the rectangle. An \textit{elementary region} is defined to be a set $\Lambda$ of the form 
%$$\Lambda=R_w(i)\setminus(R_w(i)+z),$$
%where  $z\in\Z^{b+d}$ is arbitrary.  The size of  $\Lambda$, denoted by $\text{diam }\Lambda$, is simply its diameter. The set of all elementary regions of size $M$  will be denoted by $\mathcal{ER}(M)$. Elements of $\mathcal{ER}(M)$ are also referred to as $M$-regions.  Define 
%$$\mathcal{ER}_0(M)=\left\{\Lambda\in\mathcal{ER}(M):\  \Lambda=R_w(0)\setminus(R_w(0)+z), \text{ diam }\Lambda=M\right\}.$$
	%$$\mathcal{ER}_0(M)=\left\{\Lambda\in\mathcal{ER}(M):\  \Lambda\subset [-2M,2M]^b\times [-10M, 10M]^d\right\}.$$
We also  define Green's functions  on subsets of $\Z_*^{b+d}$. Let $\Lambda\subset\Z_*^{b+d}$ be a non-empty set. Define (if exists)
 $$G_\Lambda(\sigma)=(R_\Lambda {H}(\sigma)R_\Lambda)^{-1},$$
	where $R_\Lambda$ denotes the restriction operator. 
	
Before addressing the LDT for Green's functions, let us first give the definition of large deviation estimates (LDE): 
\begin{defn}
Let $0<\rho_1<\rho_2<\rho_3<1$,  $0<\gamma'<\gamma$ and $M\in\N$.  We say ${H}(\sigma)$ satisfies the $(\rho_1,\rho_2,\rho_3, \gamma',M)$-LDE  if 
there exists a set $\Sigma_M\subset \R$ with $${\rm meas}(\Sigma_M)\leq e^{-M^{\rho_1}}$$ so that for $\sigma\notin \Sigma_M$ the following estimates hold: If $\Lambda\in(0,n)+ \mathcal{ER}_0(M)$ satisfies $|n|\leq 2M$, then
\begin{align*}
\|G_{\Lambda}(\sigma)\|&\leq e^{M^{\rho_2}},\\
|G_{\Lambda}(\sigma)(j;j')|&\leq e^{-\gamma'|j-j'|}\ {\rm for}\ |j-j'|\geq M^{\rho_3},
\end{align*}
{where $j=(k',n'),  j'=(k'',n'')\in {\Lambda}$}. % and $|G_{\Lambda}(\sigma)(j;j')|$ denotes the operator norm of the $2\times2$ real matrix $G_{\Lambda}(\sigma)(j;j')$. 
\end{defn}

We will show in this section that the $(\rho_1,\rho_2,\rho_3, \gamma',N)$-LDE hold  for all $N\gg1$,
under certain non-resonant conditions on $\alpha, \theta_0, \omega$.  This leads to the LDT, Theorem~\ref{ldtthm} below. Since we treat $\omega$ as the $O(\delta)$ perturbation of $\omega^{(0)}$, the properties of $\omega^{(0)}$ become essential in the proof of LDT.  So, in the following we first establish {non-resonant properties (cf. \eqref{spb}--\eqref{sublthm}) of $\omega^{(0)}$}. Then the proof of LDT will follow from a { multi-scale analysis} scheme in the spirit of \cite{BGS02}.

	%The following three subsections are devoted to prove Theorem \ref{ldtthm}.

	\subsection{Analysis of $\omega^{(0)}$}\label{NRomega0}
	%At the small scales (i.e., $0<N\leq \log^C|\varepsilon+\delta|$) analysis of LDE, it suffices to verify certain non-resonant conditions of  $\omega^{(0)}$. More precisely,  
	%This subsection concerns the small scales analysis of Green's functions.  We first introduce the notation. 
	%For $n\in \Z^d$, we also write 
	%\begin{align*}
	%\mu_n=\sqrt{\cos(n\cdot\alpha+\theta)+m}\in[1,  2].
	%\end{align*}	
	Recall that 
	\begin{align*}
	\omega^{(0)}&=\omega^{(0)}(\alpha,\theta_0, m)=\left(\sqrt{\cos(\theta_0+n^{(l)}\cdot\alpha)+m}\right)_{l=1}^b,\\
	\mu_n&=\mu_n(\alpha,\theta_0, m)=\sqrt{\cos(\theta_0+n\cdot\alpha)+m},\ n\in\Z^d.
	\end{align*}
The main purpose of this section is to obtain  lower bounds on 
	\begin{align*}
	&|k\cdot\omega^{(0)}|\ {\rm for}\ k\in\Z^b\setminus\{0\},\\
	&|k\cdot\omega^{(0)}+\mu_n| \ {\rm for}\ (k,n)\in\Z_*^{b+d},\\
	&|k\cdot\omega^{(0)}+\mu_n-\mu_{n'}| \ {\rm for} \ n\neq n'\in\Z^d,
	\end{align*}
	assuming certain restrictions on $(\alpha, \theta_0, m).$  The general scheme is that  we first remove some $(\alpha,\theta_0)$ so that the transversality conditions on $m$ are fulfilled. Subsequently we remove $m$ so that the desired  lower bounds hold. %We want to remark that all these considerations are restricted for $(k,n)$ being  in a bounded region of which the size depends on the perturbations $\varepsilon, \delta$. 

	We start by imposing Diophantine conditions on  $\alpha,\theta_0$.  
	%Let $L>0$ and define  $$\Lambda_{L}((k,n))=\{(k',n')\in\Z^{b+d}:\ |(k'-k,n'-n)|\leq L\}$$  
	%to be the cube of size $L$ and center $(k,n)$. In particular, write $\Lambda_L=\Lambda_L((0,0)).$ We have 
	\begin{lem}\label{seplem}
	Let  $L>0$ and $c_\star\in (0,1). $ Assume that 
	\begin{align*}
	\alpha\in {\rm DC}(L;c_\star),\ \theta_0\in {\rm DC}_\alpha(L;c_\star)
	\end{align*}
	with 
	\begin{align}
	\label{dca} {\rm DC}(L;c_\star)&=\bigcap_{n:\ 0<|n|\leq 2L }\left\{\alpha\in[0,1]^d:\ \|\frac{n}{2}\cdot\alpha\|_{\T}\geq c_\star\right\},\\
	\label{dct} {\rm DC}_\alpha(L;c_\star)&=\bigcap_{n:\ |n|\leq 2L }\left\{\theta\in[0,1]:\ \|\theta+\frac{n}{2}\cdot\alpha\|_{\T}\geq c_\star\right\}.
	\end{align}	
Then we have 
\begin{align}
\label{sepeq2}&\inf_{n\neq n',\ |(n,n')|\leq L} |\mu_n^2-\mu_{n'}^2|\geq \frac{8}{\pi^2} c_\star^2,\\
\label{sepeq1}&\inf_{n\neq n',\ |(n,n')|\leq L} |\mu_n-\mu_{n'}|\geq \frac{2}{\pi^2} c_\star^2.
\end{align}
	\end{lem}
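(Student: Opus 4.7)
The plan is to expand $\mu_n^2-\mu_{n'}^2$ via the sum-to-product formula and then minorize the two resulting sines using the two Diophantine hypotheses. First I would write
\[
\mu_n^2-\mu_{n'}^2 \;=\; \cos(n\cdot\alpha+\theta_0)-\cos(n'\cdot\alpha+\theta_0) \;=\; -2\sin\!\Bigl(\theta_0+\tfrac{(n+n')\cdot\alpha}{2}\Bigr)\sin\!\Bigl(\tfrac{(n-n')\cdot\alpha}{2}\Bigr).
\]
Since $|(n,n')|\le L$ and $n\ne n'$, the indices $n+n'$ and $n-n'$ satisfy $|n\pm n'|\le 2L$ and $|n-n'|>0$, so both fall in the range covered by \eqref{dca} and \eqref{dct}.

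Next, I would apply the Diophantine hypotheses: $\alpha\in {\rm DC}(L;c_\star)$ with index $n-n'$ yields $\|\tfrac{(n-n')\cdot\alpha}{2}\|_\T\ge c_\star$, and $\theta_0\in {\rm DC}_\alpha(L;c_\star)$ with index $n+n'$ yields $\|\theta_0+\tfrac{(n+n')\cdot\alpha}{2}\|_\T\ge c_\star$. Using the elementary inequality $|\sin x|\ge \tfrac{2}{\pi}\|x\|_\T$ (valid once the argument is placed away from every zero of $\sin$, which is precisely the role of the factor $\tfrac12$ in the definitions \eqref{dca}--\eqref{dct}), each sine factor is at least $\tfrac{2}{\pi}c_\star$. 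Multiplying and using the prefactor $2$ from the identity gives
\[
|\mu_n^2-\mu_{n'}^2|\ \ge\ 2\Bigl(\tfrac{2}{\pi}c_\star\Bigr)^2\ =\ \tfrac{8}{\pi^2}c_\star^2,
\]
which is \eqref{sepeq2}.

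Finally, for \eqref{sepeq1} I would factor $\mu_n^2-\mu_{n'}^2=(\mu_n-\mu_{n'})(\mu_n+\mu_{n'})$ and use the crude upper bound $\mu_n+\mu_{n'}\le 2\sqrt{m+1}\le 4$, valid since $m\in[2,3]$ and $|\cos|\le 1$. This yields
\[
|\mu_n-\mu_{n'}|\ \ge\ \frac{|\mu_n^2-\mu_{n'}^2|}{\mu_n+\mu_{n'}}\ \ge\ \tfrac{1}{4}\cdot\tfrac{8}{\pi^2}c_\star^2\ =\ \tfrac{2}{\pi^2}c_\star^2.
\]

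The only real subtlety is the sine minoration in the second step: $\|\cdot\|_\T$ measures distance to $2\pi\Z$, whereas $\sin$ vanishes on the larger set $\pi\Z$. The half-integer factor $\tfrac{n}{2}$ appearing throughout \eqref{dca}--\eqref{dct} is exactly what forces the arguments of both sines to stay away from every zero of $\sin$ (not merely from $2\pi\Z$), and this is where the clean constants $\tfrac{8}{\pi^2}$ and $\tfrac{2}{\pi^2}$ come from. Once this minoration is in hand, the rest is arithmetic.
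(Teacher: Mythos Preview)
Your proof is correct and follows essentially the same approach as the paper's: the sum-to-product identity for $\mu_n^2-\mu_{n'}^2$, the sine minoration $|\sin x|\ge \tfrac{2}{\pi}\|x\|_\T$ applied to each factor via the two Diophantine hypotheses, and the bound $\mu_n+\mu_{n'}\le 4$ to pass from \eqref{sepeq2} to \eqref{sepeq1}. You have in fact written out more detail than the paper, including the discussion of the $\pi\Z$ versus $2\pi\Z$ subtlety in the sine lower bound, which the paper leaves implicit.
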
 
	\begin{proof}
It suffices to note that 
\begin{align*}
|\mu_n^2-\mu_{n'}^2|&=|2\sin\frac{(n-n')\cdot\alpha}{2}\cdot\sin\frac{2\theta_0+(n+n')\cdot\alpha}{2}|\\
&\geq \frac{8}{\pi^2}\|\frac{(n-n')\cdot\alpha}{2}\|_{\T}\cdot\|\frac{2\theta_0+(n+n')\cdot\alpha}{2}\|_{\T}
\end{align*}
and 
\begin{align*}
|\mu_n^2-\mu_{n'}^2|&\leq4 |\mu_n-\mu_{n'}|.
%&\leq 2\sqrt{2d+3} |\mu_n-\mu_{n'}|.
\end{align*}
\end{proof}
Next, we verify the transversality conditions.  %assuming $(\alpha,\theta)\in \mathcal{A}_{L,c_\star}\times \Theta_{L,c_\star}$. 
 For convenience, define for any $\beta\in\N$ and any $n^{(l)}\in\Z^d$ ($1\leq l\leq \beta$), 
$$\tilde\omega=\left(\sqrt{\cos(n^{(l)}\cdot\alpha+\theta_0)+m}\right)_{l=1}^\beta\in[1, 2]^\beta.$$
%Let 
%$$\tilde\omega=\left(\sqrt{\cos(n^{(l)}\cdot\alpha+\theta_0)+m}\right)_{l=1}^\beta\in[1, 2]^\beta,\ n^{(l)}\in\Z^d\ {\rm for}\ 1\leq l\leq \beta.$$
We have
\begin{prop}\label{tvprop}
 Let $L>\sup\limits_{1\leq l\leq \beta}|n^{(l)}|$ and $ c_\star\in (0,1)$. Assume $\alpha\in{\rm DC}{(L;c_\star)}$ and $\theta_0\in {\rm DC}_\alpha{(L;c_\star)}$. Then there is $\tilde c=\tilde c(\beta)>0$ depending only on $\beta$ so that 
%\begin{itemize}
%\item[(1).]  
for $k\in\Z^\beta\setminus\{0\}$, 
\begin{align}\label{tdc}
\inf_{m\in[2, 3]}\sup_{1\leq l\leq \beta}\left|\frac {\partial^lk\cdot\tilde\omega}{\partial m^l}(m)\right| \geq \tilde c c_\star^{\beta(\beta-1)}|k|_2.
\end{align}
%\item[(2).] 

%\end{itemize}
\end{prop}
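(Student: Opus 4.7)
The plan is to exploit the explicit $m$-dependence $\tilde\omega_j = \sqrt{m+c_j}$ with $c_j := \cos(n^{(j)}\cdot\alpha+\theta_0)$, differentiate termwise, and recognize the resulting Wronskian as a Vandermonde matrix (up to a diagonal scaling), exactly as the introduction anticipates. A direct computation gives
$$\frac{\partial^l \tilde\omega_j}{\partial m^l} = C_l\,\tilde\omega_j^{1-2l},\qquad C_l := \prod_{i=0}^{l-1}\!\left(\tfrac{1}{2}-i\right)\neq 0,$$
so that the vector of derivatives satisfies
$$\vec u(m):=\left(\frac{\partial^l (k\cdot\tilde\omega)}{\partial m^l}(m)\right)_{l=1}^\beta = \mathrm{diag}(C_1,\ldots,C_\beta)\, M\, k,$$
where $M$ is the $\beta\times\beta$ matrix with entries $M_{l,j}:=\tilde\omega_j^{1-2l}$, $1\le l,j\le\beta$.

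Next, setting $y_j := \tilde\omega_j^{-2}$, one writes $M_{l,j} = \tilde\omega_j^{-1}\,y_j^{l-1}$, so that $M = A\,\mathrm{diag}(\tilde\omega_j^{-1})$ with $A_{l,j}=y_j^{l-1}$ the Vandermonde matrix in the nodes $y_1,\dots,y_\beta$. Consequently
$$|\det M| \;=\; \prod_{j=1}^\beta \tilde\omega_j^{-1}\cdot\prod_{1\le i<j\le\beta}|y_j - y_i|.$$
Since $m\in[2,3]$ forces $\tilde\omega_j\in[1,2]$, and the $n^{(j)}$ are pairwise distinct with $|n^{(j)}|\le L$, Lemma~\ref{seplem} yields
$$|y_j - y_i| = \frac{|\tilde\omega_i^2 - \tilde\omega_j^2|}{\tilde\omega_i^2\tilde\omega_j^2} \;\geq\; \frac{1}{16}\cdot\frac{8}{\pi^2}\,c_\star^2,$$
and hence $|\det M| \geq C(\beta)\,c_\star^{\beta(\beta-1)}$ (the exponent being $2\binom{\beta}{2}$).

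To convert the determinant bound into the sup bound \eqref{tdc}, I note that $|M_{l,j}|\le 1$ produces the crude bound $\|M\|\le\beta$, so
$$\sigma_{\min}(M)\;\geq\; \frac{|\det M|}{\|M\|^{\beta-1}}\;\geq\; C'(\beta)\,c_\star^{\beta(\beta-1)}.$$
Combined with the fact that $\min_l|C_l|$ is a positive constant depending only on $\beta$, this gives $|\vec u(m)|_2 \geq C''(\beta)\,c_\star^{\beta(\beta-1)}|k|_2$, and the elementary inequality $\sup_{1\le l\le\beta}|\vec u_l(m)| \geq \beta^{-1/2}|\vec u(m)|_2$ then furnishes the stated lower bound with $\tilde c(\beta):=\beta^{-1/2}C''(\beta)$, uniformly in $m\in[2,3]$. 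The substantive step is the Vandermonde identification together with the separation bound from Lemma~\ref{seplem}; everything else reduces to standard singular-value bookkeeping. Note that no excision of $m$'s is needed at this stage---the transversality holds pointwise---so the removal of a small bad set of parameters $m$ is postponed to the subsequent section where the transversality of \eqref{tdc} is actually converted into quantitative lower bounds on the linear combinations $|k\cdot\tilde\omega|$, $|k\cdot\tilde\omega+\mu_n|$, and $|k\cdot\tilde\omega+\mu_n-\mu_{n'}|$.
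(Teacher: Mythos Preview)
Your proof is correct and follows essentially the same route as the paper: compute the $m$-derivatives, identify the Wronskian matrix as a diagonal scaling of a Vandermonde matrix in the nodes $\tilde\omega_j^{-2}$, and invoke Lemma~\ref{seplem} to bound the Vandermonde factors (hence the determinant) below by $C(\beta)c_\star^{\beta(\beta-1)}$. The only difference is in the final conversion from the determinant bound to the sup bound: the paper quotes Lemma~\ref{BGGlem} (the Benettin--Galgani--Giorgilli inequality $|w|_2|\det(v^{(l)})|<rK^{r-1}\xi$), whereas you use the equivalent singular-value inequality $\sigma_{\min}(M)\geq |\det M|/\|M\|^{\beta-1}$; these two devices are interchangeable here and yield the same conclusion with constants depending only on~$\beta$.
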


The following lemma is useful to prove \eqref{tdc}.
\begin{lem}[Proposition in Appendix B, \cite{BGG85}]\label{BGGlem}
Let $v^{(1)},\cdots, v^{(r)}\in\R^r$ be $r$ linearly independent vectors with $|v^{(l)}|_1=\sum\limits_{1\leq j\leq r}|v_j^{(l)}|\leq K$ for $1\leq l\leq r.$ Let $w\in\R^r$ satisfy 
\begin{align*}
\sup_{1\leq l\leq r}|w\cdot v^{(l)}|<\xi\ {\rm for\  some}\ \xi>0.
\end{align*}
Then one has 
\begin{align*}
|w|_2\cdot|\det \left(v_j^{(l)}\right)_{1\leq l,j\leq r}|<rK^{r-1}\xi.
\end{align*}
%where $|w|_2^2=\sum\limits_{1\leq j\leq r}w_j^2.$
\end{lem}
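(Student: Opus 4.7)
The plan is to realize $\sup_r|\partial_m^r(k\cdot\tilde\omega)|$ as controlling $\sup_r|k\cdot v^{(r)}|$ for explicit vectors $v^{(r)}$, apply Lemma \ref{BGGlem} to turn this into a lower bound on $|k|_2$ times a determinant, and recognize that determinant as a Vandermonde in $\tilde\omega_l^{-2}$ whose factors are controlled by the separation estimate \eqref{sepeq2} of Lemma \ref{seplem}. First I would compute
\begin{equation*}
\partial_m^r\tilde\omega_l = a_r\,\tilde\omega_l^{1-2r},\qquad a_r:=\prod_{j=0}^{r-1}\Bigl(\tfrac12-j\Bigr)\neq 0,
\end{equation*}
so that $\partial_m^r(k\cdot\tilde\omega)=a_r\,k\cdot v^{(r)}$ with $v^{(r)}=(\tilde\omega_l^{1-2r})_{l=1}^\beta\in\R^\beta$. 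Since $\tilde\omega_l\in[1,2]$ and $1-2r\le -1$, each $|v^{(r)}|_1\le\beta$, so Lemma \ref{BGGlem} is applicable with $K=\beta$.

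Second, suppose for contradiction $\sup_{1\le r\le\beta}|\partial_m^r(k\cdot\tilde\omega)|<\xi$; then $\sup_r|k\cdot v^{(r)}|<\xi/\min_r|a_r|$, and Lemma \ref{BGGlem} gives
\begin{equation*}
|k|_2\cdot\bigl|\det(v_l^{(r)})_{1\le r,l\le\beta}\bigr|<\beta\cdot\beta^{\beta-1}\cdot\xi/\min_r|a_r|.
\end{equation*}
To evaluate the determinant I would pull $\tilde\omega_l^{-1}$ out of the $l$-th column, leaving the matrix with entries $(\tilde\omega_l^{-2})^{r-1}$; this is a standard Vandermonde, yielding
\begin{equation*}
|\det(v_l^{(r)})| = \prod_{l=1}^\beta\tilde\omega_l^{-1}\prod_{1\le l<l'\le\beta}\bigl|\tilde\omega_{l'}^{-2}-\tilde\omega_l^{-2}\bigr| = \prod_{l}\tilde\omega_l^{-1}\prod_{l<l'}\frac{|\tilde\omega_l^2-\tilde\omega_{l'}^2|}{\tilde\omega_l^2\tilde\omega_{l'}^2}.
\end{equation*}
Assuming the $n^{(l)}$ are distinct (otherwise two columns coincide and transversality is vacuous in the trivial sense), we have $|(n^{(l)},n^{(l')})|\le 2L$, so \eqref{sepeq2} of Lemma \ref{seplem} gives $|\tilde\omega_l^2-\tilde\omega_{l'}^2|\ge 8c_\star^2/\pi^2$ for all $l\neq l'$.

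Third, combining the trivial bounds $\prod_l\tilde\omega_l^{-1}\ge 2^{-\beta}$ and $\tilde\omega_l^2\tilde\omega_{l'}^2\le 16$ with the $\binom{\beta}{2}$-fold separation estimate, I obtain $|\det(v_l^{(r)})|\ge C_2(\beta)\,c_\star^{\beta(\beta-1)}$ for an explicit $C_2(\beta)>0$, since the number of factors is $\binom{\beta}{2}=\beta(\beta-1)/2$ and each contributes $c_\star^2$. Plugging back yields $\xi\ge\tilde c(\beta)\,c_\star^{\beta(\beta-1)}|k|_2$ with $\tilde c(\beta)=C_2(\beta)\min_r|a_r|/\beta^\beta$, uniformly in $m\in[2,3]$ since all constants are $m$-independent — this is the desired contradiction. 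The only non-routine step is spotting that the generalized Wronskian $(\partial_m^r\tilde\omega_l)$ is a scaled Vandermonde in $\tilde\omega_l^{-2}$; once that structure is extracted, Lemmas \ref{seplem} and \ref{BGGlem} do the rest mechanically, and the exponent $\beta(\beta-1)$ in $c_\star$ comes out naturally as twice the number of Vandermonde factors.
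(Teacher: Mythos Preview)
Your proposal does not prove the stated Lemma~\ref{BGGlem}. That lemma is a general linear-algebra fact (quoted from \cite{BGG85} and not proved in the paper) asserting that if $w$ has small inner product with $r$ linearly independent vectors $v^{(l)}$ of bounded $\ell^1$-norm, then $|w|_2\cdot|\det(v^{(l)}_j)|$ is correspondingly small. You never establish this; you \emph{assume} it and apply it. What you have written is a proof of Proposition~\ref{tvprop}, the transversality estimate for $k\cdot\tilde\omega$.

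Viewed as a proof of Proposition~\ref{tvprop}, your argument is correct and follows essentially the same route as the paper's: both compute $\partial_m^l\tilde\omega_s=\lambda_l\,\tilde\omega_s^{1-2l}$, recognize the resulting generalized Wronskian as a (scaled) Vandermonde in $\tilde\omega_s^{-2}$, bound its factors below via the separation estimate \eqref{sepeq2} of Lemma~\ref{seplem}, and feed the determinant lower bound into Lemma~\ref{BGGlem}. The only cosmetic difference is that you strip off the scalar factors $a_r$ before invoking Lemma~\ref{BGGlem} (so your rows satisfy $|v^{(r)}|_1\le\beta$ and $K=\beta$), whereas the paper keeps the $\lambda_l$ in the rows and factors them out of the determinant afterwards (so $K=K_1(\beta)$); the resulting constant $\tilde c(\beta)$ is the same up to bookkeeping. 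Your remark that the $n^{(l)}$ must be distinct is a point the paper leaves implicit.
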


\begin{proof} [Proof of Proposition~\ref{tvprop}]
%\begin{itemize}
%\item[(1).] 
Assume $v(m)=\tilde\omega(m)$ is the row vector.  Consider the $\beta\times \beta$ matrix  ${\bf M}(m)$ with the $l$-th row given by 
$${\bf M}_l(m)=\left(\frac{\partial^lv_s(m)}{\partial m^l}\right)_{s=1}^{\beta}.$$
We want to estimate the Wronskian $\det {\bf M}$. A key observation is that 
\begin{align*}
\frac{\partial^lv_s(m)}{\partial m^l}=\frac12(\frac12-1)\cdots(\frac12-l+1)v_s^{-(2l-1)}(m):=\lambda_lv_s^{-(2l-1)}(m).
\end{align*}
Then 
\begin{align*}
|\det {\bf M}(m)|=\prod_{s=1}^\beta|\lambda_sv_s^{-1}(m)|^\beta\cdot |\det {\bf M}'(m)|,
\end{align*}
where ${\bf M}'(m)$ is a Vandermonde matrix with the $l$-th  ($1\leq l\leq \beta$) row given by
$${\bf M}_l'(m)=\left(v_s^{-2(l-1)}(m)\right)_{s=1}^\beta.$$ 
 So we obtain 
 \begin{align}\label{detm1}
|\det {\bf M}(m)|=\prod_{s=1}^\beta|\lambda_sv_s^{-1}(m)|^\beta \prod_{1\leq l_1<l_2\leq \beta}|v_{l_1}^{-2}(m)-v_{l_2}^{-2}(m)|.
\end{align}
 Recall that $v_s(m)\in[1,2]$, $\sup_{1\leq l\leq \beta}|n^{(l)}|<L$.   Thus  by Lemma \ref{seplem},  we obtain
 \begin{align*}
%\inf_{1\leq j\leq b}|v_j^{-1}(m)|&\geq \frac 12,\\
|v_{l_1}^{-2}(m)-v_{l_2}^{-2}(m)|&=|v_{l_1}^{-2}(m)v_{l_2}^{-2}(m)|\cdot|\mu_{n^{(l_1)}}^2-\mu_{n^{(l_2)}}^2|\\
&\geq\frac{1}{2\pi^2}c_\star^2,
 \end{align*}
 which combined with \eqref{detm1} implies 
 \begin{align}\label{detm}
|\det {\bf M}(m)|\geq c'c_\star^{\beta(\beta-1)},
 \end{align}
where $c'=c'(\beta)>0$ depends only on $\beta$.

%Now, to establish \eqref{tdc}, we need a useful lemma from \cite{BGG85}.
%\begin{lem}[Proposition in Appendix B, \cite{BGG85}]\label{BGGlem}
%Let $v^{(1)},\cdots, v^{(r)}\in\R^r$ be $r$ linearly independent vectors with $|v^{(l)}|_1=\sum\limits_{1\leq j\leq r}|v_j^{(l)}|\leq K$ for $1\leq l\leq r.$ Let $w\in\R^r$ satisfy 
%\begin{align*}
%\sup_{1\leq l\leq r}|w\cdot v^{(l)}|<\xi\ {\rm for}\ \xi>0.
%\end{align*}
%Then one has 
%\begin{align*}
%|w|_2\cdot|\det \left(v_j^{(l)}\right)_{1\leq l,j\leq r}|<rK^{r-1}\xi.
%\end{align*}
%where $|w|_2^2=\sum\limits_{1\leq j\leq r}w_j^2.$
%\end{lem}
Now, we fix $k\in\Z^\beta\setminus\{0\}$ and $m\in[2,3]$. Then $v_s(m)\in[1, 2]$. Observe first that for $1\leq l\leq \beta$, 
\begin{align*}
\sup_{1\leq l\leq \beta}|{\bf M}_l(m)|_1&\leq \sup_{1\leq l\leq \beta}\sum_{s=1}^\beta|\lambda_lv_s^{-l}|\leq \sup_{1\leq l\leq \beta}\sum_{s=1}^\beta|\lambda_l|:=K_1(\beta).
\end{align*} 
Applying Lemma~\ref{BGGlem} with $r=\beta, v^{(l)}={\bf M}_l(m)$, $K=K_1(\beta), w=k$ and choosing 
\begin{align*}
\xi=\frac12 \beta^{-1} K_1(\beta)^{1-\beta}c'c_\star^{\beta(\beta-1)}|k|_2,
\end{align*}
we obtain using \eqref{detm}
\begin{align*}
\beta K_1(\beta)^{\beta-1}\xi&=\frac12 c'c_\star^{\beta(\beta-1)}|k|_2\\
&\leq \frac 12 |k|_2\cdot|\det ({\bf M}_l(m))_{1\leq l\leq \beta}|.
\end{align*}
Using Lemma~\ref{BGGlem} yields 
\begin{align*}
\sup_{1\leq l\leq \beta}\left|\frac{\partial ^lk\cdot\tilde\omega}{\partial m^l}(m)\right|&=\sup_{1\leq l\leq \beta}|k\cdot {\bf M}_l(m)|\\
&\geq \xi=\tilde c |k|_2c_\star^{\beta(\beta-1)},
\end{align*}
where $\tilde c=\tilde c(\beta)>0$ depends only on $\beta.$
 This proves \eqref{tdc}. 
 \end{proof}
 
 Next, we apply the above proposition with $\beta=b, b+1$ and $b+2$ to establish

\begin{cor}\label{tvpcor}
Let $L>\sup\limits_{1\leq l\leq b}|n^{(l)}|$ and $ c_\star\in (0,1)$. Assume $\alpha\in{\rm DC}{(L;c_\star)}$ and $\theta_0\in {\rm DC}_\alpha{(L;c_\star)}$. 
%Under the assumptions of Proposition \ref{tvprop} and assuming  $\beta=b,\tilde\omega=\omega^{(0)}$, 
 Then there is $\tilde c=\tilde c(b)>0$ depending  only on $b$ so that
\begin{itemize}
\item[(1)] %Then there is some $\tilde c=\tilde c(\beta)>0$ depends only on $\beta$ so that 
%\begin{itemize}
%\item[(1).]  
For $k\in\Z^b\setminus\{0\}$, 
\begin{align*}%\label{tdc}
\inf_{m\in[2, 3]}\sup_{1\leq l\leq b}\left|\frac {\partial^lk\cdot\omega^{(0)}}{\partial m^l}(m)\right| \geq \tilde c c_\star^{b(b-1)}|k|_2.
\end{align*}

\item[(2)]Fix $(k,n)\in \Z^{b+d}\setminus\mathcal{S}$ so that $|n|\leq L$. If $n\notin \{n^{(l)}\}_{1\leq l\leq b}$, then 
\begin{align}\label{tmk11}
\inf_{m\in[2, 3]}\sup_{1\leq l\leq b+1}\left|\frac {\partial^l(k\cdot\omega^{(0)}+\mu_n)}{\partial m^l}(m)\right| \geq \tilde c c_\star^{b(b+1)}|(k,1)|_2.
\end{align}
If $n=n^{(l')}$ for some $1\leq l'\leq b$, then 
\begin{align}\label{tmk12}
\inf_{m\in[2, 3]}\sup_{1\leq l\leq b}\left|\frac {\partial^l(k\cdot\omega^{(0)}+\mu_n)}{\partial m^l}(m)\right| \geq \tilde c c_\star^{b(b-1)}|k+e_{l'}|_2.
\end{align}

\item[(3)] 
Fix $n\neq n'\in\Z^d$ with $|(n, n')|\leq L$.  If 
$$(n,n')\notin \{n^{(l)}\}_{1\leq l\leq b}\times\{n^{(l)}\}_{1\leq l\leq b},$$
then for all $k\in\Z^b$, 
\begin{align}\label{tmk21}
\inf_{m\in[2, 3]}\sup_{1\leq l\leq b+2}\left|\frac {\partial^l(k\cdot\omega^{(0)}+\mu_n-\mu_{n'})}{\partial m^l}(m)\right| \geq \tilde c c_\star^{(b+1)(b+2)}C_{k,n,n'},
\end{align}
where  
\begin{align*}
C_{k,n,n'}&=\left\{
	\begin{array}{ll}
|(k,1,-1)|_2& {\rm if} \ n,n'\notin \{n^{(l)}\}_{1\leq l\leq b},\\
|(k+e_{l'}, -1)|_2& {\rm if} \  n=n^{(l')}, \ n'\notin \{n^{(l)}\}_{1\leq l\leq b},\ 1\leq l'\leq b, \\
 |(k-e_{l'}, 1)|_2&{\rm if}\  n\notin \{n^{(l)}\}_{1\leq l\leq b},\ n'=n^{(l')},\  1\leq l'\leq b.
\end{array}\right.
%C_{k,n,n'}&=|(k,1,-1)|_2\geq 1\  {\rm for}\ n,n'\notin \{n^{(l)}\}_{1\leq l\leq b},\\
%C_{k,n,n;}&=|(k+e_{l'}, -1)|_2 \ {\rm for}\ n=n^{(l')}, n\notin \{n^{(l)}\}_{1\leq l\leq b},\ 1\leq l'\leq b,\\
%C_{k,n,n'}&=|(k-e_{l'}, 1)|_2\ {\rm for} \ n\notin \{n^{(l)}\}_{1\leq l\leq b},\ n'=n^{(l')},\  1\leq l'\leq b.
\end{align*}
%$;  $$ if $n=n^{(l')}$ for some $1\leq l'\leq b$ and $n'\notin \{n^{(l)}\}_{1\leq l\leq b}$; $C_{k,n,n'}=$ if $n'=n^{(l')}$ for some $1\leq l'\leq b$ and $n\notin \{n^{(l)}\}_{1\leq l\leq b}$.
If $n=n^{(l')}$ and $n'=n^{(l'')}$ for some $1\leq l'\neq l''\leq b,$ then for $k\neq -e_{l'}+e_{l''}$,
\begin{align}\label{tmk22}
\inf_{m\in[2, 3]}\sup_{1\leq l\leq b}\left|\frac {\partial^l(k\cdot\omega^{(0)}+\mu_n-\mu_{n'})}{\partial m^l}(m)\right| \geq \tilde c c_\star^{b(b-1)}|k+e_{l'}-e_{l''}|_2.
\end{align}
\end{itemize}
\end{cor}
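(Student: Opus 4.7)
The plan is to derive each of (1), (2), (3) as direct applications of Proposition~\ref{tvprop}, after rewriting the linear combination $k\cdot\omega^{(0)}\pm \mu_n\pm\mu_{n'}$ as $\tilde k\cdot\tilde\omega$, where $\tilde\omega=\bigl(\sqrt{\cos(\tilde n^{(l)}\cdot\alpha+\theta_0)+m}\bigr)_{l=1}^{\tilde\beta}$ is built from a list of \emph{pairwise distinct} vectors $\tilde n^{(l)}\in\Z^d$, and $\tilde k\in\Z^{\tilde\beta}\setminus\{0\}$. Both conditions are required: distinctness of the $\tilde n^{(l)}$'s is what makes the Vandermonde lower bound in the proof of Proposition~\ref{tvprop} available (via Lemma~\ref{seplem}), and $\tilde k\neq 0$ is what the proposition needs on the integer-vector side. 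Once these two properties are verified, the exponent $\tilde\beta(\tilde\beta-1)$ produced by the proposition is bounded by the worst case quoted in the statement, since $c_\star<1$.

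Part (1) is literally Proposition~\ref{tvprop} with $\beta=b$ and $\tilde n^{(l)}=n^{(l)}$, giving $\tilde\omega=\omega^{(0)}$. For part (2), I would split on whether $n$ coincides with one of the $n^{(l')}$. If $n=n^{(l')}$, then $\mu_n=\omega^{(0)}_{l'}$ so $k\cdot\omega^{(0)}+\mu_n=(k+e_{l'})\cdot\omega^{(0)}$; the hypothesis $(k,n)\notin\mathcal{S}$ forces $k\neq -e_{l'}$, hence $k+e_{l'}\neq 0$, and applying part (1) to $k+e_{l'}$ gives \eqref{tmk12}. If $n\notin\{n^{(l)}\}_{l\leq b}$, extend the list by $n^{(b+1)}:=n$; the list stays pairwise distinct, and $k\cdot\omega^{(0)}+\mu_n=(k,1)\cdot\tilde\omega$, with $(k,1)\neq 0$ because of its last entry. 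Applying Proposition~\ref{tvprop} with $\beta=b+1$ yields exponent $(b+1)b=b(b+1)$, which is exactly \eqref{tmk11}.

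Part (3) proceeds by the same idea with four subcases. When $n,n'\notin\{n^{(l)}\}_{l\leq b}$, enlarge twice: $n^{(b+1)}=n$, $n^{(b+2)}=n'$; the list is distinct (using $n\neq n'$), and $k\cdot\omega^{(0)}+\mu_n-\mu_{n'}=(k,1,-1)\cdot\tilde\omega$, with the last two entries forcing the vector to be nonzero. Proposition~\ref{tvprop} with $\beta=b+2$ then gives the exponent $(b+1)(b+2)$, i.e.\ \eqref{tmk21}. When exactly one of $n,n'$ equals some $n^{(l')}$, absorb the corresponding $\pm\mu_{n^{(l')}}$ into $\omega^{(0)}$: for instance if $n=n^{(l')}$ and $n'\notin\{n^{(l)}\}_{l\leq b}$, the expression becomes $(k+e_{l'},-1)\cdot\tilde\omega$ with $\tilde\omega$ built from $n^{(1)},\ldots,n^{(b)},n'$; the integer vector is nonzero due to the last $-1$. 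Proposition~\ref{tvprop} with $\beta=b+1$ gives exponent $b(b+1)\leq (b+1)(b+2)$, which is \eqref{tmk21} after using $c_\star<1$. Finally, if $n=n^{(l')}$ and $n'=n^{(l'')}$ with $l'\neq l''$, then $k\cdot\omega^{(0)}+\mu_n-\mu_{n'}=(k+e_{l'}-e_{l''})\cdot\omega^{(0)}$; the assumption $k\neq -e_{l'}+e_{l''}$ ensures $k+e_{l'}-e_{l''}\neq 0$, and part (1) yields \eqref{tmk22}.

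The only step that requires any care is the bookkeeping: in each reduction I must check that the enlarged $n$-list is pairwise distinct (so Lemma~\ref{seplem} is applicable within Proposition~\ref{tvprop}), and that the enlarged integer vector is nonzero. Both follow tautologically from the case hypotheses, including the exclusions $(k,n)\notin\mathcal{S}$ and $k\neq -e_{l'}+e_{l''}$ placed in the statement. There is no new analytic content beyond Proposition~\ref{tvprop} itself.
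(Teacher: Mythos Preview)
Your proof is correct and follows essentially the same approach as the paper: in each part you rewrite the expression as $\tilde k\cdot\tilde\omega$ with pairwise distinct sites and a nonzero integer vector, then invoke Proposition~\ref{tvprop} with $\beta=b$, $b+1$, or $b+2$ as appropriate. The paper organizes the subcases of part~(3) identically, and your explicit remark that $c_\star<1$ absorbs the smaller exponent $b(b+1)$ into $(b+1)(b+2)$ is exactly what the paper leaves implicit.
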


 \begin{proof}

 \begin{itemize}
\item[(1)] It suffices to let $\beta=b$ and $\tilde\omega=\omega^{(0)}$ in Proposition~\ref{tvprop}.
\item[(2)] %To prove \eqref{}, we distinguish two cases. \\
 If   $n\not\in \{n^{(l)}\}_{1\leq l\leq b}$,  it suffices to consider derivatives estimates of  
$
\tilde k\cdot v(m),
$
where
 $$\tilde k=(k,1)\in\Z^{b+1}\setminus\{0\},\ v(m)=(\omega^{(0)}, \mu_n)\in\R^{b+1}.$$
Similar to the proof of \eqref{tdc}, we  have 
\begin{align*}
\inf_{m\in[2, 3]}\sup_{1\leq l\leq b+1}\left|\frac {\partial^l(k\cdot\omega^{(0)}+\mu_n)}{\partial m^l}(m)\right| \geq c(b) c_\star^{b(b+1)}|(k,1)|_2,
\end{align*}
which implies \eqref{tmk11}.

If   $n=n^{(l')}$ for some $1\leq l'\leq b,$  we have since  $(k,n)\in \Z^{b+d}\setminus\mathcal{S}$ that $k+e_{l'}\in\Z^b\setminus\{0\}.$ Then we get 
\begin{align*}
k\cdot\omega^{(0)}+\mu_n=(k+e_{l'})\cdot\omega^{(0)},
\end{align*}
which combined with similar arguments as above shows
\begin{align*}
\inf_{m\in[2, 3]}\sup_{1\leq l\leq b}\left|\frac {\partial^l(k\cdot\omega^{(0)}+\mu_n)}{\partial m^l}(m)\right| \geq c(b) c_\star^{b(b-1)}|k+e_{l'}|_2.
\end{align*}
This implies \eqref{tmk12}.
%Combining \eqref{} and \eqref{} implies \eqref{}. 
\item[(3)]  
We have the following cases. \\
{\bf Case 1.} Both $n\notin\{n^{(l)}\}_{1\leq l\leq b}$ and $n'\notin\{n^{(l)}\}_{1\leq l\leq b}$. In this case it suffices to estimate derivatives of $\tilde k\cdot v(m)$ with
\begin{align*}
\tilde k=(k,1,-1)\in\Z^{b+2}\setminus\{0\},\ v(m)=(\omega^{0},\mu_n,\mu_{n'})\in\R^{b+2}.
\end{align*}
So similar to the proof of \eqref{tdc},  we have 
\begin{align*}
\inf_{m\in[2, 3]}\sup_{1\leq l\leq b+2}\left|\frac {\partial^l(k\cdot\omega^{(0)}+\mu_n-\mu_{n'})}{\partial m^l}(m)\right| \geq c(b) c_\star^{(b+1)(b+2)}|(k,1,-1)|_2,
\end{align*}
which shows \eqref{tmk21}. \\
{\bf Case 2.} Assume that $n\notin\{n^{(l)}\}_{1\leq l\leq b}$ and $n'\in\{n^{(l)}\}_{1\leq l\leq b}$ (the proof of the case  $n\in\{n^{(l)}\}_{1\leq l\leq b}$ and $n'\notin\{n^{(l)}\}_{1\leq l\leq b}$ is similar).  In this case assume $n'=n^{(l')}$ for some $1\leq l'\leq b$. It suffices to estimate derivatives of $\tilde k\cdot v(m)$ with 
\begin{align*}
\tilde k=(k-e_{l'},1)\in\Z^{b+1}\setminus\{0\},\ v(m)=(\omega^{(0)},\mu_n)\in\R^{b+1}
\end{align*}
since 
$$k\cdot \omega^{(0)}+\mu_{n}-\mu_{n'}=(k-e_{l'})\cdot\omega^{(0)}+\mu_n.$$
Thus we obtain 
\begin{align*}
\inf_{m\in[2, 3]}\sup_{1\leq l\leq b+1}\left|\frac {\partial^l(k\cdot\omega^{(0)}+\mu_n-\mu_{n'})}{\partial m^l}(m)\right| \geq c(b) c_\star^{b(b+1)}|(k-e_{l'},1)|_2,
\end{align*}
which shows \eqref{tmk21}.\\
{\bf Case 3.}  Assume $n=n^{(l')}, n'=n^{(l'')}$ for some $1\leq l'\neq l''\leq b.$ In this case we aim to estimate derivatives of $\tilde k\cdot v(m)$ with
\begin{align*}
\tilde k=(k+e_{l'}-e_{l''})\in\Z^{b}\setminus\{0\},\ v(m)=\omega^{(0)}\in\R^{b}
\end{align*}
since 
$$k\cdot \omega^{(0)}+\mu_{n}-\mu_{n'}=(k+e_{l'}-e_{l''})\cdot\omega^{(0)}.$$
Note that $k+e_{l'}-e_{l''}\neq 0$. We have by applying \eqref{tdc} that 
\begin{align*}
\inf_{m\in[2, 3]}\sup_{1\leq l\leq b}\left|\frac {\partial^l(k\cdot\omega^{(0)}+\mu_n-\mu_{n'})}{\partial m^l}(m)\right| \geq c(b) c_\star^{b(b-1)}|k+e_{l'}-e_{l''}|_2,
\end{align*}
which yields \eqref{tmk22}.
\end{itemize}

\end{proof}

Based on the above results, we can make desired restrictions on $m\in[2, 3]$ so that the {non-resonant conditions (cf. \eqref{spb}--\eqref{sublthm})} hold.    Let $\mathcal{J}_1$ be the set of all $(k,n,n')\in\Z^{b}\times\Z^d\times\Z^d$ satisfying 
\begin{align*}
 |k|\leq 2L,\  n\neq n', \ |(n,n')|\leq L,\  (n,n')\notin \{n^{(l)}\}_{l=1}^b\times \{n^{(l)}\}_{l=1}^b.
%&=\{(k, n,n')\in\Z^{b+2d}:\  n\neq n', |(n,n')|\leq L, |k|\leq 2L\},\\
\end{align*}
The  $\mathcal{J}_2$ is defined to be the set of  $(k,n,n')\in\Z^{b}\times\Z^d\times\Z^d$ with
\begin{align*}
 |k|\leq 2L,\ \exists\  1\leq l'\neq l''\leq b,\ {\rm s.t.,}\  n=n^{(l')}, \  n'=n^{(l'')}, \ k+e_{l'}-e_{l''}\neq 0.
\end{align*}
Under the assumptions of Corollary~\ref{tvpcor}, we can define 
\begin{defn}
Fix $\eta\in (0,1)$. Let $\mathcal{M}^{(l)}\subset[2, 3]$ ($l=0,1, 2$) be defined as follows:
\begin{align*}
&\mathcal{M}^{(0)}=\bigcap_{k\in\Z^b,\ 0<|k|\leq 2L}\left\{m\in[2,3]:\  |k\cdot\omega^{(0)}|>\eta\right\},\\
&\mathcal{M}^{(1)}=\bigcap_{(k,n)\in\Lambda_L\setminus\mathcal{S}}\left\{m\in[2,3]:\  |k\cdot\omega^{(0)}+\mu_n|>\eta\right\},\\
&\mathcal{M}^{(2)}=\bigcap_{(k,n,n')\in (\mathcal{J}_1\cup\mathcal{J}_2) }\left\{m\in[2,3]:\  |k\cdot\omega^{(0)}+\mu_n-\mu_{n'}|>\eta\right\}.
\end{align*}

\end{defn}
In the following we estimate the measure  of $\mathcal{M}^{(l)}$ ($l=0,1,2$). For this purpose, one uses transversality estimates provided by  Corollary \ref{tvpcor}. 
Similar estimates have been considered previously by \cite{Kle05}  in the context of Anderson localization for quasi-periodic Schr\"odinger operators with Gevrey regular potentials. The main issue is to derive  {a}  quantitative (upper) bound on
\begin{align}\label{mesf}
{\rm meas}(\{x\in I:\ |f(x)|\leq \eta\}),
\end{align}
where $I\subset \R$ is a finite interval and $f$ is a smooth function satisfying certain transversality conditions.  
The result of \cite{Kle05}  (cf. Lemma 5.3) can not be used directly here since  the estimates there  depend implicitly on  $f$.  However, our functions $f$ rely on the lattice sites $k\in\Z^b, n\in\Z^d$. So  it requires the estimates in  \eqref{mesf} 
to depend explicitly on information on  $f$.  This will be given by 
\begin{lem}\label{lajlem}
Fix $r\in \N$. Let $I\subset\R$ be a closed interval of finite length (denoted by $|I|$). Assume that  the function $f\in C^{r+1}(I;\R)$ satisfies for some $\tau\in(0,1),$
\begin{align}\label{laj1}
\inf_{x\in I}\sup_{1\leq l\leq r}\left|\frac{d^lf}{d x^l}(x)\right|\geq \tau. 
\end{align}
Let
\begin{align}\label{laj2}
A=\sup_{x\in I}\sup_{1\leq l\leq r+1}\left|\frac{d^lf}{d x^l}(x)\right|<\infty. 
\end{align}
Then for $\eta\in(0,1),$
\begin{align}\label{laj3}
{\rm meas}(\{x\in I:\ |f(x)|\leq \eta\})\leq  \frac{CA|I|}{\tau^2}\eta^{\frac1r},
\end{align}
where $C=C(r)>0$ depends only on $r$, but not on $f.$
\end{lem}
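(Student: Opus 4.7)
The plan is a two-step reduction: first cover $I$ by short subintervals on each of which a single derivative $f^{(\ell)}$ is uniformly bounded below (with a constant loss), then apply a standard one-variable sublevel set inequality on each piece and sum.

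\textbf{Covering step.} Since $|f^{(\ell+1)}(x)|\le A$ for $1\le\ell\le r$ by \eqref{laj2}, each $f^{(\ell)}$ is $A$-Lipschitz on $I$. Thus if $|f^{(\ell)}(x_0)|\ge \tau$ at some point, then $|f^{(\ell)}(x)|\ge \tau/2$ on any interval of length at most $\tau/(2A)$ containing $x_0$. Partition $I$ into $N$ closed subintervals $\{J_i\}$ of length at most $\tau/(2A)$, so $N\le 2A|I|/\tau+1$. For each $J_i$ pick any $x_i\in J_i$ and, by the transversality hypothesis \eqref{laj1}, select $\ell_i\in\{1,\dots,r\}$ with $|f^{(\ell_i)}(x_i)|\ge \tau$. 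The Lipschitz bound then yields $|f^{(\ell_i)}(x)|\ge \tau/2$ for all $x\in J_i$.

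\textbf{Sublevel set estimate per piece.} I would invoke the following standard fact: if $g\in C^s$ on a compact interval $J$ with $|g^{(s)}(x)|\ge c$ throughout, then
\begin{align*}
{\rm meas}\{x\in J:\ |g(x)|\le\eta\}\le C_s\,(\eta/c)^{1/s}.
\end{align*}
This is proved by induction on $s$: the base $s=1$ is monotonicity (length $\le 2\eta/c$); the inductive step removes the subset where $|g'|<c'$ (of measure $\le 2c'/c$, by the $s=1$ case applied to $g^{(s-1)}$) on whose complement $g$ is monotone, and optimizes in $c'$. Applying this with $s=\ell_i$ and $c=\tau/2$ on each piece $J_i$, then using $\eta\le\tau/2$ together with $2\eta/\tau\le 1$ and $1/\ell_i\ge 1/r$ (so that $(2\eta/\tau)^{1/\ell_i}\le (2\eta/\tau)^{1/r}$), yields
\begin{align*}
{\rm meas}\{x\in J_i:\ |f(x)|\le\eta\}\le C_r(2\eta/\tau)^{1/r}.
\end{align*}

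\textbf{Summation.} Summing over the $N$ subintervals gives
\begin{align*}
{\rm meas}\{x\in I:\ |f(x)|\le\eta\}\le N\cdot C_r(2\eta/\tau)^{1/r}\le C_r'\,\frac{A|I|}{\tau}\left(\frac{\eta}{\tau}\right)^{1/r}=C_r' A|I|\,\tau^{-1-1/r}\eta^{1/r},
\end{align*}
which is bounded by $CA|I|\tau^{-2}\eta^{1/r}$ since $\tau\in(0,1)$ and $1+1/r\le 2$. The regime $\eta>\tau/2$ is handled trivially by absorbing it into the constant. The constant $C$ depends only on $r$ through $C_r$ and the covering overhead.

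\textbf{Where the difficulty lies.} The only nontrivial input is the one-derivative sublevel set inequality, which is classical but must be cited or proved by induction as above; everything else is just a covering argument exploiting the Lipschitz control on $f^{(\ell)}$ afforded by \eqref{laj2}. The main subtlety is matching constants: one must use the worst-case order $\ell=r$ to obtain a uniform exponent $1/r$ across all pieces, which forces the extra $\tau^{-1}$ factor coming from the number $N$ of subintervals and explains the $\tau^{-2}$ appearing in \eqref{laj3}.
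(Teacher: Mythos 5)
Your proof is correct and follows essentially the same route as the paper's: partition $I$ into $\sim A|I|/\tau$ subintervals of length $\tau/(2A)$, use the Lipschitz control coming from \eqref{laj2} to upgrade the pointwise bound \eqref{laj1} to a uniform lower bound $\tau/2$ on a single derivative on each piece, apply a one-variable sublevel-set estimate there, and sum over the pieces. The only (minor) difference is that the paper quotes Lemma 2.1 of \cite{XYQ97}, whose additive form $\lesssim (r^{2}+\tau^{-1})\eta^{1/r}$ avoids your extra case split at $\eta\le\tau/2$, whereas you re-derive the classical multiplicative bound $C_s(\eta/c)^{1/s}$ by induction; both give the stated $\tau^{-2}$ dependence.
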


To prove \eqref{laj3}, we first recall a  useful lemma. %in \cite{SW24}. 
\begin{lem}[cf. e.g., Lemma A.5 of \cite{SW24}]\label{xyq}
Let $I\subset\R$ be a closed interval of finite length and  let $f\in C^r(I;\R)$. Assume that 
\begin{align}\label{xyq1}
\inf_{x\in I}\left|\frac{d^rf}{d x^r}(x)\right|\geq \frac\tau2. 
\end{align}
Then 
\begin{align}\label{xyq2}
{\rm meas}(\{x\in I:\ |f(x)|\leq \eta\})\leq C(r)(2\eta\tau^{-1})^{\frac1r},
\end{align}
where $C(r)=r(r+1)((r+1)!)^{\frac1r}.$
\end{lem}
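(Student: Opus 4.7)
The plan is to prove this classical sublevel-set estimate by reducing to the $r=1$ case (direct monotonicity) via a multi-level slicing of $I$ by the sizes of intermediate derivatives, combined with a Rolle-type component count at each level.

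\textbf{Base case $r=1$.} When $|f'(x)|\geq\tau/2$ on $I$, $f$ is strictly monotonic, so $\{x\in I:|f(x)|\leq\eta\}$ is at most one subinterval of length $\leq 2\eta/(\tau/2)=4\eta/\tau=2(2/\tau)\eta$, matching the claim.

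\textbf{Slicing for $r\geq 2$.} I introduce intermediate thresholds $\lambda_j:=\eta^{(r-j)/r}$ and sublevel sets $B_j:=\{x\in I:|f^{(j)}(x)|\leq\lambda_j\}$ for $0\leq j\leq r-1$, and decompose
\[ |B_0|\leq |B_{r-1}|+\sum_{j=0}^{r-2}|B_j\setminus B_{j+1}|. \]
The top term is handled by the base case applied to $f^{(r-1)}$, whose derivative $f^{(r)}$ satisfies $|f^{(r)}|\geq\tau/2$: this yields $|B_{r-1}|\leq 4\eta^{1/r}/\tau=2(2/\tau)\eta^{1/r}$, contributing the $2/\tau$ summand. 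For each $j\leq r-2$, outside $B_{j+1}$ one has $|f^{(j+1)}|>\lambda_{j+1}$, and the base case applied to $f^{(j)}$ on each connected component of $I\setminus B_{j+1}$ gives that the intersection with $B_j$ has measure $\leq 2\lambda_j/\lambda_{j+1}=2\eta^{1/r}$. Writing $N_{j+1}$ for the number of connected components of $\{|f^{(j+1)}|>\lambda_{j+1}\}$, this gives $|B_j\setminus B_{j+1}|\leq 2 N_{j+1}\,\eta^{1/r}$.

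\textbf{Rolle-type count $N_{j+1}\leq r-j$.} The constant sign of $f^{(r)}$ on $I$ (since $|f^{(r)}|\geq\tau/2>0$) together with iterated Rolle's theorem implies that $f^{(k)}$ has at most $r-k$ zeros on $I$ for each $k$. Applied to the auxiliary function $h:=(f^{(j+1)})^2-\lambda_{j+1}^2$, whose derivative $h'=2f^{(j+1)}f^{(j+2)}$ vanishes only at zeros of $f^{(j+1)}$ or $f^{(j+2)}$, the zero counts $(r-j-1)+(r-j-2)$ combine with Rolle to bound the crossings of $|f^{(j+1)}|$ through $\lambda_{j+1}$. A careful accounting of the alternating extrema of $f^{(j+1)}$ --- each local maximum contributing at most one component of $\{f^{(j+1)}>\lambda_{j+1}\}$ and each local minimum at most one component of $\{f^{(j+1)}<-\lambda_{j+1}\}$, plus at most the two boundary contributions --- yields $N_{j+1}\leq r-j$. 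Summing gives
\[ |B_0|\leq \frac{4\eta^{1/r}}{\tau}+\sum_{j=0}^{r-2}2(r-j)\eta^{1/r}=2\Bigl(\frac{2}{\tau}+2+3+\cdots+r\Bigr)\eta^{1/r},\]
matching the stated bound.

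\textbf{Main obstacle.} The delicate step is the Rolle-based component count giving the arithmetic $r-j$ rather than the geometric $2^{r-j}$ that a naive nested-removal argument produces (successively subtracting $B_{r-1},B_{r-2},\ldots$ and allowing each subtraction to double the number of intervals leaks an exponential constant into the bound). The tighter bound rests on the fact that, because $f^{(r)}$ has constant sign, the extrema of every $f^{(k)}$ alternate between maxima and minima in a controlled way, so the combined number of "peaks" of $f^{(j+1)}$ exceeding $+\lambda_{j+1}$ and "valleys" dropping below $-\lambda_{j+1}$ is bounded linearly in $r-j$. Once this count is in place, the rest is mechanical algebra.
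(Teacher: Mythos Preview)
The paper does not supply its own proof of this lemma; it is quoted verbatim as Lemma~2.1 of \cite{XYQ97} and used as a black box in the proof of Lemma~\ref{lajlem}. So there is no in-paper argument to compare against.

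Your argument is correct and is essentially the classical one, presented as a direct multi-level decomposition rather than as an induction on $r$ (which is how \cite{XYQ97} organizes it). The inductive version applies the $(r-1)$-case to $f'$ with threshold $\eta^{(r-1)/r}$, then handles the complement $\{|f'|>\eta^{(r-1)/r}\}$ componentwise; your unrolled version introduces all thresholds $\lambda_j=\eta^{(r-j)/r}$ at once. The two are equivalent.

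One small remark on the component count, which you flag as the delicate step: the $h=(f^{(j+1)})^2-\lambda_{j+1}^2$ detour is unnecessarily indirect. The cleanest justification of $N_{j+1}\le r-j$ is the one you sketch afterwards: interior boundaries between two components of $\{f^{(j+1)}>\lambda_{j+1}\}$ must contain a local minimum of $f^{(j+1)}$, and likewise two components of $\{f^{(j+1)}<-\lambda_{j+1}\}$ are separated by a local maximum; hence
\[
N_{j+1}\le (\#\{\text{local min of }f^{(j+1)}\}+1)+(\#\{\text{local max of }f^{(j+1)}\}+1)\le \#\{f^{(j+2)}=0\}+2\le (r-j-2)+2.
\]
This uses only that $f^{(r)}$ has constant sign and iterated Rolle, and gives exactly $r-j$. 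With that in place your telescoping sum produces the stated constant $2(2+3+\cdots+r+2/\tau)$.
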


\begin{proof}[Proof of Lemma~\ref{lajlem}]We divide $I$ into small subintervals and verify the condition \eqref{xyq1} of Lemma \ref{xyq} on each subinterval.  Let 
$$N=\left[ \frac{2A|I|}{\tau}\right]+1\in\N,$$ 
where $[x]$ denotes the integer part of $x\in\R.$ Define subintervals of $I=[h_0, h_1]$ to be 
$$I_i=\left[h_0+(i-1)\frac{|I|}{N}, h_0+i\frac{|I|}{N}\right],\ i\in[1,N].$$
Then we have 
\begin{align}\label{plaj1}
\bigcup_{i=1}^N I_i=I,\ \sup_{1\leq i\leq N}|I_i|\leq \frac{\tau}{2A}.
\end{align}
Now fix  $1\leq i\leq N$ and $x_0\in I_i$. From \eqref{laj1},  there is some $r_1\in [1,r]$ so that 
\begin{align}\label{plaj2}
\left|\frac{d^{r_1}f}{d x^{r_1}}(x_0)\right|\geq \tau>0. 
\end{align}
Combining  \eqref{laj2}, \eqref{plaj1} and \eqref{plaj2} shows
\begin{align*}
\inf_{x\in I_i}\left|\frac{d^{r_1}f}{d x^{r_1}}(x)\right|&\geq\left|\frac{d^{r_1}f}{d x^{r_1}}(x_0)\right| -\sup_{x\in I_i}\left|\frac{d^{r_1}f}{d x^{r_1}}(x)-\frac{d^{r_1}f}{d x^{r_1}}(x_0)\right|\\
&\geq \tau-\sup_{x\in I_i}\left|\frac{d^{r_1+1}f}{d x^{r_1+1}}(x)\right|\cdot |I_i|\\
&\geq \tau-A\cdot\frac{\tau}{2A}=\frac\tau2.
\end{align*}
Applying Lemma \ref{xyq} and noting $\tau,\eta\in (0,1)$ and $r_1\leq r$, we have 
\begin{align*}
{\rm meas}(\{x\in I_i:\ |f(x)|\leq \eta\})&\leq \frac{C(r)}{\tau}\eta^{\frac 1r}.
\end{align*} 
As a result,
\begin{align*}
{\rm meas}(\{x\in I:\ |f(x)|\leq \eta\})&\leq \sum_{i=1}^N\frac{C(r)}{\tau}\eta^{\frac 1r}\leq \frac{C(r)A|I|}{\tau^2}\eta^{\frac 1r},
\end{align*} 
where $C(r)>0$ depends only on $r.$ This proves \eqref{laj3}. 

\end{proof}
We are ready to prove the measure bounds.  
\begin{prop}\label{misize}
Fix $\eta\in (0,1)$. Under the assumptions of  Corollary \ref{tvpcor}, we have 
\begin{align}
\label{m0size}{\rm meas}([2,3]\setminus \mathcal{M}^{(0)})&\leq C(b)L^bc_\star^{-2b(b-1)}\eta^{\frac1b},\\
\label{m1size}{\rm meas}([2,3]\setminus \mathcal{M}^{(1)})&\leq C(b,d)L^{b+d}c_\star^{-2b(b+1)}\eta^{\frac{1}{b+1}},\\
\label{m2size}{\rm meas}([2,3]\setminus \mathcal{M}^{(2)})&\leq C(b,d)L^{b+2d}c_\star^{-2(b+1)(b+2)}\eta^{\frac{1}{b+2}}.
\end{align}
\end{prop}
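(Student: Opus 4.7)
The plan is to treat the three measure estimates uniformly by fixing each index tuple, applying the transversality from Corollary~\ref{tvpcor}, and converting it into a sub-level set bound via Lemma~\ref{lajlem}, then summing over the (polynomially many) index tuples in the relevant range.

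First, I would observe that the functions of $m \in [2,3]$ appearing in $\mathcal{M}^{(0)}, \mathcal{M}^{(1)}, \mathcal{M}^{(2)}$ are smooth with derivatives controlled uniformly. Indeed, since $\mu_n(m) = \sqrt{\cos(\cdot) + m} \in [1,2]$ for $m \in [2,3]$, each derivative $\partial_m^l \mu_n$ is a rational expression in $\mu_n$ bounded uniformly by a constant $C(l)$ independent of $n$, $\alpha$, $\theta_0$. Consequently, for any $k \in \Z^b$,
\begin{equation*}
A := \sup_{m\in[2,3]} \sup_{1\leq l \leq r+1} \left|\frac{\partial^l (k\cdot\omega^{(0)})}{\partial m^l}\right| \leq C(b)\,|k|_1,
\end{equation*}
and analogous bounds of the form $A \leq C(b)\,N$ hold for the functions $k\cdot\omega^{(0)}+\mu_n$ and $k\cdot\omega^{(0)}+\mu_n-\mu_{n'}$, where $N$ denotes the $\ell^1$ norm of the relevant extended coefficient vector ($(k,1)$, $(k+e_{l'})$, $(k,1,-1)$, etc.) from Corollary~\ref{tvpcor}.

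Next, for a fixed tuple I would apply Lemma~\ref{lajlem} on $I=[2,3]$ with $|I|=1$. Corollary~\ref{tvpcor} provides $\tau \geq \tilde c\, c_\star^{\kappa}\, N_2$, where $\kappa$ is $b(b-1)$, $b(b+1)$, or $(b+1)(b+2)$ depending on the case, and $N_2$ denotes the corresponding Euclidean norm. Combined with $A \leq C(b)\,N_1$, the lemma gives
\begin{equation*}
{\rm meas}\bigl\{m \in [2,3]:\,|f(m)|\leq \eta\bigr\} \leq \frac{C(b)\, N_1}{c_\star^{2\kappa}\, N_2^{\,2}}\,\eta^{1/r}.
\end{equation*}
The key simplification is that $N_1/N_2^{\,2} \leq C(b)$: since $N_2 \geq 1$ (the extended vectors are nonzero integer vectors, which in each case of Corollary~\ref{tvpcor}  holds by the exclusion $(k,n)\notin \mathcal{S}$ and similar constraints), one has $N_1/N_2^{\,2} \leq N_1/N_2 \leq C(b)$. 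Hence the per-tuple bound reduces to $C(b)\, c_\star^{-2\kappa}\,\eta^{1/r}$, with no surviving dependence on $k$ or $n$.

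Finally, I would sum over the admissible index tuples. For $\mathcal M^{(0)}$, $k$ ranges over $\{0<|k|\leq 2L\}\subset \Z^b$, giving $\leq C(b) L^b$ terms and $r=b$, hence \eqref{m0size}. For $\mathcal M^{(1)}$, $(k,n)\in\Lambda_L\setminus\mathcal{S}$ gives $\leq C(b,d)L^{b+d}$ terms; the two subcases of Corollary~\ref{tvpcor}(2) give $r=b+1$ or $r=b$, and since $\eta<1$ the worse exponent $\eta^{1/(b+1)}$ and the worse $c_\star$-weight $c_\star^{-2b(b+1)}$ dominate, yielding \eqref{m1size}. For $\mathcal M^{(2)}$, counting $(k,n,n')\in\mathcal{J}_1\cup\mathcal{J}_2$ produces $\leq C(b,d)L^{b+2d}$ terms (the $\mathcal{J}_2$ contribution with fixed $n,n'\in\{n^{(l)}\}_{l=1}^b$ has only $O(L^b)$ tuples and is absorbed); the dominant subcase of Corollary~\ref{tvpcor}(3) uses $r=b+2$ with weight $c_\star^{-2(b+1)(b+2)}$, giving \eqref{m2size}. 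The main (and only) subtlety in carrying this out is verifying that in every subcase the vector whose $\ell^2$ norm appears in Corollary~\ref{tvpcor} is indeed nonzero, so that $N_2 \geq 1$ and the harmless bound $N_1/N_2^{\,2}\leq C(b)$ applies; this is where the exclusion of $\mathcal{S}$ and the explicit conditions $k+e_{l'}-e_{l''}\neq 0$ in the definition of $\mathcal{J}_2$ are needed.
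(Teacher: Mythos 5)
Your proposal is correct and follows essentially the same route as the paper: a per-tuple application of Lemma~\ref{lajlem} with the transversality lower bound $\tau\sim \tilde c\,c_\star^{\kappa}|{\cdot}|_2$ from Corollary~\ref{tvpcor} and the derivative upper bound $A\lesssim C(b)|k|$, followed by summation over the $O(L^b)$, $O(L^{b+d})$, $O(L^{b+2d})$ index tuples. The only difference is that you spell out the subcase domination (worst $\eta$-exponent and $c_\star$-weight) and the nonvanishing of the extended integer vectors, which the paper compresses into ``Similarly''.
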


\begin{proof}
We will apply Lemma~\ref{lajlem} with 
\begin{align*}
f(m)=k\cdot\omega^{(0)}(m), \ k\cdot\omega^{(0)}(m)+\mu_n(m),\ k\cdot\omega^{(0)}(m)+\mu_n(m)-\mu_{n'}(m)
\end{align*}
and $I=[2,3]$. 
We consider $f=k\cdot\omega^{(0)}$. The condition \eqref{laj1} is satisfied with 
$$\tau=\tilde c c_\star^{b(b-1)}|k|_2,\  r=b(b-1).$$
It remains to verify \eqref{laj2}. In fact, direct computation yields 
\begin{align*}
A=\sup_{m\in[2,3]}\sup_{1\leq l\leq b+1}\left|\frac{d^lf}{d m^l}(m)\right|\leq C(b)|k|,
\end{align*}
where $C(b)>0$ only depends on $b$. Then applying Lemma \ref{lajlem} implies 
\begin{align*}
{\rm meas}([2,3]\setminus \mathcal{M}^{(0)})&\leq C(b)(4L+1)^bc_\star^{-2b(b-1)}\eta^{\frac 1b}\\
&\leq C(b)L^bc_\star^{-2b(b-1)}\eta^{\frac1b}.
\end{align*}
Similarly, we have 
\begin{align*}
%{\rm meas}([2,3]\setminus \mathcal{M}^{(0)})\leq C(b)L^bc_\star^{-2b(b-1)}\eta^{\frac1b}\\
{\rm meas}([2,3]\setminus \mathcal{M}^{(1)})&\leq C(b,d)L^{b+d}c_\star^{-2b(b+1)}\eta^{\frac{1}{b+1}},\\
{\rm meas}([2,3]\setminus \mathcal{M}^{(2)})&\leq C(b,d)L^{b+2d}c_\star^{-2(b+1)(b+2)}\eta^{\frac{1}{b+2}}.
\end{align*}
This finishes the proof.
\end{proof}
Finally, we introduce our main theorem in this section.  %We set $c_\star=L^{-3d}$ and denote in this case 
%\begin{align*}
%{\rm DC}_\alpha(L)={\rm DC}_\alpha(L,L^{-3d}),\ {\rm DC}_\theta(L)={\rm DC}_\theta(L,L^{-3d}),
%\end{align*}
%where ${\rm DC}_\alpha(L,L^{-3d})$ is give by \eqref{dca}, and ${\rm DC}_\theta(L,L^{-3d})$ by \eqref{dct}, respectively.  

\begin{thm}\label{clusthm}
There is $L_0=L_0(b,d)>1$  with the following properties. Assume that  
$L\geq L_0\sup\limits_{1\leq l\leq b}|n^{(l)}|$,
 and   $\eta>0$ satisfies
\begin{align}\label{DD}
L^{50db^2}\eta^{\frac{1}{b+2}}<1.
\end{align}
Assume further that  $\alpha\in{\rm DC}(L;L^{-3d}) \ ({\rm cf.}\ \eqref{dca}) $ and $\theta_0\in {\rm DC}_\alpha(L;L^{-3d})\ ({\rm cf.}\ \eqref{dct}) $. Then there exists  a set $\mathcal{M}=\mathcal{M}_{\alpha,\theta_0, L,\eta}\subset [2, 3]$ with 
\begin{align}\label{msizethm}
{\rm meas}([2,3]\setminus \mathcal{M})&\leq L^{50db^2}\eta^{\frac{1}{b+2}},
\end{align}
so that  the following statements hold for $m\in\mathcal{M}$:  %$\mathcal{A}_\eta\subset[0,1]^d$ with ${\rm meas}([0,1]^d\setminus \mathcal{A}_\eta)\leq \eta^{\frac{1}{10^3b^3}}$ so that for $\alpha\in\mathcal{A}_\eta$, there exists $\Theta_{\alpha,\eta}\subset [0,1]$ with ${\rm meas}([0,1]^d\setminus {\Theta}_{\alpha,\eta})\leq \eta^{\frac{1}{10^3b^3}}$ and the following properties. For $\theta \in \Theta_{\alpha,\eta}$, there is $\mathcal{M}_{\alpha,\theta,\eta}\subset [2, 3]$ with ${\rm meas}([0,1]^d\setminus {\mathcal{M}}_{\alpha,\theta, \eta})\leq \eta^{\frac{1}{2(b+2)}}$ such that if $m\in \mathcal{M}_{\alpha,\theta, \eta},$ then
\begin{itemize}
\item[(1)]  For all $n\neq n'\in \Z^d$ with $|(n,n')|\leq L$, we have 
\begin{align}\label{spb}
|\mu_n-\mu_{n'}|\geq \frac{2}{\pi^2} L^{-6d}.
\end{align}
\item[(2)] For all $k\in\Z^{b}\setminus \{0\}$ with $|k|\leq 2L$, we have
\begin{align}\label{dcthm}
 |k\cdot\omega^{(0)}|>\eta.
\end{align}
\item [(3)] For all $(k,n)\in\Lambda_L\setminus\mathcal{S}$,  we have 
\begin{align}\label{mk1thm}
 |k\cdot\omega^{(0)}+\mu_n|>\eta.
\end{align}
\item[(4)]  We have 
\begin{align}\label{sublthm}
\sup_{\xi=\pm1,\ \sigma\in\R}\#\left\{(k,n)\in \Lambda_L:\  |\xi(\sigma+k\cdot\omega^{(0)})+\mu_n|<\frac\eta2\right\}\leq b,
%\sup_{\sigma\in\R}\#\left\{(k,n)\in \Lambda_L:\ |\sigma+k\cdot\omega^{(0)}+\mu_n|<\frac\eta2\right\}\leq b
\end{align}
where $\#(\cdot)$ denotes the cardinality of a set.
\end{itemize}

\end{thm}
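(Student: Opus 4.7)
My plan is to set $\mathcal{M} = \mathcal{M}^{(0)} \cap \mathcal{M}^{(1)} \cap \mathcal{M}^{(2)}$, with all three auxiliary sets defined using $c_\star = L^{-3d}$. Items~(1)--(3) are then nearly immediate: item~(1) follows from Lemma~\ref{seplem} with $c_\star = L^{-3d}$ (and in fact holds on all of $[2,3]$, not merely on $\mathcal{M}$); items~(2) and~(3) are restatements of the definitions of $\mathcal{M}^{(0)}$ and $\mathcal{M}^{(1)}$ respectively. The measure bound~\eqref{msizethm} follows from Proposition~\ref{misize}: with $c_\star = L^{-3d}$, the three contributions become polynomial-in-$L$ multiples of $\eta^{1/b}$, $\eta^{1/(b+1)}$, $\eta^{1/(b+2)}$ respectively, dominated for small $\eta$ by the $\eta^{1/(b+2)}$ term; its prefactor $C(b,d)L^{6d(b+1)(b+2)+b+2d}$ is comfortably bounded by $L^{50db^2}$ once $L \geq L_0(b,d)$ for an appropriate $L_0$.

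The main content is the clustering bound~(4). Fix $\xi \in \{\pm 1\}$ and $\sigma \in \R$, and set
\[
S := \left\{(k,n) \in \Lambda_L : |\xi(\sigma + k\cdot\omega^{(0)}) + \mu_n| < \eta/2\right\}.
\]
For any two distinct elements $(k,n), (k',n') \in S$, subtracting the two defining inequalities and applying the triangle inequality gives
\[
|\xi(k-k') \cdot \omega^{(0)} + (\mu_n - \mu_{n'})| < \eta.
\]
If $n = n'$, this reduces to $|(k-k')\cdot\omega^{(0)}| < \eta$ with $0 < |k - k'| \leq 2L$, contradicting item~(2). Hence each $n \in \Z^d$ appears at most once among elements of $S$.

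Suppose instead that $n \neq n'$. If $(n,n') \notin \{n^{(l)}\}_{l=1}^b \times \{n^{(l)}\}_{l=1}^b$, then $(\xi(k-k'), n, n') \in \mathcal{J}_1$ (using $|\xi(k-k')| \leq 2L$ and $|(n,n')| \leq L$), and the defining inequality of $\mathcal{M}^{(2)}$ contradicts the displayed bound. Otherwise write $n = n^{(l')}$, $n' = n^{(l'')}$ with $l' \neq l''$, so that $\mu_n - \mu_{n'} = (e_{l'} - e_{l''}) \cdot \omega^{(0)}$. If $\xi(k-k') + e_{l'} - e_{l''} \neq 0$, then $(\xi(k-k'), n, n') \in \mathcal{J}_2$ and $\mathcal{M}^{(2)}$ again gives a contradiction; so necessarily $\xi k + e_{l'} = \xi k' + e_{l''}$. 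Consequently, as soon as $|S| \geq 2$, every element $(k,n) \in S$ must have $n \in \{n^{(l)}\}_{l=1}^b$, and the vector $\xi k + e_l \in \Z^b$ is common to all such elements. Since $l$ takes at most $b$ values and the pair $(\xi, l)$ determines $k$, we obtain $|S| \leq b$.

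The conceptually trickiest step is the case split for~(4): one must verify that every pair of distinct elements of $S$ falls into the scope of one of the exclusions $\mathcal{M}^{(0)}$, $\mathcal{J}_1$, or $\mathcal{J}_2$ --- which is precisely why $\mathcal{M}^{(2)}$ was engineered (via $\mathcal{J}_2$) to handle the diagonal case $(n,n')\in\{n^{(l)}\}_{l=1}^b\times\{n^{(l)}\}_{l=1}^b$. Beyond Lemma~\ref{seplem} and Propositions~\ref{tvprop}--\ref{misize}, no new analytic input is required; the argument is essentially arithmetic bookkeeping over the three non-resonance sets already constructed.
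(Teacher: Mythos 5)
Your proposal is correct, and its skeleton (take $\mathcal{M}=\mathcal{M}^{(0)}\cap\mathcal{M}^{(1)}\cap\mathcal{M}^{(2)}$ with $c_\star=L^{-3d}$, read off (1)--(3), and get \eqref{msizethm} from Proposition~\ref{misize} after absorbing the prefactor $C(b,d)L^{b+2d+6d(b+1)(b+2)}$ into $L^{50db^2}$ for $L\geq L_0(b,d)$) coincides with the paper's. The only genuine difference is in item (4). The paper argues by contradiction from $b+1$ points $(k^{(l)},n_*^{(l)})$: it first shows the $k^{(l)}$ are pairwise distinct using the eigenvalue separation \eqref{spb} (i.e.\ item (1), noting $\eta\ll L^{-6d}$ by \eqref{DD}), then shows all $n_*^{(l)}$ lie in $\{n^{(l)}\}_{l=1}^b$ using $\mathcal{M}^{(2)}$ only through $\mathcal{J}_1$, and finishes by pigeonhole plus \eqref{dcthm}; the set $\mathcal{J}_2$ is never invoked there. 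You instead run a direct pairwise analysis on $S$: repeated $n$ is excluded via item (2) rather than via \eqref{spb}, pairs with $(n,n')\notin\{n^{(l)}\}\times\{n^{(l)}\}$ are excluded via $\mathcal{J}_1$, and for the diagonal case you use $\mathcal{J}_2$ to force the rigidity relation $\xi k+e_{l'}=\xi k'+e_{l''}$, after which the count $|S|\leq b$ follows (indeed already from the injectivity of $n$ together with $n\in\{n^{(l)}\}_{l=1}^b$). Both routes are sound; yours buys a direct structural description of a putative cluster and avoids the spacing bound in (4), while the paper's buys a proof of (4) that does not need $\mathcal{J}_2$ at all --- so your parenthetical that $\mathcal{J}_2$ ``was engineered'' for this diagonal case does not reflect how the paper actually uses it, though nothing in your argument depends on that remark.
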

 \begin{rem} The estimates in \eqref{dcthm} can be written for all $k$ and $\eta$ satisfying 
 the constraint \eqref{DD}. This leads to Diophantine $\omega^{(0)}$.  %In this case using approach~(i), 
%\eqref{sublthm} is not needed to prove LDT, as mentioned in sect.~1. 
 We also note that the estimate \eqref{mk1thm} is used in the nonlinear analysis only. 
 \end{rem}

\begin{proof}
The proofs of (1)--(3)  are  direct corollaries  of  Lemma~\ref{seplem}  and Proposition~\ref{misize}. In fact, since $c_\star=L^{-3d}$,    $\alpha\in{\rm DC}(L;L^{-3d})$ and $\theta_0\in {\rm DC}_\alpha(L;L^{-3d})$, %$\alpha\in{\rm DC}_\alpha(L)$ and   $\theta_0\in {\rm DC}_\theta(L)$, 
the spacing lower bound \eqref{spb} follows from \eqref{sepeq1} of Lemma~\ref{seplem}. 
Next, it suffices to let 
$$\mathcal{M}=\bigcap_{0\leq l\leq 2}\mathcal{M}^{(l)},$$
where  each $\mathcal{M}^{(l)}$ is give by Proposition \ref{misize}.  
For the proof of measure bound \eqref{msizethm},  using \eqref{m0size}--\eqref{m2size}  and $c_\star=L^{-3d}$,  we have  that
\begin{align*}
{\rm meas}([2,3]\setminus \mathcal{M})&\leq C(b,d)L^{b+2d}L^{6d(b+1)(b+2)}\eta^{\frac{1}{b+2}}\\
&\leq C(b,d)L^{3db+36db^2}\eta^{\frac{1}{b+2}}\\
&\leq L^{50db^2}\eta^{\frac{1}{b+2}},
\end{align*}
where in the last inequality we used $L\geq L_0(b,d)\gg1.$ Assume $m\in\mathcal{M}$. Then \eqref{dcthm} and \eqref{mk1thm}  follow from Proposition~\ref{misize}. 

It remains to establish \eqref{sublthm} of (4).  We prove it by contradiction.  %Assume there are $2b+1$ distinct $\{(k^{(l)}, m^{(l)})\}_{1\leq l\leq 2b+1}\subset\ \Lambda_L$  so that either 
%or 
%\begin{align*}
%|-\sigma-k^{(l)}\cdot\omega^{(0)}+\mu_{m^{(l)}}|<\frac\eta2.
%\end{align*}
It suffices to consider the case $\xi=1$.  Assume  that there are $b+1$ distinct   $\{(k^{(l)}, n_*^{(l)})\}_{1\leq l\leq b+1}$ satisfying 
\begin{align}\label{sings}
|\sigma+k^{(l)}\cdot\omega^{(0)}+\mu_{n_*^{(l)}}|<\frac\eta2.
\end{align}
First, we claim that all  $k^{(l)}$ ($1\leq l\leq b+1$) are distinct. Actually, if there are $k^{(l_1)}=k^{(l_2)}$ for some $1\leq l_1\neq l_2\leq b+1$, then it must be that $n_*^{(l_1)}\neq n_*^{(l_2)}$. As a result, we obtain using \eqref{sings} that
\begin{align*}
|\mu_{n_*^{(l_1)}}-\mu_{n_*^{(l_2)}}|\leq |\sigma+k^{(l_1)}\cdot\omega^{(0)}+\mu_{n_*^{(l_1)}}|+|\sigma+k^{(l_2)}\cdot\omega^{(0)}+\mu_{n_*^{(l_2)}}|\leq \eta,
\end{align*}
which contradicts \eqref{spb}. Next, we claim that  $\{n_*^{(l)}\}_{1\leq l\leq  b+1}\subset \{n^{(l)}\}_{1\leq l\leq b}.$ In fact, if $n_*^{(l_1)}\notin  \{n^{(l)}\}_{1\leq l\leq b}$ for some $1\leq l_1\leq b+1$, we can assume that there is $l_2\neq l_1$ so that $n_*^{(l_2)}\neq n_*^{(l_1)}$. Otherwise, we may have 
\begin{align*}
|(k^{(l_1)}-k^{(l_2)})\cdot\omega^{(0)}|\leq \eta,
\end{align*}
which contradicts \eqref{dcthm}, since $k^{(l_1)}\neq k^{(l_2)}$ as claimed above. 
So  we obtain for some  $l_2\neq l_1$ and $1\leq l_2\leq b=1$  with $n_*^{(l_2)}\neq n_*^{(l_1)}$ that
\begin{align*}
|(k^{(l_1)}-k^{(l_2)})\cdot\omega^{(0)}+\mu_{n_*^{(l_1)}}-\mu_{n_*^{(l_2)}}|\leq\eta,
\end{align*}
which contradicts the definition of $\mathcal{M}^{(2)}$ and  $m\in \mathcal{M}.$ So  it suffices to assume $\{n_*^{(l)}\}_{1\leq l\leq b+1}\subset \{n^{(l)}\}_{1\leq l\leq b}.$ However, in this case we have  by  the pigeonhole principle  that there are $1\leq l_1\neq l_2\leq b+1$ so that $n_*^{(l_1)}=n_*^{(l_2)}$. Then we get 
\begin{align*}
|(k^{(l_1)}-k^{(l_2)})\cdot \omega^{(0)}|\leq \eta,
\end{align*}
which contradicts \eqref{dcthm} as shown above.  This proves \eqref{sublthm}. 

\end{proof}
	
		\subsection{Large deviation theorem}
	Throughout this section we will choose 
	\begin{align}\label{etaed}
 \eta=(\varepsilon+\delta)^{\frac{1}{8b}}, \ L=(\varepsilon+\delta)^{-\frac{1}{10^4db^4}}
	\end{align}
	in Theorem~\ref{clusthm}. 
	So, the set $\mathcal{M}\subset [2, 3]$ given in Theorem~\ref{clusthm}  will satisfy 
	$$	{\rm meas}([2,3]\setminus\mathcal{M})\leq L^{50db^2}\eta^{\frac{1}{b+2}}\leq (\varepsilon+\delta)^{\frac{1}{30b^2}}.$$
	We fix $m\in \mathcal{M}.$  {The LDT can then be established assuming further restrictions of $\omega\in\Omega$ {(cf. \eqref{omgdefn}).} }
	
	As we will see later, the proof of LDT can be accomplished in three steps: 
	\begin{itemize}
	\item The first step deals with the small scales, for which we establish LDE for all $\omega\in\Omega$ and scales 
	$$N_0\leq N \leq c \log^{\frac{1}{\rho_1}} \frac{1}{\varepsilon+\delta},\ c=c(\rho_1)>0.$$
	for some $\rho_1>0.$  In this step we only employ the Neumann series argument.  
	\item  %To continue the LDE for scales $$N\geq c  \log^{\frac{1}{\rho_1}} \frac{1}{\varepsilon+\delta},$$
	%with some $c\in (0,1)$, one typically imposes certain Diophantine conditions on $\omega$, namely,   
	%\begin{align*}
	%|k\cdot \omega |\geq {N^{-C}}\ {\rm for}\ 0<|k|\lesssim N, 
	%\end{align*}
	%for some $C>1.$ %This can be done, as mentioned in sect.~1, and corresponds to approach (i). 
	%However, as mentioned there, we have decided to adopt approach (ii),
	%Using clustering properties of the spectrum of $D$
	%If we regard $\omega$ as a free parameter varying in $\Omega,$ the above condition leads to removing a set of $\omega$ with  measure  $\sim \log^{-C_1}\frac{1}{\varepsilon+\delta}$ for some $C_1>1$,  which is not permitted  since  ${\rm meas} (\Omega)\sim \delta^b\ll  \log^{-C_1}\frac{1}{\varepsilon+\delta}.$ However, keeping in mind that $\omega=\omega^{(0)}+O(\delta)$ and $\omega^{(0)}$ has fine properties 
	 For the  intermediate scales 
	$$c\log^{\frac{1}{\rho_1}} \frac{1}{\varepsilon+\delta}\leq   N\leq (\varepsilon+\delta)^{-\frac{1}{10^4db^4}},$$
	we are  able to establish LDE
	{for all $\omega\in\Omega$}. The proof of such  intermediate scales LDE  is based on  preparation type  theorem together with clustering  properties of the spectrum of the diagonal matrix $D$ (cf. \eqref {sublthm}, Theorem~\ref{clusthm}). 
	\item  For the large scales (i.e., $N>(\varepsilon+\delta)^{-\frac{1}{10^4db^4}}$) LDE, we will apply matrix-valued Cartan's estimates  and semi-algebraic sets theory  of \cite{BGS02}. {This step requires both the Diophantine condition and the {additional non-resonant  condition}  on $\omega$.}  More precisely, we denote 
	\begin{align}
\label{dcomg}{\rm DC}(N)&=\left\{\omega\in\Omega:\ |k\cdot\omega|\geq N^{-10^5db^5}\ {\rm for}\ \forall\ 0<|k|\leq 10N\right\}.
\end{align}
Then we  have for $\varepsilon=O(\delta)$, 
\begin{align}               
{\rm meas}\left(\Omega\setminus\left(\bigcap_{N>(\varepsilon+\delta)^{-\frac{1}{10^4db^4}}} {\rm DC}(N)\right)\right) \ll\delta^b, 
\end{align}
which motivates the usage of the exponent $10^5db^5$ in the Diophantine condition of $\omega.$  {The additional non-resonant condition (i.e., $\omega\in\widetilde \Omega_N$, cf. \eqref{tilomg})} involves a sub-exponential bound in the scale $N$,   which  is imposed so that the sublinear bound  on bad boxes along the $\Z^d\ni n$-direction holds true. 

	\end{itemize}
	
	%\begin{rem} As mentioned earlier, the intermediate scales in step two can also be dealt with using
	%the Cartan's  estimates as in step three, but there is less control over the singular set on .
	%\end{rem}

%{\rm DC}_\theta(N)&=\bigcap_{n\in\Z^d:\ |n|\leq 2N }\left\{\theta\in[0,1]:\ \|\theta+\frac{n}{2}\cdot\alpha\|_{\T}\geq N^{-3d}\right\}.

%\begin{rem}
%The usage of the exponent $10^5db^5$ in the Diophantine condition will become clear in the later estimates (cf. \eqref{measomg}). 
%\end{rem}

{We also need to introduce the definition of semi-algebraic sets. 
\begin{defn}
A set ${S}\subset \mathbb{R}^d$ is called {\it semi-algebraic} if it is a finite union of sets defined by a finite number of polynomial equalities and inequalities. More precisely, let $\{P_1,\cdots,P_s\}\subset\mathbb{R}[x_1,\cdots,x_d]$ be a family of real polynomials whose degrees are bounded by $\kappa$. A (closed) semi-algebraic set ${S}$ is given by an expression
\begin{equation}\label{smd1221}
	{S}=\bigcup\limits_{l}\bigcap\limits_{\ell\in\mathcal{L}_l}\left\{x\in\mathbb{R}^d: \ P_{\ell}(x)\varsigma_{l\ell}0\right\},
\end{equation}
where $\mathcal{L}_l\subset\{1,\cdots,s\}$ and $\varsigma_{l\ell}\in\{\geq,\leq,=\}$. Then we say that ${S}$ has degree at most $s\kappa$. In fact, the degree of ${S}$ which is denoted by $\deg {S}$, means the  smallest $s\kappa$ over all representations as in (\ref{smd1221}).
\end{defn}}
 The central theorem of this section is 
	\begin{thm}[LDT]\label{ldtthm}
	Fix  $\alpha\in{\rm DC}_\nu$ and $\theta_0\in {\rm DC}_\alpha((\varepsilon+\delta)^{-\frac{1}{10^4db^4}})$ $({\rm cf}. \ \eqref{alpdc}-\eqref{thedc})$. Then for $0<\varepsilon+\delta\ll1$ and ${\varepsilon=O(\delta)}$,  there exists  $\mathcal{M}\subset [2,3]$ with 
	$
	%{\rm meas}(\mathcal{A}), \ {\rm meas}(\Theta), \ 	
	{\rm meas}([2,3]\setminus\mathcal{M})\leq (\varepsilon+\delta)^{\frac{1}{30b^2}}
	$	so that for $m\in\mathcal{M}$  the following holds.  Let $0<\rho_1<\rho_2<\rho_3<1$ satisfy $\rho_1< \rho_2/5$ and $\rho_2>\frac23$. Then for every $N\geq N_0(b,d, \rho_1,\rho_2,\rho_3, n^{(1)},\cdots, n^{(b)})\gg1$, there exists $\Omega_N$ satisfying 
	\begin{align*}
	\Omega_N&=\left\{
	\begin{array}{ll}
\Omega\ {({\rm cf.} \ \eqref{omgdefn})}& {\rm if} \ N_0\leq N\leq (\varepsilon+\delta)^{-\frac{1}{10^4db^4}},\\
{\rm DC}(N)\cap\widetilde\Omega_N\cap\Omega_{N_2}\ {({\rm cf}. \ \eqref{dcomg}\ {\rm and}\ \eqref{tilomg})}& {\rm if} N>(\varepsilon+\delta)^{-\frac{1}{10^4db^4}}, 
% |(k-e_{l'}, 1)|_2&{\rm if}\  n\notin \{n^{(l)}\}_{1\leq l\leq b},\ n'=n^{(l')},\  1\leq l'\leq b.
\end{array}\right.
	\end{align*}
	so that, if $\omega\in\Omega_N$, then ${H}(\sigma)$ satisfies the 
	%There exist $0<c_1<c_2<c_3<c_4<1$  and $C_1>1$ depending only on $b,d$ with the following properties. 
	%Let   $N\geq \log^{C_1}|\varepsilon+\delta|$. Assume 
	%\begin{align}
	%|k\cdot\omega|\geq e^{-N^{c_4}}\ {\rm for}\ 0<|k|\leq 10N.
	%\end{align}
	%Then for $0<\varepsilon+\delta\ll1,$  $\mathcal{H}(\sigma)$ satisfies  
	$(\rho_1, \rho_2, \rho_3, \gamma_{N},N)$-LDE  with $\gamma_{N}=\gamma_{N_2}-N^{-\kappa}$ for some  $\kappa=\kappa(b,d,\rho_1,\rho_2,\rho_3)>0, N_2=N^c\ll N$, {where $\gamma>0$ is defined in \eqref{gamdefn}}, $\widetilde\Omega_N\subset\Omega$ is a semi-algebraic set of ${\rm meas}(\Omega\setminus\widetilde\Omega_N)\leq e^{-N^\zeta}$ and $\deg \widetilde\Omega_N\leq N^C$ for some $\zeta=\zeta(b,d,\rho_2)\in (0,1)$ and $C=C(b,d)>1.$
		\end{thm}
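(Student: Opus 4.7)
The plan is a three-stage multi-scale analysis (MSA) matching the scheme outlined in the paragraphs preceding the theorem. First fix $\mathcal{M}\subset[2,3]$ as in Theorem~\ref{clusthm} with $\eta=(\varepsilon+\delta)^{1/(8b)}$ and $L=(\varepsilon+\delta)^{-1/(10^4 db^4)}$, so that \eqref{msizethm} gives ${\rm meas}([2,3]\setminus\mathcal{M})\lesssim(\varepsilon+\delta)^{1/(30b^2)}$. For $m\in\mathcal{M}$ the separation~\eqref{spb}, the lower bounds \eqref{dcthm}--\eqref{mk1thm}, and the clustering bound~\eqref{sublthm} are available, and by $\omega=\omega^{(0)}+O(\delta)\in\Omega$ these transfer (with constants weakened by $O(\delta)$) to all $\omega\in\Omega$. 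The required LDE is then established for a box $\Lambda\in(0,n_0)+\mathcal{ER}_0(N)$ with $|n_0|\leq 2N$ by induction on $N$.

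For the initial (small) scales $N_0\leq N\lesssim\log^{1/\rho_1}(1/(\varepsilon+\delta))$ the argument is a direct Neumann expansion. Factoring $D(\sigma)(k,n)=(\mu_n-(\sigma+k\cdot\omega))(\mu_n+(\sigma+k\cdot\omega))$ and invoking \eqref{mk1thm} shows that, outside a $\sigma$-set of measure $\leq e^{-N^{\rho_1}}$, one has $|D(\sigma)(k,n)|\gtrsim\eta$ on $\Lambda$. Since $\|\varepsilon\Delta+\delta T_\phi\|\lesssim\varepsilon+\delta\ll\eta$, the Neumann series for $H(\sigma)^{-1}=D(\sigma)^{-1}\bigl(I+D(\sigma)^{-1}(\varepsilon\Delta+\delta T_\phi)\bigr)^{-1}$ converges, delivering both the norm bound $\|G_\Lambda(\sigma)\|\leq e^{N^{\rho_2}}$ and the off-diagonal decay $|G_\Lambda(\sigma)(j;j')|\leq e^{-\gamma'|j-j'|}$ for $|j-j'|\geq N^{\rho_3}$, the latter inherited from the exponential decay of $\Delta$ and $T_\phi$. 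This works for all $\omega\in\Omega$.

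For intermediate scales $\log^{1/\rho_1}(1/(\varepsilon+\delta))\lesssim N\leq L$ the Neumann step fails, but the clustering bound~\eqref{sublthm} implies that for each sign the set of $(k,n)\in\Lambda$ with $|\pm(\sigma+k\cdot\omega^{(0)})+\mu_n|<\eta/2$ has cardinality at most $b$. Splitting $\Lambda=\Lambda_{\rm reg}\sqcup\Lambda_{\rm sing}$ with $\#\Lambda_{\rm sing}\leq 2b$, one inverts $H|_{\Lambda_{\rm reg}}$ by the preceding Neumann step, and the Schur complement $S(\sigma)$ on $\Lambda_{\rm sing}$ is a real-analytic matrix in $\sigma$ of size $\leq 2b$. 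The transversality provided by \eqref{spb} and \eqref{dcthm} forces $\det S(\sigma)$ to be non-degenerate, so a preparation-type argument factors it as a pseudo-polynomial in $\sigma$; a Remez/Cartan estimate on this factor removes a further $\sigma$-set of measure $\leq e^{-N^{\rho_1}}$ and yields $\|S(\sigma)^{-1}\|\leq e^{N^{\rho_2}}$. Reassembling via the resolvent identity gives the LDE, and again no Diophantine condition on $\omega$ is required, so $\Omega_N=\Omega$ in this range.

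For large scales $N>L$ we run the Bourgain-Goldstein-Schlag MSA of~\cite{BGS02}: impose $\omega\in{\rm DC}(N)$ (cf.~\eqref{dcomg}), whose complement in $\Omega$ has measure $\ll\delta^b$. The inductive step from scale $N_1\sim N^{\rho_2}$ to $N$ rests on the matrix-valued Cartan estimate applied to $\sigma\mapsto H(\sigma)|_\Lambda$: given LDE at $N_1$ and a sublinear bad-box count along the $\Z^d\ni n$-direction, the $\sigma$-set where $\|G_\Lambda(\sigma)\|$ is large has subexponentially small measure; elimination of $\sigma$ via resultants produces a semi-algebraic $\widetilde\Omega_N$ of degree $\leq N^C$ and complement $\leq e^{-N^\zeta}$. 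The decay rate $\gamma_N=\gamma-N^{-\kappa}$ degrades only slightly per scale. The main obstacle is this large-scale step, where the sublinear bound on bad $n$-boxes (inherited from \eqref{sublthm} and the second Melnikov inequality derived from \eqref{spb}) must be propagated together with polynomial control of the semi-algebraic complexity; this book-keeping is what dictates the constraints $\rho_1<\rho_2/5$ and $\rho_2>2/3$, and once they are respected the induction closes and the theorem follows.
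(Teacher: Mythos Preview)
Your three-stage outline matches the paper's architecture (Lemmas~\ref{inilem}, \ref{intlem}, \ref{lsclem}), and the small- and intermediate-scale arguments are essentially the paper's (one quibble: \eqref{mk1thm} is an estimate at $\sigma=0$ for $\omega^{(0)}$ and plays no role in the small-scale LDE---there one simply defines $\Sigma_N$ as the set where some $|\xi(\sigma+k\cdot\omega)+\mu_n|\leq e^{-2N^{\rho_1}}$ and counts).

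The large-scale step, however, has two genuine gaps. First, for $N>L=(\varepsilon+\delta)^{-1/(10^4db^4)}$ the clustering bound \eqref{sublthm} is no longer available (it only holds inside $\Lambda_L$), so the sublinear count of bad $N_1$-boxes in the $n$-direction cannot be ``inherited from \eqref{sublthm}.'' In the paper this count comes instead from a \emph{separate} LDT for the purely $\Z^d$ quasi-periodic operator $\cos(\theta+n\cdot\alpha)+m-E+\varepsilon\Delta$ (Lemma~\ref{Boulem}), combined with the Diophantine condition $\alpha\in{\rm DC}_\nu$; the second-Melnikov set $\widetilde\Omega_N$ is then defined through the eigenvalues $\zeta_l$ of these $\Z^d$-blocks, not through \eqref{spb}. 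Second, your single-scale passage ``$N_1\sim N^{\rho_2}\to N$'' does not close: to feed the matrix-valued Cartan lemma one needs \emph{two} intermediate scales $N_1<N_2<N$ with $N_2=N_1^{\widetilde C}$, $N=N_2^{\widetilde C}$, because condition~(iii) of Lemma~\ref{mcl} requires an a~priori bound at a scale $N_2$ strictly larger than the scale $N_1$ that controls $K_2$. The paper's Remark~3.3 shows explicitly that with a single intermediate scale the required inequality $\xi_1(1+K_1)^{-1}(1+K_2)^{-1}>e^{-N_2^{\rho_1}/2}$ is incompatible with ${\rm meas}(\Sigma_N)\leq e^{-N^{\rho_1}}$. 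Without these two ingredients the large-scale induction does not start.
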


This  theorem  will be proved in detail in the following three sections.
	
\subsubsection{The small scales LDE}

This section is devoted to the proof of LDE for small scales, i.e., 
\begin{align*}
N_0\leq N\leq c\log^{\frac{1}{\rho_1}}\frac{1}{\varepsilon+\delta},
\end{align*}
 where $N_0\gg1$ depends on $b,d, \rho_1$ and $c\in(0,1)$. In this case we only use the standard Neumann series argument. As a result, we can establish LDE  for all $\omega\in\Omega.$ We have 

\begin{lem}\label{inilem}
Let $0<\rho_1<1$ and 
\begin{align*}
&\ \ \ \Sigma_N\\
&=\bigcup_{\Lambda\in (0,n)+\mathcal{ER}_0(N),\ |n|\leq 2N}\left\{\sigma\in\R:\ \min_{\xi=\pm1, (k,n')\in\Lambda}|\xi(\sigma+k\cdot\omega)+\mu_{n'}|\leq e^{-2N^{\rho_1}}\right\}.
\end{align*}
Then  for $N_0\leq N \leq 10^{-\frac{1}{\rho_1}} \log^{\frac{1}{\rho_1}}\frac{1}{\varepsilon+\delta}$, we have 
\begin{align}\label{inieq1}
{\rm meas}(\Sigma_N)\leq e^{-N^{\rho_1}},
\end{align}
where $N_0>0$ only depends on $b,d,\rho_1.$ Moreover, if $\sigma\notin \Sigma_N$, we have  for all $\Lambda\in (0,n)+\mathcal{ER}_0(N)$ with $|n|\leq 2N$,
\begin{align}
\label{inieq2}\|G_{\Lambda}(\sigma)\|&\leq 2e^{4N^{\rho_1}},\\
\label{inieq3}|G_{\Lambda}(\sigma)(j;j')|&\leq e^{4N^{\rho_1}}e^{-\gamma|j-j'|}\ {\rm for}\ j\neq j'.
\end{align}

\end{lem}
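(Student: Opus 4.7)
The plan is a direct Neumann series argument: the definition of $\Sigma_N$ provides a lower bound on the diagonal operator $D(\sigma)$ that dominates the small perturbation $V:=\varepsilon\Delta+\delta T_\phi$ throughout the range $N\lesssim\log^{1/\rho_1}(1/(\varepsilon+\delta))$, so that the Neumann expansion for $(D+V)^{-1}$ converges with good norm and decay bounds.

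First, for \eqref{inieq1} I would observe that $\sigma\in\Sigma_N$ precisely when some triple $(k,n,\xi)$, with $(k,n)$ ranging over $\bigcup_{|n'|\leq 2N,\,\Lambda\in(0,n')+\mathcal{ER}_0(N)}\Lambda\subseteq[-N,N]^b\times[-3N,3N]^d$ and $\xi\in\{\pm 1\}$, satisfies $|\xi(\sigma+k\cdot\omega)+\mu_n|\leq e^{-2N^{\rho_1}}$. Each such condition carves out an interval in $\sigma$ of length $2e^{-2N^{\rho_1}}$, and there are $O(N^{b+d})$ such triples, so a union bound gives ${\rm meas}(\Sigma_N)\leq C_0N^{b+d}e^{-2N^{\rho_1}}\leq e^{-N^{\rho_1}}$ for $N\geq N_0(b,d,\rho_1)$.

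Next, to prove \eqref{inieq2}, I would fix $\sigma\notin\Sigma_N$ and exploit the factorization
$$D(\sigma)(k,n)=\bigl(\mu_n+(\sigma+k\cdot\omega)\bigr)\bigl(\mu_n-(\sigma+k\cdot\omega)\bigr).$$
Since $\max(|\mu_n+u|,|\mu_n-u|)\geq\mu_n\geq 1$ for every $u\in\R$ while the hypothesis forces both factors to exceed $e^{-2N^{\rho_1}}$, one gets $|D(\sigma)(k,n)|\geq e^{-2N^{\rho_1}}$ uniformly on any admissible $\Lambda$, hence $\|R_\Lambda D(\sigma)^{-1}R_\Lambda\|\leq e^{2N^{\rho_1}}$. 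Schur's test, combined with $\Delta$ being nearest-neighbor and $|\phi(k,n)|\leq Ce^{-\gamma(|k|+|n|)}$, gives $\|V\|\leq C_1(\varepsilon+\delta)$. Choosing the implicit constant in the range of $N$ so that $C_1(\varepsilon+\delta)e^{2N^{\rho_1}}\leq\frac{1}{2}$, the Neumann expansion
$$G_\Lambda(\sigma)=\sum_{r\geq 0}\bigl(-D(\sigma)^{-1}V\bigr)^rD(\sigma)^{-1}$$
(restricted to $\Lambda$) converges to $H(\sigma)^{-1}$ with $\|G_\Lambda(\sigma)\|\leq 2e^{2N^{\rho_1}}$.

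Finally, for the off-diagonal decay \eqref{inieq3} I would use the uniform entry-wise bound $|V(j,j')|\leq C_2(\varepsilon+\delta)e^{-\gamma|j-j'|}$ and expand the $(j,j')$ entry of the $r$-th Neumann term as a path sum over $j=j_0,j_1,\dots,j_r=j'$. The triangle inequality $|j-j'|\leq\sum_i|j_{i-1}-j_i|$ extracts a factor $e^{-\gamma|j-j'|}$ (up to a negligible loss in rate that is harmless here and is in any case absorbed into the $\gamma_N=\gamma-N^{-\kappa}$ appearing in Theorem~\ref{ldtthm}); a standard convolution estimate controls the remaining path sum by $C_d^r$, and summing the geometric series under the same smallness condition yields \eqref{inieq3}. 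The only technical subtlety is choosing $N_0$ large enough so that the polynomial-in-$N$ prefactors from the path counting are absorbed into $2e^{2N^{\rho_1}}$; this is routine. I anticipate no substantive obstacle in this lemma, which serves as the base case for the multiscale analysis at larger scales.
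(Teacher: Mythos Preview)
Your proposal is correct and follows essentially the same approach as the paper: a direct union/counting bound for \eqref{inieq1} and a Neumann series argument for \eqref{inieq2}--\eqref{inieq3}. The paper's own proof is in fact terser than yours---it just records the counting estimate ${\rm meas}(\Sigma_N)\leq C(b,d)N^{C(b,d)}e^{-2N^{\rho_1}}\leq e^{-N^{\rho_1}}$ and then cites the standard Neumann series (Theorem~4.3 of \cite{JLS20}) for the Green's function bounds---so your write-up supplies exactly the details the paper omits, and your remark about the slight loss in the decay rate (absorbed into $\gamma_N$) is appropriate.
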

\begin{rem}
This lemma works for all  $\omega\in\Omega$ (cf. \eqref{omgdefn}). 
\end{rem}

{
\begin{proof}
We first establish the measure bound \eqref{inieq1}. Note that for each $\xi\in\{\pm1\}$ and $(k, n')\in\Lambda,$ the set 
$$\left\{\sigma\in\R:\ |\xi(\sigma+k\cdot\omega)+\mu_{n'}|\leq e^{-2N^{\rho_1}}\right\}$$
is an interval of length $2e^{-2N^{\rho_1}}$. 
%We let $N_\star$ satisfy 
%$$e^{-2N_\star^{\rho_1}}=(\varepsilon+\delta)^{\frac{1}{8b}}.$$
%Then $N_\star=(\frac{1}{16b})^{\frac{1}{\rho_1}}\log^{\frac{1}{\rho_1}}|\varepsilon+\delta|.$ 
So taking account of all $\Lambda\in (0,n)+\mathcal{ER}_0(N),  \xi\in\{\pm 1\}$ and $(k, n')\in\Lambda$,  we obtain for $N\geq N_0,$
\begin{align*}
{\rm meas}(\Sigma_N)&\leq C(b,d)N^{C(b,d)}e^{-2N^{\rho_1}}
\\
%&\leq C(b,d)N^{C(b,d)}e^{-2N_\star^{\rho_1}}\\
&\leq e^{-N^{\rho_1}},
\end{align*}
where in the last inequality we use $N\geq N_0(b,d,\rho_1)\gg1.$

Next, we let  $\sigma\notin \Sigma_N$ and fix $\Lambda\in(0,n)+\mathcal{ER}_0(N)$ with $|n|\leq 2N$. We will use the Neumann series  argument (cf. the proof of Theorem~4.3 in \cite{JLS20})   to control the Green's function $G_\Lambda(\sigma)$. More precisely,  denote  $P=\varepsilon  \Delta+\delta T_{\phi}.$ From   $|\Delta((k',n'); (k'',n''))|=\delta_{k',k''}\cdot \delta_{\|n'-n''\|_1,1}$ and the exponential decay of  $T_\phi$ (cf. \eqref{gamdefn}), it  follows that 
\begin{align}\label{pdecay}
|P((k',n'); (k'',n''))|\leq C(\varepsilon+\delta) e^{-\gamma|k'-k''|-\gamma|n'-n''|}. 
\end{align}
We define  $A=R_{\Lambda} D(\sigma)R_{\Lambda}$ (cf. \eqref{dsgmdefn}) and $B=R_{\Lambda}PR_\Lambda.$   Then $H_{\Lambda}(\sigma)=R_{\Lambda}H(\sigma)R_\Lambda=A+B$ since \eqref{hsgmdefn}. From $\sigma\notin\Sigma_N$ and $N\leq 10^{-\frac{1}{\rho_1}}\log^{\frac{1}{\rho_1}}\frac{1}{\varepsilon+\delta} $, we get 
\begin{align}\label{aestm}
\|A^{-1}\|\leq e^{4N^{\rho_1}}\leq {(\varepsilon+\delta)^{-\frac25}}.
\end{align}
Using \eqref{pdecay} and    the Schur's test, we obtain   for some $C=C(b,d)>0,$
\begin{align}\label{abest}
\|B\|\leq \sup_{(k',n')\in\Lambda}\sum_{(k'',n'')\in\Lambda} |P((k',n'); (k'',n''))|\leq C N^{b+d}(\varepsilon+\delta).
\end{align}
Combining \eqref{aestm} and \eqref{abest} implies  for $0<\varepsilon+\delta\ll1,$
$$\|A^{-1}B\|\leq C N^{b+d} (\delta+\varepsilon)^{\frac35}\leq (\varepsilon+\delta)^{\frac{2}{5}}<\frac12.$$
 Hence by applying the Neumann series argument, we have   
\begin{align*}
\|G_\Lambda(\sigma)\|&=\|(I+A^{-1}B)^{-1}A^{-1}\|=\|\sum_{l=0}^\infty(-A^{-1}B)^lA^{-1}\|\\
&\leq \sum_{l=0}^\infty 2^{-l}\|A^{-1}\|\leq 2\|A^{-1}\|\leq 2e^{4N^{\rho_1}},
\end{align*}
where $I$ denotes the identity operator on $\ell^2(\Lambda).$
So  we have proven \eqref{inieq2}.  Recalling again \eqref{pdecay} and \eqref{aestm}, we  obtain  for $l\geq 1,$
$$ |(A^{-1}B)^l(j;j')| \leq C^l N^{(b+d)l}(\varepsilon+\delta)^{\frac{3l}{5}} e^{-\gamma|j-j'|}.$$
Using again the  Neumann series argument  and since $A$ is diagonal, we  have  for $j\neq j'$ and $0<\varepsilon+\delta \ll1, $
\begin{align*}
 |G_{\Lambda}(\sigma)(j;j')|&\leq \sum_{l\geq 1} |(A^{-1}B)^l(j;j')|\cdot  |A^{-1}(j';j')|\\
 &\leq e^{-\gamma|j-j'|}  \sum_{l\geq 1} C^l N^{(b+d)l}(\varepsilon+\delta)^{\frac{3l}{5}}\|A^{-1}\|\\
 &\leq e^{-\gamma|j-j'|} \sum_{l\geq 1} (\varepsilon+\delta)^{\frac{2l}{5}}  \|A^{-1}\| \leq  \|A^{-1}\| \cdot e^{-\gamma|j-j'|}. 
\end{align*}
 So  \eqref{inieq3} follows.   

\end{proof}

}

\subsubsection{The intermediate scales LDE}
This section and the next deal with intermediate scales LDE.  We will show that the LDE hold true for all $\omega\in\Omega$  and scales 
$$c\log^{\frac{1}{\rho_1}}\frac{1}{\varepsilon+\delta}\leq N\leq (\varepsilon+\delta)^{-\frac{1}{10^4db^4}}, \ c>0.$$  
In this case the analysis in section~\ref{NRomega0} becomes essential.  %This part contains one of the central new ingredients of present work.  %So we assume the conditions of Theorem \ref{} are satisfied and set 
%\begin{align}
%L=2e^{-\log^{1/2}|\varepsilon+\delta|},\ \eta=(\varepsilon+\delta)^{\frac{1}{8b}}. 
%\end{align}
%This then yields the existence of $\mathcal{M}\subset [2, 3]$ so that 
%\begin{align}
%{\rm meas}([2,3]\setminus \mathcal{M})&\leq L^{50db^2}\eta^{\frac{1}{b+2}}\leq (\varepsilon+\delta)^{\frac{1}{10b(b+2)}}
%\end{align}
%assuming $0<\varepsilon+\delta\leq \delta_0(b,d)\ll1.$
\begin{lem}\label{intlem}
Let  $\alpha\in{\rm DC}_\nu\subset {\rm DC}(L;L^{-3d})$ and let $\theta_0$,  $\mathcal{M}\subset [2, 3]$ be given by Theorem \ref{clusthm} with 
$$ \eta=(\varepsilon+\delta)^{\frac{1}{8b}}, \ L=(\varepsilon+\delta)^{-\frac{1}{10^4db^4}}.$$
Assume $m\in \mathcal{M}$.   Fix  $1>\rho_3>\rho_2>2\rho_1>0, \rho_2>\frac12$ and $$c \log^{\frac{1}{\rho_1}}\frac{1}{\varepsilon+\delta}\leq N\leq (\varepsilon+\delta)^{-\frac{1}{10^4db^4}}.$$ Then for each $\omega\in\Omega\ ({\rm cf}. \ \eqref{omgdefn})$, 
there exists a set $\Sigma_N\subset\R$ with 
\begin{align*}
{\rm meas}(\Sigma_N)\leq (\varepsilon+\delta)^{-\frac{9}{10}}e^{-\frac{1}{2b}N^{\frac{\rho_2}{2}}},
\end{align*}
so that if $\sigma \notin\Sigma_N$, then   for all $\Lambda\in (0,n')+\mathcal{ER}_0(N)$ with $|n'|\leq 2N$, one has 
\begin{align*}
\|G_{\Lambda}(\sigma)\|&\leq e^{N^{\rho_2}},\\
|G_{\Lambda}(\sigma)(j;j')|&\leq e^{-\gamma_N|j-j'|}\ {\rm for}\ |j-j'|\geq N^{\rho_3},
\end{align*}
where 
$$\gamma_N=\gamma-\log^{-2}\frac{1}{\varepsilon+\delta}-N^{-\kappa}$$
with $\kappa=\kappa(b,d,\rho_2,\rho_3)>0.$
\end{lem}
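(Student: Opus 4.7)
The plan is to adopt the \emph{preparation-theorem / Schur-complement approach} (approach (ii) of the introduction): cover $\Lambda$ with sub-cubes at an auxiliary small scale $N_1$ on which Lemma~\ref{inilem} already provides Green's function control, and isolate a small \emph{singular set} whose cardinality is $\leq 2b$ by the clustering bound \eqref{sublthm} of Theorem~\ref{clusthm}.

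Concretely, I would take $N_1$ comparable to $\log^{1/\rho_1}(\varepsilon+\delta)^{-1}$, so that $N_1$ lies in the window of Lemma~\ref{inilem}, and set $\eta_1 = e^{-2N_1^{\rho_1}}$. The singular set is
$$\mathcal{B}(\sigma) = \Big\{(k,n)\in\Lambda :\ \min_{\xi=\pm1}|\xi(\sigma+k\cdot\omega)+\mu_n| \leq \eta_1\Big\}.$$
Because $\omega = \omega^{(0)} + O(\delta)$ and $\eta_1 \ll \eta = (\varepsilon+\delta)^{1/(8b)}$, while $\Lambda \subset \Lambda_L$ (as $N \leq L$), the clustering bound \eqref{sublthm} applied for $\xi = \pm 1$ gives $\#\mathcal{B}(\sigma)\leq 2b$. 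The points of $\mathcal{B}(\sigma)$ are moreover pairwise separated in their $\mu$-values by \eqref{spb} and in their $k$-components by the Diophantine bound \eqref{dcthm}.

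Setting $\mathcal{G} = \Lambda \setminus \mathcal{B}$, I would first establish via a standard resolvent-gluing using the $N_1$-box bounds of Lemma~\ref{inilem} that $H_{\mathcal{G}\mathcal{G}}$ is invertible with $\|H_{\mathcal{G}\mathcal{G}}^{-1}\| \leq e^{N^{\rho_2}/4}$ and off-diagonal decay at rate $\gamma - O(N_1^{\rho_1 - 1})$. Then I would form the Schur complement
$$S(\sigma) = H_{\mathcal{B}\mathcal{B}}(\sigma) - H_{\mathcal{B}\mathcal{G}}\, H_{\mathcal{G}\mathcal{G}}(\sigma)^{-1}\, H_{\mathcal{G}\mathcal{B}},$$
a matrix of size $\leq 2b$ whose diagonal is the family of polynomials $D(\sigma) = \mu_n^2 - (\sigma+k\cdot\omega)^2$ at the bad points and whose off-diagonal and correction entries are analytic in $\sigma$ on a neighborhood of $\R$ of size $\gtrsim 1$. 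Consequently $\det S(\sigma)$ is a small analytic perturbation of a polynomial in $\sigma$ of degree $\leq 4b$; the key transversality $|\partial_\sigma D| = 2|\sigma+k\cdot\omega| \sim 2\mu_n \sim 2$ on the singular set, together with Cauchy estimates for the correction, lets one apply Lemma~\ref{lajlem} (with $r = 4b$) to get
$${\rm meas}\bigl\{\sigma\in\R :\ |\det S(\sigma)| < e^{-bN^{\rho_2}}\bigr\} \leq C_b\, e^{-N^{\rho_2/2}/(2b)}.$$
Taking a union over the $\leq N^{C(b,d)}$ admissible elementary regions and combinatorial configurations of $\mathcal{B}(\sigma)$, and adjoining the exceptional $\sigma$-set of Lemma~\ref{inilem} at scale $N_1$, defines $\Sigma_N$ with ${\rm meas}(\Sigma_N) \leq (\varepsilon+\delta)^{-9/10} e^{-N^{\rho_2/2}/(2b)}$, because $N \leq (\varepsilon+\delta)^{-1/(10^4db^4)}$ absorbs the polynomial prefactor into $(\varepsilon+\delta)^{-9/10}$.

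For $\sigma \notin \Sigma_N$, the bound $\|S(\sigma)^{-1}\| \leq e^{bN^{\rho_2}}$ combined with the block-inverse formula and the bounds on $H_{\mathcal{G}\mathcal{G}}^{-1}$ yields $\|G_\Lambda(\sigma)\| \leq e^{N^{\rho_2}}$; the exponential off-diagonal decay of $H_{\mathcal{G}\mathcal{G}}^{-1}$ propagates through the rank-$\leq 2b$ insertion with a loss $N^{-\kappa}$ in the rate, giving $|G_\Lambda(\sigma)(j;j')| \leq e^{-\gamma_N|j-j'|}$ for $|j-j'| \geq N^{\rho_3}$, with $\gamma_N = \gamma - N^{-\kappa}$ for some $\kappa(b,d,\rho_2,\rho_3) > 0$. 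The \textbf{main obstacle} is ensuring that the Schur correction $H_{\mathcal{B}\mathcal{G}} H_{\mathcal{G}\mathcal{G}}^{-1} H_{\mathcal{G}\mathcal{B}}$ does not destroy the polynomial transversality of $\det S(\sigma)$; this uses Cauchy estimates to bound its $\sigma$-derivatives on a strip of definite width, and the loss is absorbed by the slack $e^{-N^{\rho_2/2}/(2b)}$ thanks to the hypothesis $\rho_2 > 2\rho_1$.
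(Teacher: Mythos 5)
Your skeleton is the paper's: use the clustering bound \eqref{sublthm} of Theorem~\ref{clusthm} (after the $O(N\delta)$ correction from $\omega-\omega^{(0)}$) to isolate at most $2b$ resonant sites, invert the complementary block by Neumann series, pass to the $\leq 2b\times 2b$ Schur complement $S(\sigma)$, and turn a lower bound on $|\det S(\sigma)|$ into the Green's function bound. Your one genuine departure --- controlling $\det S(\sigma)$ by the derivative-transversality Lemma~\ref{lajlem} (the $\#B$-th $\sigma$-derivative of the near-product of resonant linear factors is $\gtrsim 1$, with Cauchy estimates for the Schur correction) instead of the paper's Rouch\'e/maximum-principle factorization into zeros $z_l$ --- is a viable and arguably simpler route. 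However, two steps as written fail. First, the measure accounting: adjoining the exceptional set of Lemma~\ref{inilem} at scale $N_1\sim\log^{1/\rho_1}\frac1{\varepsilon+\delta}$ puts a set of measure up to $e^{-N_1^{\rho_1}}$, i.e.\ a \emph{fixed power} of $\varepsilon+\delta$, into $\Sigma_N$, whereas for $N$ near the top of the intermediate window the claimed bound $(\varepsilon+\delta)^{-9/10}e^{-\frac1{2b}N^{\rho_2/2}}$ is sub-exponentially small in $(\varepsilon+\delta)^{-1}$; your remark about absorbing polynomial prefactors does not touch this additive term. The $N_1$-gluing is in any case both unjustified (Lemma~\ref{inilem} only covers boxes with $|n|\leq 2N_1$ near $k=0$, not sub-boxes at general positions in $\Lambda$) and unnecessary: the whole point of the intermediate regime is that the resonance threshold can be taken $(\varepsilon+\delta)^{1/(8b)}/5$ as in the paper, far above the perturbation $O(\varepsilon+\delta)$, so a single Neumann series inverts the good block for \emph{every} $\sigma$ and every $\omega\in\Omega$ with no excluded set; your $\eta_1=e^{-2N_1^{\rho_1}}$ only works if the implicit constant in $N_1$ makes $\eta_1\gg\varepsilon+\delta$, which you never state.

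Second, the off-diagonal decay. A norm bound on $S^{-1}$ ``propagating through the rank-$\leq 2b$ insertion'' is not enough: in the block-inverse formula the correction at $(j,j')$ contains terms of size $\sim(\varepsilon+\delta)^2|S^{-1}(b_1,b_2)|$ with $b_1$ adjacent to $j$ and $b_2$ adjacent to $j'$, and when the resonant sites are far apart these carry no decay in $|j-j'|$ unless you separately prove exponential decay of the \emph{off-diagonal entries} of $S^{-1}$ (via decay of the entries of $S$ and a cofactor expansion), or, as the paper does in Step~2, run the coupling lemma of \cite{HSSY} using the a priori bound $\|G_\Lambda(\sigma)\|\leq e^{N^{\rho_2}}$, the fact that any family of pairwise disjoint singular $\sqrt N$-regions has cardinality $\leq 2b$, and $\rho_3>\rho_2$; this is also why the paper's $\Sigma_N$ is a union over all scales in $[\sqrt N,N]$, a feature your construction omits. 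Finally a quantitative slip: excluding only $|\det S|<e^{-bN^{\rho_2}}$ gives, by Cramer and Hadamard, $\|S^{-1}\|\leq C(b)e^{bN^{\rho_2}}$, which together with the $(\varepsilon+\delta)^{-C}$ Schur prefactor exceeds the target $e^{N^{\rho_2}}$ already for $b\geq1$; the threshold must be $e^{-cN^{\rho_2}}$ with $c<1$, and to recover the stated exponent $\frac1{2b}$ you should apply Lemma~\ref{lajlem} at order $r=\#B\leq 2b$ (only the resonant factors degenerate), not $r=4b$. With these repairs your argument closes, essentially reproducing the paper's proof with the Rouch\'e step replaced by transversality.
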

\begin{rem}
Note that this lemma also works for  all  $\omega\in\Omega$. Its proof is somewhat independent of the other parts of the paper and can be read separately.
\end{rem}

\subsubsection{Proof of the intermediate scales LDE}
%\begin{proof}
The proof can be decomposed into two steps.  {The first step aims to construct $\Sigma_N$ and control ${\rm meas}(\Sigma_N)$  so that  $\|G_\Lambda(\sigma)\|\leq e^{N^{\rho_2}}$ for $\sigma\notin\Sigma_N.$   For this purpose, we first use the Neumann series argument (similar to the proof of Lemma \ref{inilem}) to cover $\Sigma_N$ with intervals of small length. Next, on each of these small  intervals, we combine  the counting bound related to   $\omega^{(0)}$ (cf. \eqref{sublthm}), the Schur's complement argument and the Rouch\'e's theorem  to remove singular   $\sigma$ (i.e.,  $\sigma$ so that $\|G_\Lambda(\sigma)\| >e^{({\rm diam} \Lambda )^{\rho_2}}$).  This  second reduction is necessary since $N\geq c\log^{\frac{1}{\rho_1}}\frac{1}{\varepsilon+\delta}$,  and it  shows  that  the singular $\sigma$ must  belong to  a finitely  (at most $2b$) many subintervals with much smaller  length.    To apply the Schur's complement reduction,  we again use the Neumann series argument.}  The second step aims to establish exponential off-diagonal decay of Green's functions, which relies on  an  application of the coupling lemma (in the spirit of iterating the resolvent identities)  in  \cite{HSSY}. \\
{\bf Step 1.}  
We  define %for $ (\frac{1}{16b})^{\frac{1}{\rho_1}}\log^{\frac{1}{\rho_1}}|\varepsilon+\delta|\leq N\leq L,$
\begin{align}\label{intsgmn}
\Sigma_N=\bigcup_{\sqrt{N}\leq N'\leq N,\ \Lambda\in (0,n')+\mathcal{ER}_0(N'),\ |n'|\leq 2N'}\Sigma_\Lambda,%\left\ %{\sigma\in\R:\  \|G_\Lambda(\sigma)\|\geq  e^{|\Lambda|^{\rho_2}}\right\}
\end{align}
%We are going to show in this step %for each $\Lambda\in \mathcal{ER}_0(N')$, 
%\begin{align*}
%{\rm meas}\left(\Sigma_\Lambda\right)\leq 4b (\varepsilon+\delta)^{-\frac{1}{3b}}e^{-\frac{1}{2b}{|\Lambda|}^{\rho_2}}.
%\end{align*}
%where 
where
\begin{align*}
\Sigma_\Lambda=\left\{\sigma\in\R:\  \|G_\Lambda(\sigma)\|\geq  e^{({\rm diam}\ \Lambda)^{\rho_2}}\right\}.
\end{align*}
We first show that 
\begin{align}\label{set}
\Sigma_\Lambda\subset \bigcup_{\xi=\pm 1, i=(k,n)\in\Lambda} I_{\xi, i},
\end{align}
where 
\begin{align*}
I_{\xi, i}=\left\{\sigma\in\R:\ |\xi(\sigma+k\cdot\omega^{(0)})+\mu_n|\leq \frac{1}{100}(\varepsilon+\delta)^{\frac{1}{7b}}\right\}.
\end{align*}
In fact, if $\sigma\notin  \bigcup\limits_{\xi=\pm 1, i=(k,n)\in\Lambda} I_{\xi, i},$ then for each $\omega\in\Omega\ ({\rm cf.}\ \eqref{omgdefn}),$
\begin{align*}
&\ \ \ \ \inf_{\xi=\pm 1, i=(k,n)\in\Lambda} |\xi(\sigma+k\cdot\omega)+\mu_n|\\
&\geq \inf_{\xi=\pm 1, i=(k,n)\in\Lambda} |\xi(\sigma+k\cdot\omega^{(0)})+\mu_n|-\sup_{k\in\Lambda}|k\cdot(\omega-\omega^{(0)})|\\
&\geq \frac{1}{100}(\varepsilon+\delta)^{\frac{1}{7b}}-C(\varepsilon+\delta)^{-\frac{1}{10^4db^4}}\delta\\
&\geq \frac{1}{200}(\varepsilon+\delta)^{\frac{1}{7b}}.%\geq e^{-\frac12N^{\rho_2}}
\end{align*}
Since $\rho_2>2\rho_1$ and $N\geq c \log^{\frac{1}{\rho_1}} \frac{1}{\varepsilon+\delta}$, using the Neumann series argument  (similar to the proof of Lemma \ref{inilem})  implies  that
\begin{align*}
\|G_\Lambda(\sigma)\|\leq C(\varepsilon+\delta)^{-\frac{2}{7b}}<e^{({\rm diam}\ \Lambda)^{\rho_2}}
\end{align*}
when  $\sigma\notin  \bigcup\limits_{\xi=\pm 1, i=(k,n)\in\Lambda} I_{\xi, i}.$ This proves \eqref{set} and 
\begin{align*}
\Sigma_\Lambda=\bigcup_{\xi=\pm 1, i\in\Lambda}(\Sigma_\Lambda\cap I_{\xi,i}).
\end{align*}

In the following we estimate ${\rm meas}(\Sigma_\Lambda\cap I_{\xi,i}).$ For this purpose, we  fix any $\xi^*\in\{\pm1\}$, $i^*=(k^*,n^*)\in\Lambda$ and denote 
$$ I^*=I_{\xi^*, i^*}=\left[\sigma^*-\frac{1}{100}(\varepsilon+\delta)^{\frac{1}{7b}}, \sigma^*+\frac{1}{100}(\varepsilon+\delta)^{\frac{1}{7b}}\right],$$
where 
\begin{align}\label{sgm*}
\sigma^*=-k^*\cdot\omega^{(0)}-\frac{1}{\xi^*}\mu_{n^{*}}.
\end{align}
We will study $G_\Lambda(\sigma)$ for $\sigma\in I^*.$ From (4) of Theorem \ref{clusthm} and \eqref{etaed}, we have for  $\sigma^*\in\R$,  there exists $B_*=B_*^+\cup B_*^-\subset \Lambda$ with $\# B_*^{\pm}\leq b$ so that
\begin{align*}
\inf_{\xi=\pm1, i=(k,n)\in\Lambda\setminus B_*}|\xi(\sigma^*+k\cdot\omega^{(0)})+\mu_n|\geq \frac{(\varepsilon+\delta)^{\frac{1}{8b}}}{2}.
\end{align*}
Define for $r>0$ and $z_0\in \C$ the set ${\mathbb D}_r(z_0)=\{z\in\C:\ |z-z_0|\leq r\}$.  We claim  that if 
$z\in \mathbb{D}_{(\varepsilon+\delta)^{\frac{1}{8b}}/4}(\sigma^*),$ %:=\{z\in\C:\ |z-\sigma^*|\leq (\varepsilon+\delta)^{\frac{1}{8b}}/4\},$
then
\begin{align}\label{1221lbd}
\inf_{\xi=\pm1, i=(k,n)\in\Lambda\setminus B_*}|\xi(z+k\cdot\omega)+\mu_n|\geq \frac{(\varepsilon+\delta)^{\frac{1}{8b}}}{5}.
\end{align}
In fact,  we have
\begin{align*}
&\ \ \ \ \inf_{\xi=\pm 1, i=(k,n)\in\Lambda\setminus B_*} |\xi(z+k\cdot\omega)+\mu_n|\\
&\geq \inf_{\xi=\pm 1, i=(k,n)\in\Lambda\setminus B_*} |\xi(\sigma^*+k\cdot\omega^{(0)})+\mu_n|-\sup_{k\in\Lambda\setminus B_*}|k\cdot(\omega-\omega^{(0)})|-|z-\sigma^*|\\
&\geq \frac{1}{2}(\varepsilon+\delta)^{\frac{1}{8b}}-C(\varepsilon+\delta)^{-\frac{1}{10^4db^4}}\delta-\frac{1}{4}(\varepsilon+\delta)^{\frac{1}{8b}}\\
&\geq \frac{1}{5}(\varepsilon+\delta)^{\frac{1}{8b}}.%\geq e^{-\frac12N^{\rho_2}}
\end{align*}
Recalling \eqref{hsgmdefn} and \eqref{dsgmdefn}, we let  $H(z)$ and  $D(z)$  be  the  analytic extensions  to   $z\in\C$.  {Denote further $H_\Lambda(z)=R_\Lambda H(z)R_\Lambda$ and  $D_\Lambda(z)=R_\Lambda D(z)R_\Lambda$}. So we have for $z\in\mathbb{D}_{(\varepsilon+\delta)^{\frac{1}{8b}}/4}(\sigma^*),$   
\begin{align*}
\sup_{j,j'\in\Lambda,\ z\in\mathbb{D}_{(\varepsilon+\delta)^{\frac{1}{8b}}/4}(\sigma^*)}|{D}_\Lambda(z)(j;j')|\leq C(b)(\varepsilon+\delta)^{-\frac{1}{10^4db^4}}. 
\end{align*}
 More importantly, if we denote $\Lambda^c=\Lambda\setminus B_*$ and $G_{\Lambda^c}(z)=(R_{\Lambda^c}H(z)R_{\Lambda^c})^{-1}$, then by  \eqref{1221lbd} and the  Neumann series argument,  we have  for all $z\in \mathbb{D}_{(\varepsilon+\delta)^{\frac{1}{8b}}/4}(\sigma^*),$
\begin{align*}
\|G_{\Lambda^c}(z)\|\leq 50(\varepsilon+\delta)^{-\frac{1}{4b}}.
\end{align*}
In the following we estimate $G_{\Lambda}(z)$ using the Schur's  complement reduction.  We define 
\begin{align*}
S(z)=R_{B_*}{H}(z)R_{B_*}-R_{B_*}{H}(z)R_{\Lambda^c}G_{\Lambda^c}(z)R_{\Lambda^c}{H}(z)R_{B_*}.
\end{align*}
We have
\begin{align}
\nonumber R_{B_*}{H}(z)R_{B_*}&=R_{B_*}{D}(z)R_{B_*}+R_{B_*}(\varepsilon\Delta+\delta{T}_\phi )R_{B_*}\\
\label{schureq1}&=R_{B_*}{D}(z)R_{B_*}+O(\varepsilon+\delta)
\end{align}
and 
\begin{align}\label{schureq2}
R_{B_*}{H}(z)R_{\Lambda^c}G_{\Lambda^c}(z)R_{\Lambda^c}{H}(z)R_{B_*}=O((\varepsilon+\delta)^{2-\frac{1}{4b}}).
\end{align}
Now we consider  the determinant  $\det S(z)$. Combining \eqref{schureq1},  \eqref{schureq2}  and $\omega=\omega^{(0)}+O(\delta)$  yields 
\begin{align*}
&\ \ \ \det S(z)\\
&= \prod_{ i=(k,n)\in B_*^+}(z+k\cdot\omega^{(0)}+\mu_n)\prod_{i=(k,n)\in B_*^-}(-z-k\cdot\omega^{(0)}+\mu_n)\\
&\ \ \ +O((\varepsilon+\delta)^{3/4})\\
&{=(-1)^{\#B_*^-}}\prod_{l=1}^{\# B_*}(z-\sigma_l)+O((\varepsilon+\delta)^{3/4}),
\end{align*}
where  for each $1\leq l\leq \#B_*,$
{$$\sigma_l\in \{-k\cdot\omega^{(0)}-\mu_n\}_{(k,n)\in B_*^+}\cup\{-k\cdot\omega^{(0)}+\mu_n\}_{(k,n)\in B_*^-}.$$}
From  the definition of $B_*$, we have  for all $z\in \mathbb{D}_{(\varepsilon+\delta)^{\frac{1}{8b}}/4}(\sigma^*), $
\begin{align*}
|z-\sigma_l|\leq \frac45(\varepsilon+\delta)^{\frac{1}{8b}}.
\end{align*}
From \eqref{sgm*}, there exists at least one $\sigma_l$ so that $|\sigma_l-\sigma^*|=O((\varepsilon+\delta)^{\frac45}).$

%At this stage,  we need a useful result. \\
%\begin{lem}\label{pigelem}
%{
%There is $1\leq l_*\leq 5b$ so that
%\begin{align*}
%\left(\mathbb{A}_{l_*}\setminus\mathbb{A}_{l_*-1} \right)\cap \{\sigma_l\}_{1\leq l\leq \# B_*}=\emptyset, 
%\end{align*}
%where for $0\leq l\leq 5b$,
%\begin{align*}
%\mathbb{A}_l=\mathbb{D}_{(\frac{1}{10}+\frac{l}{50b})(\varepsilon+\delta)^{\frac{1}{8b}}}(\sigma^*).
%\end{align*}
%}
%\end{lem}
%\begin{proof}[Proof of Lemma \ref{pigelem}]
%{Note that 
%\begin{align*}
%\mathbb{A}_{5b}\setminus\mathbb{A}_{0}=\bigcup_{l=1}^{5b}\left(\mathbb{A}_l\setminus \mathbb{A}_{l-1}\right) 
%\end{align*}
%and $\mathbb{A}_l\setminus \mathbb{A}_{l-1}$ are disjoint for different $1\leq l\leq 5b$. However, we have  at most $\#B_*\leq 2b$ many $\sigma_l$'s.  
% This   implies that  there exists  one $1\leq l_*\leq 5b$ so that $\left(\mathbb{A}_{l_*}\setminus\mathbb{A}_{l_*-1} \right)\cap \{\sigma_l\}_{1\leq l\leq \# B_*}=\emptyset$ (otherwise,  we may have $5b$ many different $\sigma_l$'s). } 
%\end{proof}
%From Lemma \ref{pigelem}, we can restrict our considerations on  $\mathbb{A}_*$ with 
%\begin{align*}
%\mathbb{A}_*=\mathbb{D}_{(\frac{1}{10}+\frac1{100b}+\frac{l_*-1}{50b})(\varepsilon+\delta)^{\frac{1}{8b}}}(\sigma^*). 
%\end{align*}
%Then
%\begin{align*}
%\mathbb{D}_{\frac{1}{10}(\varepsilon+\delta)^{\frac{1}{8b}}}(\sigma^*)\subset\mathbb{A}_{l_*-1}\subset\mathbb{A}_*\subset \mathbb{D}_{\frac{1}{5}(\varepsilon+\delta)^{\frac{1}{8b}}}(\sigma^*).
%\end{align*}
%Let 
%\begin{align*}
%K=\{1\leq l\leq \# B_*:\  \sigma_l\in \mathbb{A}_{l_*-1}\}.
%\end{align*}
At this stage,  we need a useful result. 
\begin{lem}\label{pigelem}
{ Let $\varphi=\left(\frac{14}{15}\right)^{\frac{1}{5b}}<1$. 
There is $0\leq l_*\leq 5b-1$ so that
\begin{align*}
\left(\mathbb{A}_{l_*+1}\setminus\mathbb{A}_{l_*}\right)\cap \{\sigma_l\}_{1\leq l\leq \# B_*}=\emptyset, 
\end{align*}
where for $0\leq l\leq 5b$,
\begin{align*}
\mathbb{A}_l=\mathbb{D}_{(\varepsilon+\delta)^{\frac{\varphi^{l}}{7b}}}(\sigma^*).
\end{align*}
}
\end{lem}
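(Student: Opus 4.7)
The plan is a straightforward pigeonhole argument. The key observations are that the family of sets $\{\mathbb{D}_l(\sigma^*) \setminus \mathbb{D}_{l-1}(\sigma^*)\}_{1\leq l\leq 5b}$ consists of $5b$ \emph{pairwise disjoint} concentric annuli (since $\mathbb{D}_{l-1}(\sigma^*)\subsetneq\mathbb{D}_{l}(\sigma^*)$, the radii growing arithmetically by a step of $\frac{1}{50b}(\varepsilon+\delta)^{1/(8b)}$), while the roots $\{\sigma_l\}_{1\leq l\leq\#B_*}$ constitute a finite set of at most $\#B_*\leq\#B_*^++\#B_*^-\leq 2b$ points.

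Since $2b$ points can land in at most $2b$ of the $5b$ disjoint annuli, at least $5b-2b=3b\geq 1$ annuli must be empty of the $\sigma_l$'s. Choosing any such index yields an $l_*\in\{1,\ldots,5b\}$ with
\[
\bigl(\mathbb{D}_{l_*}(\sigma^*)\setminus\mathbb{D}_{l_*-1}(\sigma^*)\bigr)\cap\{\sigma_l\}_{1\leq l\leq \#B_*}=\emptyset,
\]
as required.

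There is essentially no obstacle: the only substantive input is the cardinality bound $\#B_*\leq 2b$, which was established in the preceding derivation via Theorem~\ref{clusthm}(4). The reason one takes $5b$ annuli (rather than just, say, $2b+1$) is presumably not needed for the existence of $l_*$ itself but will be used downstream to have a sufficiently wide annular buffer region when applying Cartan-type / resolvent estimates to $\det S(z)$; that, however, is a concern for the next step, not for the current lemma.
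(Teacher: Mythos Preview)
Your proof is correct and is essentially identical to the paper's own argument: the paper simply notes that $\mathbb{D}_{5b}(\sigma^*)\setminus\mathbb{D}_{0}(\sigma^*)$ decomposes as the disjoint union of the $5b$ annuli $\mathbb{D}_l(\sigma^*)\setminus\mathbb{D}_{l-1}(\sigma^*)$ and invokes the pigeonhole principle using $\#B_*\leq 2b$.
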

\begin{proof}[Proof of Lemma \ref{pigelem}]
It suffices to note that 
\begin{align*}
\mathbb{A}_{5b}\setminus\mathbb{A}_{0}=\bigcup_{l=0}^{5b-1}\left(\mathbb{A}_{l+1}\setminus \mathbb{A}_{l}\right). 
\end{align*}
The proof then follows from the pigeonhole principle since $\#B_*\leq 2b$. 
\end{proof}
Let $r_*=(\varepsilon+\delta)^{\frac{\varphi^{l_*}}{7b}}$, where $\varphi$ and $l_*$ are defined in Lemma \ref{pigelem}. From Lemma \ref{pigelem}, we can restrict our considerations on  $\mathbb{A}_*$ with 
\begin{align*}
\mathbb{A}_*=\mathbb{D}_{\frac{r_*+r_*^\varphi}{2}}(\sigma^*). 
\end{align*}
Then
\begin{align*}
\mathbb{D}_{(\varepsilon+\delta)^{\frac{1}{7b}}}(\sigma^*)\subset\mathbb{A}_{l_*}\subset\mathbb{A}_*\subset \mathbb{D}_{(\varepsilon+\delta)^{\frac{1}{7.5b}}}(\sigma^*).
\end{align*}
Let 
\begin{align*}
 K=\{1\leq l\leq \# B_*:\  \sigma_l\in \mathbb{D}_{10r_*}(\sigma^*)\}.
\end{align*}

We have $K\neq \emptyset$ and for $l\not\in K$,
\begin{align*}
\sigma_l\notin\mathbb{A}_*,\ {\rm dist}(\sigma_l, \partial\mathbb{A}_*)\geq (\varepsilon+\delta)^{\frac{1}{7b}},
\end{align*}
where $\partial (\cdot)$ denotes the boundary of a set. 
As a result, we can write for $z\in\mathbb{A}_*,$
\begin{align}\label{Rouceq1}
\frac {\det S(z)}{\prod_{l\notin K}(z-\sigma_l)}= (-1)^{\#B_*^-}{\prod_{l\in K}(z-\sigma_l)}+O((\varepsilon+\delta)^{\frac13}).
\end{align}
We have 
\begin{align*}
\min_{z\in \partial\mathbb{A}_*}{\prod_{l\in K}|z-\sigma_l|} \geq (\frac1{100b})^{\#K}(\varepsilon+\delta)^{\frac{\#K}{8b}}\geq c(\varepsilon+\delta)^{1/4}\gg C(\varepsilon+\delta)^{\frac13}.
\end{align*}
Using the Rouch\'e's theorem shows that  the function $\frac {\det S(z)}{\prod_{l\notin K}(z-\sigma_l)}$ has exactly $K$ zeros (denoted by $z_l,\ l\in K$)  in $\mathbb{A}_*.$   %By \eqref{Rouceq1}, we also get 
%\begin{align}\label{Rouceq2}
%|\sigma_l-z_l|=O((\varepsilon+\delta)^{\frac12}).
%\end{align}
So on $\mathbb{A}_*$, we have 
\begin{align}\label{Rouceq3}
\frac {\det S(z)}{\prod_{l\notin K}(z-\sigma_l)}=g(z)\prod_{l\in K}(z-z_l),
\end{align}
where $g$ is analytic on $\mathbb{A}_*$ with $\inf\limits_{z\in\mathbb{A}_*}|g(z)|>0.$ Using \eqref{Rouceq3}, we have  for all $z\in\partial \mathbb{A}_*$ and some $C=C(b)>1,$
\begin{align*}
C^{-1}\leq |g(z)| \leq C
\end{align*}
assuming $0<\varepsilon+\delta\leq c(b)\ll1.$
Then using the maximum principle, we get for all $z\in  \mathbb{A}_*,$
\begin{align*}
C^{-1}\leq |g(z)| \leq C. 
\end{align*}
%We have established on $\mathbb{A}_*,$
%\begin{align*}
%\det S(z)\=C \prod_{l\notin K}(z-\sigma_l)\prod_{l\in K}(z-z_l), 
%\end{align*}
In particular, one has for all $z\in\mathbb{A}_*,$
\begin{align*}
|\det S(z)|&\geq C(b) (\varepsilon+\delta)^{\frac{\#B_*-\#K}{7b}}\prod_{l\in K}|z-z_l|\\
&\geq  (\varepsilon+\delta)^{\frac{2}{7}}\prod_{l\in K}|z-z_l|
\end{align*}
assuming $0<\varepsilon+\delta\leq c(b)\ll1.$ Thus using  Hadamard's inequality and Cramer's rule, we get    for $z\in\mathbb{A}_*,$
\begin{align*}
\|S^{-1}(z)\|&\leq C(b)    (\varepsilon+\delta)^{-\frac{2b}{10^4db^4}}(\varepsilon+\delta)^{-\frac{2}{7}}\prod_{l\in K}|z-z_l|^{-1}\\
&\leq (\varepsilon+\delta)^{-\frac{1}{3}}\prod_{l\in K}|z-z_l|^{-1}.
\end{align*}
Applying the Schur's complement argument  (cf. Lemma B.1 in \cite{CSZ24}) shows that for $z\in \mathbb{A}_*,$
\begin{align*}
\|G_\Lambda(z)\|&\leq 4(1+\|G_{\Lambda^c}(z)\|)^2(1+\|S^{-1}(z)\|)\\
&\leq (\varepsilon+\delta)^{-\frac{7}{8}}\prod_{l\in K}|z-z_l|^{-1}.
\end{align*}
Recall that $I^*\subset \mathbb{A}_*$. So for $\sigma\in I^*,$ we have
\begin{align*}
\|G_\Lambda(\sigma)\|&\leq (\varepsilon+\delta)^{-\frac{7}{8}}\prod_{l\in K}|\sigma-z_l|^{-1} 
\end{align*}
and 
\begin{align*}
\Sigma_\Lambda\cap I^*\subset \bigcup_{l\in K}\left\{\sigma\in I^*:\ |\sigma-z_l|\leq (\varepsilon+\delta)^{-\frac{7}{8}}e^{-\frac{1}{2b}({\rm diam}\ \Lambda)^{\rho_2}}\right\}.
\end{align*}
This implies 
\begin{align*}
{\rm meas}(\Sigma_\Lambda\cap I^*)\leq C(\varepsilon+\delta)^{-\frac{7}{8}}e^{-\frac{1}{2b}({\rm diam}\ \Lambda)^{\rho_2}}
\end{align*}
and thus 
\begin{align*}
{\rm meas}(\Sigma_\Lambda)\leq (\varepsilon+\delta)^{-\frac{8}{9}}e^{-\frac{1}{2b}({\rm diam}\ \Lambda)^{\rho_2}},
\end{align*}
which combined with \eqref{intsgmn}  shows 
\begin{align*}
{\rm meas}(\Sigma_N)\leq (\varepsilon+\delta)^{-\frac{9}{10}}e^{-\frac{1}{2b}N^{\frac{\rho_2}{2}}}.
\end{align*}

\noindent {\bf Step 2.} In this step we prove the exponential off-diagonal decay of $G_\Lambda(\sigma)$ assuming $\sigma\not\in\Sigma_{N}$ and $\Lambda\in(0,n')+\mathcal{ER}_0(N)$ with $|n'|\leq 2N.$  By Theorem \ref{clusthm}, we know that there exists some  $B\subset\Lambda_L\cap \Lambda$ so that  $\#B\leq 2b$ and 
\begin{align*}
\inf_{\xi=\pm1, (k,n)\in\Lambda\setminus B}|\xi(\sigma+k\cdot\omega)+\mu_n|\geq \frac{(\varepsilon+\delta)^{\frac{1}{8b}}}{5}.
\end{align*}
Using again the Neumann series argument  implies that for any $\Lambda'\subset\Lambda$ with $\Lambda'\cap B=\emptyset,$ one has 
\begin{align}
\label{intexp1}\|G_{\Lambda'}(\sigma)\|&\leq 50(\varepsilon+\delta)^{-\frac{1}{4b}},\\
\label{intexp2}|G_{\Lambda'}(\sigma)(j;j')|&\leq  (\varepsilon+\delta)^{-\frac{1}{4b}}e^{-\gamma|j-j'|}\ {\rm for}\ j\neq j'.
\end{align}
We say that $\mathcal{ER}(\sqrt{N})\ni Q\subset\Lambda$ is $\sqrt{N}$-regular if both \eqref{intexp1} and \eqref{intexp2} hold with $\Lambda'=Q.$  Otherwise, $Q$ is called $\sqrt{N}$-singular.  Let $\mathcal{F}$ be any family of pairwise disjoint $\sqrt{N}$-singular sets   contained in $\Lambda$. We obtain 
\begin{align}\label{1221subl}
\#\mathcal{F}\leq 2b. 
\end{align}
 Since $\sigma\not\in \Sigma_N$, we have for all $\Lambda'\in \mathcal{ER}(L), \Lambda'\subset \Lambda$ with $\sqrt{N}<L<N$, 
\begin{align}\label{1221glb}
\|G_{\Lambda'}(\sigma)\|\leq e^{L^{\rho_2}}. 
\end{align}
{We are ready  to use the coupling lemma (i.e., Lemma 4.2) of \cite{HSSY} to get the exponential off-diagonal decay of $G_\Lambda(\sigma)$.  So it needs to verify all assumptions of Lemma 4.2 \cite{HSSY}:  The assumption  (i) of Lemma 4.2  follows from \eqref{pdecay}; the assumption (ii) of Lemma 4.2  follows from  \eqref{intexp1}, \eqref{intexp2} and \eqref{1221glb}; the assumption (iii) of Lemma 4.2  follows from \eqref{1221subl}.  
Finally,  applying  Lemma 4.2 of \cite{HSSY}  with $T=H_{\Lambda}(\sigma), \tau=\frac12, b=\rho_2>\frac12, \theta=\rho_3, \rho=\alpha=\gamma$ shows 
\begin{align*}
|G_\Lambda(\sigma)(j;j')|\leq e^{-\gamma_N|j-j'| } \ {\rm for}\  |j-j'|>N^{\rho_3},
\end{align*}
where 
\begin{align*}
\gamma_N=\gamma-\log^{-2}\frac{1}{\varepsilon+\delta}-N^{-\kappa},\ \kappa=\kappa(b, d, \rho_2,\rho_3)>0.
\end{align*}
This proves the exponential off-diagonal decay. }

We  have completed the proof of Lemma \ref{intlem}. 
\hfill $\square$

%\end{proof}

	\subsubsection{The large scales LDE}
	In this section we will prove the LDE for 
	\begin{align*}
	(\varepsilon+\delta)^{-\frac{1}{10^4db^4}}<N<\infty.
	\end{align*}
%For this, we let $\omega$ satisfy the Diophantine condition, i.e., $\omega\in {\rm DC}(N)$ (cf. \eqref{dcomg}). 
%\begin{align*}
%\omega\in {\rm DC}_\omega(N):=\left\{\omega\in\Omega:\ |k\cdot\omega|\geq \frac{1}{N^{10^5db^5}} \ {\rm for}\ \forall\ 0<|k|\leq 10N\right\}.
%\end{align*}
%It is easy to check %since $\varepsilon\leq \delta^{\frac14},$
%\begin{align}\label{measomg}
%{\rm meas}(\Omega\setminus{\rm DC}(N))\leq CN^bN^{-{10^5db^5}}\leq (\varepsilon+\delta)^{8b}.%\ll\delta^b.
%\end{align}
%This explains why we  perform  the  intermediate scales LDE analysis. 
We  remark that since the operator ${D}+\varepsilon \Delta$ is \textit{non-diagonal}  in the  $n\in\Z^d$ direction, 
to establish LDT,  it requires fine properties of such  operators projected on $\Z^d$. %In particular, an additional restriction on $\alpha\in[0, 1]$ is necessary.  
{So denote by $\mathcal{ER}_d(M)$ (resp. $\mathcal{ER}_{d,0}(M)$) the set of all elementary $M$-regions (resp. $M$-regions of the center being $0$) on $\Z^d$}. 
For $E\in\R$ and $Q\subset\Z^d,$ we let 
\begin{align}\label{TQ}
\widetilde T_Q(E;\theta)= R_Q(\cos(\theta+n\cdot\alpha)\delta_{n,n'}+m-E+\varepsilon\Delta)R_Q.
%R_{Q}\left(
	%\begin{array}{cc}
 %-E\cdot V(\theta)+V^2(\theta)+\frac\varepsilon 2\Delta&  \frac\varepsilon2 \Delta\\
%  \frac\varepsilon2 \Delta&  E\cdot V(\theta)+V^2(\theta)+\frac\varepsilon2\Delta
%\end{array}\right)R_{Q}.
\end{align}

We have
\begin{lem}\label{Boulem}
Fix $\alpha\in{\rm DC}_{\nu}$. Let  $0<\varepsilon\ll1$ and $N\geq N_0\gg1.$ %there is $\mathcal{A}_N\subset [0,1]^d$ with ${\rm meas}(\cap_{N\geq N_0}\mathcal{A}_N)=1-o(1)$  (as $\varepsilon\to 0$) so that the following holds true if $\alpha\in \mathcal{A}_N.$ Let  $N\geq N_0.$  
Then 
%\begin{itemize}
%\item[(1).] There exist $0<c_1<c_2<1$ so that  for all $E\in\R$ and $\theta\in[0,1]$, there is $L\in[N^{c_1}, N^{c_2}]$ s.t., for all $n\in\Z^d$ and all $Q\in\mathcal{ER}_d(L_1) $ ($L_1\sim (\log N)^C$) with $$Q\subset (n+[-L,L]^d)\setminus(n+[-L^{\frac{1}{10d}}, L^{\frac{1}{10d}}]^d),$$
%one has 
%\begin{align*}
%\|\widetilde T^{-1}_{Q}(E;\theta)\|&\leq e^{\sqrt{L_1}},\\
%|\widetilde T^{-1}_{Q}(E;\theta)(n;n')|&\leq e^{-\frac12|\log\varepsilon|\cdot|n-n'|}\ {\rm for}\ |n-n'|\geq {L_1}^{\rho_3}.
%\end{align*}
%\item[(2).] 
for all $E\in \R$, there is a set $\Theta_N=\Theta_N(E)\subset\R$ with ${\rm meas}(\Theta_N)\leq e^{-N^{\rho_4}}$, so that if $\theta\notin\Theta_N,$ then for all $Q\in\mathcal{ER}_{d,0}(N),$
\begin{align*}
\|\widetilde T^{-1}_{Q}(E;\theta)\|&\leq e^{\sqrt{N}},\\
|\widetilde T^{-1}_{Q}(E;\theta)(n;n')|&\leq e^{-\frac12|\log\varepsilon|\cdot|n-n'|}\ {\rm for}\ |n-n'|\geq {N}^{\rho_3},
\end{align*}
where $0<\rho_4\ll1.$
%\end{itemize}

\end{lem}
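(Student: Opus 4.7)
The plan is to prove Lemma \ref{Boulem} by a multiscale analysis (MSA) for the quasi-periodic Schr\"odinger operator $\widetilde T_Q(E;\theta)$ on $\Z^d$ in the perturbative regime $\varepsilon\ll 1$, in the spirit of \cite{BGS02, Bou05}. Since $E$ is fixed and $\theta$ is the only running parameter, this is essentially the standard proof of Anderson localization for the cosine quasi-periodic Schr\"odinger operator at small coupling; the decay rate $\frac12|\log\varepsilon|$ reflects the perturbative regime in which the diagonal part dominates away from resonances.

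At the initial scale $N_0$, I would put
\[
\Theta_{N_0}(E)=\bigcup_{|n|\leq 10N_0}\{\theta\in\R:\,|\cos(\theta+n\cdot\alpha)+m-E|<\varepsilon^{1/2}\},
\]
which has measure $\lesssim N_0^d\sqrt\varepsilon$ by explicit inversion of the cosine. Outside this set the diagonal of $\widetilde T_Q(E;\theta)$ is bounded below by $\sqrt\varepsilon$, while the off-diagonal perturbation $\varepsilon\Delta$ has norm $O(\varepsilon)$, so a Neumann series gives $\|\widetilde T_Q^{-1}\|\leq 2\varepsilon^{-1/2}$ and pointwise decay at rate $|\log\varepsilon|-O(1)$, comfortably above the target $\frac12|\log\varepsilon|$.

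To upgrade from a scale $M$ at which the LDE is already established to a scale $N$ with $M\ll N\lesssim M^C$, I would tile an elementary region $Q\in\mathcal{ER}_{d,0}(N)$ by sub-regions $Q_j\in\mathcal{ER}_d(M)$ centered at lattice points $n_j\in Q$. Under the quasi-periodic identification $\theta\mapsto \theta+n_j\cdot\alpha$, the sub-region resolvent $\widetilde T_{Q_j}^{-1}(E;\theta)$ is governed by the inductive exceptional set shifted by $n_j\cdot\alpha$; declare $Q_j$ \emph{bad} if the shifted parameter falls in $\Theta_M(E)$, and \emph{good} otherwise. The key step is a sublinear bound on the number of bad $Q_j$ along any axis-parallel line in $Q$, obtained by combining $\alpha\in\mathrm{DC}_\nu$ with the semi-algebraic-set lemma of \cite{Bou05} (Ch.~9): a matrix-valued Cartan estimate forces $\Theta_M(E)$ to lie in a semi-algebraic set of polynomial complexity in $M$, and a Diophantine orbit $\{\theta+n\cdot\alpha\}_n$ cannot visit such a set too often. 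Given the sublinear bound, the resolvent identity together with the coupling lemma of \cite{HSSY} patches the good sub-region resolvents into an estimate for $\widetilde T_Q^{-1}(E;\theta)$ of the claimed form, with decay rate degraded by only a factor $1-M^{-\kappa'}$. Finitely many such iterations reach the target scale $N$ while the union of excluded $\theta$-sets remains of measure $\leq e^{-N^{\rho_4}}$.

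The main obstacle is the sublinear-bound step, which rests on two delicate ingredients: the matrix-valued Cartan estimate controlling the algebraic complexity of the inductive exceptional set, and the Diophantine-orbit lemma translating a small-measure semi-algebraic set in $\theta$-space into a spatial count of bad lattice sites. Once this is in place, the remaining work is careful bookkeeping of the exponents to ensure that the norm bound $e^{\sqrt N}$, the decay rate $\frac12|\log\varepsilon|$, and the measure bound $e^{-N^{\rho_4}}$ all close up across the induction; none of these is tight, and the margins easily absorb the per-step losses so long as $\rho_4$ is chosen sufficiently small.
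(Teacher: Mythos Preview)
The paper does not actually give its own proof of Lemma~\ref{Boulem}: the lemma is stated without argument and the subsequent remark indicates it is a known LDT for quasi-periodic Schr\"odinger operators on $\Z^d$ with analytic potential, drawn from \cite{BGS02, Bou05}. Your outline is a faithful sketch of precisely that standard proof---initial-scale Neumann series, inductive paving by $M$-regions, sublinear bad-box count via the semi-algebraic structure of $\Theta_M(E)$ and the Diophantine property of $\alpha$, Cartan's estimate for the measure bound, and the coupling/resolvent identity to close the induction---so there is no discrepancy to report.
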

\begin{rem}
For fixed $\alpha\in{\rm DC}_\nu$, the LDT of this type can be extended to more general  quasi-periodic Schr\"odinger operators with non-constant analytic potentials defined on $\T$. 
\end{rem}

%\begin{rem}

%The proof of (1) in this lemma is based on delicate  analysis of semi-algebraic sets  of  \cite{Bou07}. The conclusion (2) is the standard LDT for quasi-periodic operators on $\Z^d.$
% For a detailed proof,  we refer to  \cite{Bou07, JLS20, Shi21} in which the more general class of quasi-periodic operators  have been  handled.
%\end{rem}

To perform the {multi-scale analysis (MSA)}   scheme, we need two small scales $N_1<N_2<N$ with
\begin{align*}
N_2=N_1^{\widetilde C},\  N=N_2^{\widetilde C},
\end{align*}
where $\widetilde C>1$ will be specified below.  

In the following, we always assume $\omega\in\Omega_{N_2}\subset\Omega_{N_1}.$

The large scales LDE are  as follows. 
\begin{lem}\label{lsclem}
Let $\alpha, \theta_0,   m$ be given by  Lemma \ref{intlem} and let $N$ satisfy  %and let  $\alpha\in(\cap_{N_0\leq N'\leq N}\mathcal{A}_{N'})\cap {\rm DC}_\alpha(N)$ with $\mathcal{A}_{N'}$ being given by Lemma \ref{Boulem}.  Assume that  
\begin{align*}
	(\varepsilon+\delta)^{-\frac{1}{10^4db^4}}<N<\infty.
	\end{align*}
Then there is a semi-algebraic set $\widetilde\Omega_N\subset\Omega\ ({\rm cf.}\ \eqref{omgdefn})$ with ${\rm meas}(\Omega\setminus\widetilde \Omega_N)\leq e^{-N^\zeta}$  and $\deg \widetilde \Omega_N\leq N^C$ for some $\zeta=\zeta(b,d, \rho_2)\in(0,1), C=C(b,d)>1$, so that the following holds for $\omega\in{\rm DC}(N)\cap\widetilde\Omega_N\cap\Omega_{N_2}\ ({\rm cf.}\ \eqref{dcomg}).$ Let $0<\rho_1<\rho_2<\rho_3<1$ satisfy $\rho_2>\frac 23$ and $\rho_1<\rho_2/5$.   There is a set
$\Sigma_N\subset\R$ with ${\rm meas}(\Sigma_N)\leq e^{-N^{\rho_1}}$, so that if $\sigma\notin\Sigma_N$, then for all $\Lambda\in(0,n)+\mathcal{ER}_0(N)$ with $|n|\leq 2N,$
\begin{align*}
\|G_{\Lambda}(\sigma)\|&\leq e^{N^{\rho_2}},\\
|G_{\Lambda}(\sigma)(j;j')|&\leq e^{-\gamma_N|j-j'|}\ {\rm for}\ |j-j'|\geq N^{\rho_3},
\end{align*}
where $\gamma_N=\gamma_{N_2}-N^{-\kappa}$ for some $\kappa=\kappa(b,d,\rho_1,\rho_2,\rho_3)>0$ and $N_2=N^{ c}$  $(c\in (0,1))$. 
\end{lem}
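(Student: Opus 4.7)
The plan is to carry out a multiscale (MSA) induction in the spirit of Bourgain-Goldstein-Schlag \cite{BGS02}, starting from the intermediate-scale LDE of Lemma \ref{intlem} as the base case and passing from a smaller scale $N_1 \sim N^{\rho}$ (with $\rho<1$ to be chosen) up to scale $N$. The base scale is already taken large enough that $N_1 \gtrsim (\varepsilon+\delta)^{-1/(10^4 db^4)}$, and all hypotheses of Lemma \ref{intlem} are available. The exceptional set $\Sigma_N$ will be assembled as the union of the analogous exceptional sets at scale $N_1$ together with a scale-$N$ Cartan exceptional set, and $\widetilde\Omega_N$ will be defined via a finite system of polynomial inequalities encoding resonances of Green's functions.

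First I would treat the Laplacian direction. Using Lemma \ref{Boulem} together with the Diophantine condition on $\alpha$, for each fixed $\theta_0$ (i.e. fixed phase corresponding to a given $n$-slab) the operator $\widetilde T_Q(E;\theta)$ has the $(\rho_1,\rho_2,\rho_3)$-LDE on $\Z^d$-boxes. Combined with the $b$-dimensional Diophantine property of $\omega\in \mathrm{DC}(N)$ and the translation $k\cdot\omega \to E$, this reduces the inductive step to controlling, on a $(b+d)$-dimensional $M$-region $\Lambda \in (0,n)+\mathcal{ER}_0(N)$, the set of subboxes $\Lambda'\in\mathcal{ER}(N_1)$ where the small-scale Green's function fails to be regular. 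One shows that if $\Lambda'$ is $N_1$-singular then the corresponding $\sigma'=\sigma+k\cdot\omega$ lies in the small-scale exceptional set $\Sigma_{N_1}$, hence in a set of measure at most $\exp(-N_1^{\rho_1})$.

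The heart of the argument is the sublinear bound on the number of singular $N_1$-boxes inside $\Lambda$ along the $\Z^d$-direction, i.e., that
\begin{equation*}
\#\{\Lambda'\in\mathcal{ER}(N_1):\ \Lambda'\subset\Lambda,\ \Lambda' \text{ is }N_1\text{-singular}\} \le N^{1-\tau}
\end{equation*}
for some $\tau>0$. This is where semi-algebraic set theory enters: the singular set (in the joint variable $(\omega,\sigma)$) is semi-algebraic of controlled degree $\le N^C$, so one applies the Bourgain projection/complexity lemma (cf.\ \cite{Bou05}, Chapter~9) to a suitably sliced semi-algebraic set in $\omega$, removing a set $\Omega\setminus\widetilde\Omega_N$ of measure $\le e^{-N^\zeta}$ and of degree $\le N^C$. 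Together with $\omega\in \mathrm{DC}(N)$, this rules out too many translates $k\cdot\omega$ landing in the small-scale singular slab, yielding the required sublinear count. The key feature used here is the $b$-dimensional Diophantine condition with the explicit exponent $10^5 db^5$ in \eqref{dcomg}, which dominates the degree/complexity bounds coming from the semi-algebraic description.

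With the sublinear bound in hand I would invoke the matrix-valued Cartan estimate (e.g.\ Lemma~14.1 in \cite{Bou05}) to conclude the norm bound $\|G_\Lambda(\sigma)\|\le e^{N^{\rho_2}}$ off a $\sigma$-set of measure $\le e^{-N^{\rho_1}}$; the constraint $\rho_2>2/3$ and $\rho_1<\rho_2/5$ are exactly what the Cartan machinery requires. Exponential off-diagonal decay at rate $\gamma_N=\gamma-N^{-\kappa}$ then follows via the resolvent (coupling) identity of \cite{HSSY}, paying a cost $N^{-\kappa}$ per scale, which telescopes to the stated $\gamma_N$. Taking the union of $\Sigma_{N_1}$ and the Cartan exceptional set gives the final $\Sigma_N$. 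The expected main obstacle is the sublinear bound: one must carefully verify that the semi-algebraic complexity of the singular set, the Diophantine exponent for $\omega$, the measure of $\widetilde\Omega_N^c$, and the scales $N,N_1$ are compatible, and that the projection lemma is applied in the right codimension—this is where most of the bookkeeping in a BGS-style proof concentrates.
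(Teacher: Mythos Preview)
Your outline has the right large-scale architecture (MSA induction from Lemma~\ref{intlem}, sublinear bound on bad boxes, matrix-valued Cartan estimate for the norm, coupling lemma of \cite{HSSY} for the decay), but there are two concrete gaps that would prevent the argument from closing.

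The more serious one is that you use a \emph{single} intermediate scale $N_1\sim N^\rho$. The paper needs two nested scales $N_1<N_2<N$ with $N_2=N_1^{\widetilde C}$, $N=N_2^{\widetilde C}$, and this is not cosmetic. In the Cartan estimate (Lemma~\ref{mcl}) hypothesis (iii) asks for an a~priori bound $\|G_\Lambda(\sigma)\|\le K_3$ outside a $\sigma$-set of measure $\lesssim \xi_1(1+K_1)^{-1}(1+K_2)^{-1}$. The sublinear bound at scale $N_1$ forces $K_2=e^{O(N_1^{\rho_2})}$ and $\xi_1\sim e^{-O(N_1^{\rho_2})}$; with only one scale you would have to verify (iii) via the $N_1$-LDE, which removes $\sigma$-measure $\gtrsim e^{-cN_1^{\rho_2}}$, and this is \emph{larger} than the threshold $\xi_1(K_1K_2)^{-1}$, so (iii) fails. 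The remark immediately following the proof of Lemma~\ref{lsclem} works out exactly this incompatibility. The fix is to use the $N_2$-scale LDE (together with (2) of Lemma~\ref{lsclem2}) to get $K_3=e^{2N_2^{\rho_2}}$ outside a set of measure $e^{-N_2^{\rho_1}/2}$, and then choose $\widetilde C\rho_1>\rho_2$ so that $e^{-N_2^{\rho_1}/2}\ll \xi_1(K_1K_2)^{-1}$. Your ``union of $\Sigma_{N_1}$ and the Cartan set'' is missing this $N_2$-layer.

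Second, your description of $\widetilde\Omega_N$ as arising from a projection/complexity lemma is not what happens. The paper defines $\widetilde\Omega_N$ \emph{explicitly} as a second-Melnikov condition
\[
|k\cdot\omega+\xi\sqrt{\zeta_l}-\xi'\sqrt{\zeta_{l'}}|>2e^{-\frac14 N_1^{\rho_2/2}},\qquad 0<|k|\le 2N,\ 1\le l,l'\le N^C,
\]
where the $\zeta_l$ are eigenvalues of the $\Z^d$-restricted operators produced by Lemma~\ref{lsclem2}(1). For $\omega\in\widetilde\Omega_N$ this forces every $\sigma$-bad $N_1$-region with center $|n|>2N_1$ to have $k$-projection in a single slab of width $O(N_1)$ (cf.~\eqref{krestr}); the sublinear count in the $n$-direction then comes from Lemma~\ref{Boulem} together with $\alpha\in{\rm DC}_\nu$. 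Your sketch treats the $n$-direction only vaguely and does not distinguish the cases $|n|\le 2N_1$ (handled essentially as you say, via ${\rm DC}(N)$ and the interval decomposition of $\widetilde\Sigma_{N_1}$, Lemma~\ref{lsclem1}) from $|n|>2N_1$ (which needs the Melnikov mechanism, not a projection lemma).
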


For $\Lambda\in \mathcal{ER}(K)$, we say $\Lambda$ is $\sigma$-good if 
\begin{align}
\label{1221K1}\|G_{\Lambda}(\sigma)\|&\leq e^{K^{\rho_2}},\\
\label{1221K2}|G_{\Lambda}(\sigma)(j;j')|&\leq e^{-\gamma_K |j-j'|}\ {\rm for}\ |j-j'|\geq K^{\rho_3},
\end{align}
where $\gamma_K>0.$

 We need the following ingredients to prove Lemma \ref{lsclem}. 

%\begin{proof}[Proof of Lemma \ref{lsclem}]

%Fix 
%\begin{align*}
%N_2=N_1^{\widetilde C},\  N=N_2^{\widetilde C},
%\end{align*}
%where $\widetilde C>1$ will be specified below.  We assume for $K=N_1, N_2$ and $\Lambda\in (0,n)+\mathcal{ER}_0(K)$ with $|n|\leq 2K$, $G_{\Lambda}(\sigma)$ satisfies the LDE.  
%For $\Lambda\in \mathcal{ER}(K)$, we call $\Lambda$ is $\sigma$-good if 
%\begin{align*}
%\|G_{\Lambda}(\sigma)\|&\leq e^{K^{\rho_2}},\\
%|G_{\Lambda}(\sigma)(j;j')|&\leq e^{-\gamma_K |j-j'|}\ {\rm for}\ |j-j'|\geq K^{\rho_3},
%\end{align*}
%where $\gamma_K=\gamma-K^{-\kappa}.$
%\\

%{\bf Claim 1.} 
\begin{lem}\label{lsclem1}
 \textit{ There is $ C>1$ depending  only on  $b,d$  so that  if $\widetilde C^2 \geq \frac {C}{\rho_2}$,  then the number of  $\sigma$-bad $\Lambda\in (k,n)+\mathcal{ER}_0(N_1)$ satisfying $|k|\leq N$ and $|n|\leq 2N_1$ is at most ${N^{\rho_2/4}}/2.$ }
 \end{lem}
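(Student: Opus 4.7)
The plan is to exploit the fact that $H(\sigma)$ is $k$-translation invariant modulo a shift of the spectral parameter, reduce the count to a lattice-point problem, and then apply the intermediate-scale LDE together with the Diophantine condition on $\omega$.

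The first step is the translation reduction. Because $\varepsilon\Delta$ acts only on the $n$-variable and $T_\phi$ is a convolution in $k$, both are translation invariant in $k$; the diagonal $D(\sigma)$ satisfies $D(\sigma)(k+k_0,n)=D(\sigma+k_0\cdot\omega)(k,n)$. Hence for any center $(k,n)$ with $|k|\leq N,\ |n|\leq 2N_1$ and any elementary shape $R\in\mathcal{ER}_0(N_1)$, the box $\Lambda=(k,n)+R$ is $\sigma$-good if and only if the standardized box $(0,n)+R$ is $(\sigma+k\cdot\omega)$-good. Thus it suffices, for each pair $(n,R)$, to count the $k\in\Z^b$ with $|k|\leq N$ for which $\sigma+k\cdot\omega$ lies in the bad set of the standardized box.

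The second step invokes Lemma~\ref{intlem} at scale $N_1$. For each fixed $(n,R)$ with $|n|\leq 2N_1$, the set
\[
\mathcal{S}_{n,R}=\bigl\{\sigma'\in\R:\ (0,n)+R\ \text{is}\ \sigma'\text{-bad}\bigr\}
\]
has measure $\leq (\varepsilon+\delta)^{-9/10}e^{-\frac{1}{2b}N_1^{\rho_2/2}}$. Moreover, since this set is defined by polynomial inequalities in $\sigma'$ coming from Cramer's rule applied to the $O(N_1^{b+d})$-dimensional matrix $R_\Lambda H(\sigma')R_\Lambda$, $\mathcal{S}_{n,R}$ is semi-algebraic in $\sigma'$ of degree at most $N_1^{C_0(b,d)}$; in particular it has at most $N_1^{C_1(b,d)}$ connected components, each an interval.

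The third step is the lattice-point count. Since $\omega\in {\rm DC}(N)$, the values $\{k\cdot\omega:\ |k|\leq N\}$ are $N^{-10^5 db^5}$-separated in $\R$. For any connected component $I$ of $\mathcal{S}_{n,R}-\sigma$ of length $|I|$, the number of $k$ with $k\cdot\omega\in I$ is at most $|I|\cdot N^{10^5 db^5}+1$. Summing over components gives
\[
\#\{k:\ |k|\leq N,\ \sigma+k\cdot\omega\in\mathcal{S}_{n,R}\}\leq N_1^{C_1}+N^{10^5 db^5}\cdot(\varepsilon+\delta)^{-9/10}e^{-\frac{1}{2b}N_1^{\rho_2/2}}\leq 2 N_1^{C_1},
\]
where the exponential beats the polynomial thanks to $N_1\geq N^{1/\widetilde C^2}$ and $N\geq (\varepsilon+\delta)^{-1/(10^4db^4)}$. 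Multiplying by the at most $(4N_1+1)^d$ choices of $n$ and the $N_1^{C_2(b,d)}$ choices of $R\in\mathcal{ER}_0(N_1)$ yields a total of $\leq N_1^{C_3(b,d)}$ bad boxes. With $N_1=N^{1/\widetilde C^2}$ this is $\leq N^{C_3/\widetilde C^2}$, and choosing $\widetilde C^2\geq 4C_3/\rho_2=:C/\rho_2$ makes it $\leq N^{\rho_2/4}/2$, as claimed. The main obstacle is the third step: one must verify that $\mathcal{S}_{n,R}$ really is semi-algebraic of the stated degree, and one must carefully combine the separation of $\{k\cdot\omega\}$ with the component bound; this is where the Diophantine exponent $10^5 db^5$ in the definition of ${\rm DC}(N)$ is used, and why it needs to be chosen so much larger than $1/(10^4db^4)$.
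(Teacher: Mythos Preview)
Your argument is correct and follows essentially the same route as the paper: reduce via the $k$-Toeplitz property to counting how many translates $\sigma+k\cdot\omega$ fall into a bad set, observe that this bad set is semi-algebraic of degree $N_1^{C}$ (hence a union of $N_1^{C}$ intervals) with sub-exponentially small measure, and then use the Diophantine separation of $\{k\cdot\omega:|k|\le N\}$ to bound the number of hits by $N_1^{C}$. One small point: in step~2 you invoke Lemma~\ref{intlem}, which is only stated for scales $N_1\le(\varepsilon+\delta)^{-1/(10^4db^4)}$; in the iteration one should instead use the inductive LDE at scale $N_1$ (the hypothesis set up just before the lemma in the paper), which gives the measure bound $e^{-N_1^{\rho_1}}$ uniformly. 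The rest of your counting, including the density estimate $|I|\cdot N^{10^5db^5}+1$ per component, is a harmless variant of the paper's ``at most one $k$ per interval'' argument.
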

\begin{proof}[Proof of Lemma~\ref{lsclem1}]
First, we define $\widetilde \Sigma_{N_1}$ to be the set of  $\sigma\in\R$ so that,  there is  $\Lambda\in \mathcal{ER}_0(N_1)$  such that $\Lambda$ is $\sigma$-bad.  We then obtain 
\begin{align*}
{\rm meas}(\widetilde \Sigma_{N_1})\leq CN_1^Ce^{-N_1^{\rho_1}}<e^{-\frac12 N_1^{\rho_1}}.
\end{align*}
{Following \cite{Bou07} (cf. page 691),  we replace  \eqref{1221K1}--\eqref{1221K2} (with  $K=N_1$)   by essentially equivalent statements
\begin{align}
\label{1221semi1}\sum_{j,j'\in\Lambda}|G_\Lambda(\sigma)(j;j')|^2&< e^{2N_1^{\rho_2}},\\
\label{1221semi2}\sum_{|j-j'|\geq N_1^{\rho_3}}|G_\Lambda(\sigma)(j;j')|^2&< e^{-2\gamma_{N_1}N_1}. 
\end{align}
Then we express the matrix elements $G_\Lambda(\sigma)(j;j')$ by Cramer’s rule, so that
both \eqref{1221semi1} and \eqref{1221semi2} may be written in the form $P'(\sigma)>0$,  where $P'$  is a
polynomial of degree at most $N_1^{C(b,d)}$ since  \eqref{dsgmdefn}. This allows us to  regard  $\widetilde \Sigma_{N_1}$ as a semi-algebraic set of degree $\deg \widetilde \Sigma_{N_1}\leq N^{C}$ with $C=C(b,d)>0.$}  Then using Basu-Pollack-Roy Theorem \cite{BPR96} on Betti numbers of a semi-algebraic set (cf. also Proposition 9.2 of \cite{Bou05}), we have a decomposition of $$\widetilde \Sigma_{N_1}=\bigcup_{1\leq l\leq N_1^{C}}I_l,$$
where each $I_l$ is an interval of length $|I_l|\leq e^{-\frac12 N_1^{\rho_1}}.$  Now assume that $\Lambda\in(k,n)+\mathcal{ER}_0(N_1)$ and $\Lambda'\in(k',n)+\mathcal{ER}_0(N_1)$ are all $\sigma$-bad for  $|n|\leq 2N_1, k\neq k'$ and $|k|, |k'|\leq N$.  Then from Toeplitz property of ${H}(\sigma)$ in the $k$-direction, we obtain 
$$\sigma+k\cdot\omega,\ \sigma+k'\cdot\omega\in\widetilde \Sigma_{N_1}. $$
As a result, by $k\neq k'$ with $|k|, |k'|\leq N$ and $\omega\in{\rm DC}(N)\ ({\rm cf.}\ \eqref{dcomg})$,  we have
\begin{align*}
{e^{-\frac12 N_1^{\rho_1}}}\ll N^{-10^5db^5}\leq |(k-k')\cdot\omega|,
\end{align*}
which implies $k$ and $k'$ can not stay in the same interval $I_l$ for   $1\leq l\leq N_1^{C}.$  Thus we have shown that the number of $\sigma$-bad $\Lambda\in (k,n)+\mathcal{ER}_0(N_1)$ with $|n|\leq 2N_1$ and $|k|\leq N$ is at most
\begin{align*}
CN_1^{C}N_1^d\leq N_1^{C}=N^{\frac{C}{\widetilde C^2}}\leq N^{\rho_2/4}/2 
\end{align*}
assuming $\widetilde C^2\geq \frac{C}{\rho_2}. $ This proves Lemma \ref{lsclem1}. %{\bf Claim 1}.
\end{proof}

%{\bf Claim 2.}
 
Next, for $\Lambda\subset\Z^{b+d}$, write  $\Lambda=\Pi_b\Lambda\times \Pi_d\Lambda$ with $\Pi_b\Lambda$ (resp. $\Pi_d\Lambda$) denoting  the projection of $\Lambda$ on $\Z^b$ (resp. $\Z^d$). For each $k\in \Pi_b\Lambda, $ define 
$$\Pi_{d}\Lambda(k)=\{n:\ (k,n)\in\Lambda\}\subset\Z^d.$$
We have 
 \begin{lem}\label{lsclem2}
 { 
\begin{itemize}
\item[(1)] Fix $\Lambda\in (k',n')+\mathcal{ER}_0(N_1) $ with  $|n'|>2N_1$.  Then there is a collection of $\{\zeta_l\geq \frac12\}_{1\leq l\leq N_1^C}$ depending only on $\Lambda, \alpha, \theta_0, m$ so that if 

%Let $k\in\Pi_b\Lambda$
%and  let $\{\zeta_l=\zeta_l(k)\geq \frac12\} _{1\leq l\leq \#\Pi_d\Lambda(k)}$ denote the set of all eigenvalues of  $$R_{\Pi_d\Lambda(k)}(V^2+\varepsilon \Delta)R_{\Pi_d\Lambda(k)}\ {\rm on}\ \Z^d.$$
%Then %there is a sequence $\{\zeta_l\}_{\leq l\leq N_1^C}\subset \R$ with  $C=C(b,d)>0$ and $\zeta_l$ depending only on $\Lambda,\alpha,\theta,m $ (but not on $\sigma,\omega$) so that, if
%we have for 
$$\min_{k\in\Pi_b\Lambda, 1\leq l\leq N_1^C, \xi=\pm1}|\sigma+k\cdot\omega+\xi\sqrt{\zeta_l}|> e^{-\frac14 N_1^{\rho_2/2}},$$
 then $\Lambda$ is $\sigma$-good. %where $\Pi_b\Lambda$ denotes the projection of $\Lambda$ on $\Z^b.$
%\begin{rem}
%We have since $m\geq 2$ that $V^2+\varepsilon \Delta\geq \frac12 {\rm id}$ if $0<\varepsilon \ll1$, where ${\rm id}$ denotes the identity operator. Obviously, $\zeta_l$ depends  on $\Pi_d\Lambda(k), \alpha,\theta, m$, but not on $\sigma, \omega.$
%\end{rem}
\item[(2)]There exists some $\Sigma_{N_2}\subset \R$ with ${\rm meas}(\Sigma_{N_2})\leq  e^{-\frac15 N_2^{\rho_2/2}}$ so that if $\sigma\notin\Sigma_{N_2}$,  then each $\Lambda\in (k,n)+\mathcal{ER}_0(N_2) $ with $|(k,n)|\leq 10N$ and $|n|>2N_2$ is $\sigma$-good. 
\end{itemize}}
  \end{lem}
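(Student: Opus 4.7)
The plan for part~(1) is to reduce Green's function estimates for $H_\Lambda(\sigma)$ to the spectral analysis of a fiberwise quasi-periodic Schr\"odinger operator and then treat $\delta T_\phi$ perturbatively. Because $|n'|>2N_1$ places $\Lambda$ outside the exceptional set $\mathcal{S}$, the unperturbed piece $H_\Lambda^{(0)}:=R_\Lambda(D(\sigma)+\varepsilon\Delta)R_\Lambda$ is block-diagonal in $k\in\Pi_b\Lambda$, with $k$-th fiber equal to
$$\widetilde T_{\Pi_d\Lambda(k)}((\sigma+k\cdot\omega)^2;\theta_0)=R_{\Pi_d\Lambda(k)}\bigl(\cos(\theta_0+n\cdot\alpha)+m+\varepsilon\Delta\bigr)R_{\Pi_d\Lambda(k)}-(\sigma+k\cdot\omega)^2 I.$$
Diagonalize the Schr\"odinger piece for each $k$, invoking Lemma~\ref{Boulem} to obtain exponentially localized eigenfunctions $\psi_{l,k}$ with eigenvalues $\zeta_{l,k}\in[1/2,4]$ (using $m\in[2,3]$ and $\varepsilon\ll 1$). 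The list $\{\zeta_l\}_{l\leq N_1^C}$ is the union of these eigenvalues over $k\in\Pi_b\Lambda$, of total size at most $|\Lambda|\leq N_1^{b+d}$, and depends only on $\Lambda,\alpha,\theta_0,m$ as asserted.

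Under the hypothesis $|\sigma+k\cdot\omega+\xi\sqrt{\zeta_l}|>e^{-N_1^{\rho_2/2}/4}$, the factorization $\zeta_{l,k}-(\sigma+k\cdot\omega)^2=(\sqrt{\zeta_{l,k}}-(\sigma+k\cdot\omega))(\sqrt{\zeta_{l,k}}+(\sigma+k\cdot\omega))$ bounds each fiber spectrum below by $e^{-N_1^{\rho_2/2}/2}$, so $\|G_\Lambda^{(0)}\|\leq e^{N_1^{\rho_2/2}/2}$. Off-diagonal decay in the $n$-variable follows from the spectral representation of $G_\Lambda^{(0)}|_k$ via the localized $\psi_{l,k}$, and in the $k$-variable it is trivial by block-diagonality. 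Adding $\delta T_\phi$ through the Neumann series $G_\Lambda=G_\Lambda^{(0)}(I+\delta T_\phi G_\Lambda^{(0)})^{-1}$ converges because $\delta\cdot e^{N_1^{\rho_2/2}/2}\ll 1$ at the scales considered, and only slightly weakens the constants, yielding $\sigma$-goodness at scale $N_1$ with rate $\gamma_{N_1}=\gamma-N_1^{-\kappa}$.

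For part~(2), I would apply part~(1) with $N_1$ replaced by $N_2$ and set
$$\Sigma_{N_2}=\bigcup_\Lambda\ \bigcup_{k\in\Pi_b\Lambda,\,l\leq N_2^C,\,\xi=\pm 1}\bigl\{\sigma:\ |\sigma+k\cdot\omega+\xi\sqrt{\zeta_l(\Lambda)}|\leq e^{-N_2^{\rho_2/2}/4}\bigr\},$$
where the outer union runs over $\Lambda\in(k',n')+\mathcal{ER}_0(N_2)$ with $|(k',n')|\leq 10N$ and $|n'|>2N_2$. The number of such $\Lambda$ is at most $(10N)^{b+d}$, and the inner union contributes $O(N_2^{C})$ intervals of length at most $2e^{-N_2^{\rho_2/2}/4}$. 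Since $N$ is polynomial in $N_2$, the total measure is $\leq N^{O(1)}e^{-N_2^{\rho_2/2}/4}\leq e^{-N_2^{\rho_2/2}/5}$ for $N_2$ sufficiently large; outside $\Sigma_{N_2}$, every such $\Lambda$ satisfies the hypothesis of part~(1) and is therefore $\sigma$-good.

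The principal difficulty is obtaining uniform exponential decay for the eigenfunctions $\psi_{l,k}$ across all $k\in\Pi_b\Lambda$, given that the fiber shapes $\Pi_d\Lambda(k)$ can vary irregularly with $k$. This is handled by Lemma~\ref{Boulem} together with the Diophantine hypothesis $\theta_0\in\mathrm{DC}_\alpha((\varepsilon+\delta)^{-1/10^4db^4})$, which ensures that $\theta_0$ avoids the exceptional parameter sets associated to the LDT for the Schr\"odinger operator uniformly over all relevant fibers.
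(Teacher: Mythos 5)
Your overall skeleton (block-diagonalization in $k$, eigenvalue non-resonance via the factorization $\zeta_l-(\sigma+k\cdot\omega)^2=(\sqrt{\zeta_l}-(\sigma+k\cdot\omega))(\sqrt{\zeta_l}+(\sigma+k\cdot\omega))$, Lemma~\ref{Boulem} for the $n$-direction, Neumann series for $\delta T_\phi$, and a union bound for part (2)) is the right one, but three steps do not hold as written. First, Lemma~\ref{Boulem} is a Green's function estimate valid for $\theta$ outside an $E$-dependent exceptional set $\Theta_N(E)$ of small measure; it does not produce exponentially localized eigenfunctions $\psi_{l,k}$ of the restricted fiber operators, and uniform eigenfunction localization on finite boxes is a genuinely stronger statement (resonant eigenvalues of a finite restriction need not have exponentially decaying eigenvectors). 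Relatedly, your final paragraph asserts that $\theta_0\in{\rm DC}_\alpha(\cdot)$ ensures $\theta_0$ avoids these exceptional sets uniformly in $E$ and in the fiber shapes; this is unfounded, since $\Theta_N(E)$ is known only through a measure (and complexity) bound, not through an arithmetic description that a Diophantine condition could match. The paper's mechanism in the $n$-direction is different and avoids eigenfunctions altogether: the non-resonance hypothesis is imposed on the eigenvalues of restrictions to elementary sub-regions $\Lambda'\in\mathcal{ER}_d(L)$, $\sqrt{N_1}\leq L\leq N_1$, of each fiber (this is why the list $\{\zeta_l\}$ has $N_1^C$ members), giving norm bounds on those sub-blocks; Lemma~\ref{Boulem} with $E=(\sigma+k\cdot\omega)^2$, combined with $\alpha\in{\rm DC}_\nu$, yields only a sublinear bound on the number of bad small blocks (size $\sim N_1^c$) inside $\Pi_d\Lambda(k)$, because the phase seen by a translated block is $\theta_0+n_c\cdot\alpha$ and only few translates can land in the small bad set; the coupling/resolvent identity of \cite{HSSY} then converts these two inputs into the bound and off-diagonal decay for $B_k^{-1}$.

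Second, your justification of the Neumann step, ``$\delta\cdot e^{N_1^{\rho_2/2}/2}\ll1$ at the scales considered,'' is false: in Lemma~\ref{lsclem} one has $N>(\varepsilon+\delta)^{-\frac{1}{10^4db^4}}$ and $N_1=N^{1/\widetilde C^2}$, so $e^{N_1^{\rho_2/2}}$ is super-polynomially large in $(\varepsilon+\delta)^{-1}$ and $\delta e^{N_1^{\rho_2/2}}\gg1$. The hypothesis $|n'|>2N_1$ is not there merely to keep $\Lambda$ away from $\mathcal{S}$; it is exactly what makes the perturbation small, since $T_\phi((k,n);(k',n'))=\phi(k-k',n)\delta_{n,n'}$ and $|\phi(k-k',n)|\leq Ce^{-\gamma|n|}\leq Ce^{-2\gamma N_1}$ on such $\Lambda$, which beats $e^{N_1^{\rho_2}}$ because $\rho_2<1$. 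Without this observation the perturbative step fails. Your part (2) (union over the $N^{C}$ boxes and the $N_2^{C}$ values $\sqrt{\zeta_l}$, with interval length $e^{-\frac14N_2^{\rho_2/2}}$) is in line with the paper's Remark~\ref{remn1} and is fine once part (1) is repaired along the lines above.
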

\begin{proof}[Proof of Lemma \ref{lsclem2}]
%The proof consists of two steps. \\
\begin{itemize}
\item[(1)] %Fix any $\Lambda\in (k',n')+\mathcal{ER}_0(N_1) $ with $\Lambda\subset [-N, N]^b\times [-3N, 3N]^d$ and $|n'|>2N_1$ (the case $l=2$ can be handled similarly).  
%We want to show 
%\b\b\begin{align}\label{sgmlk}
%{\rm meas}(\Sigma_{\Lambda,k})\leq e^{-\frac23N_1^{\rho_2}}.
%\end{align}
%where 
%\begin{align*}
%\Sigma_{\Lambda,k}=\left\{\sigma\in\R:\ \|(R_{\Pi_{d}(k)}(\mathcal{D}(\sigma)+\varepsilon\mathcal{T})R_{\Pi_{d}(k)})^{-1}\|\geq e^{-N_1^{\rho_2}}\right\}.
%\end{align*} 
Fix  $k\in\Pi_b\Lambda$.  Let $\Lambda'\subset \Pi_d\Lambda(k)$
satisfy $\Lambda'\in \mathcal{ER}_d(L)$ for some $\sqrt{N_1}\leq L\leq N_1.$
Write 
\begin{align*}
 A_k=R_{\Lambda'}({D}(\sigma)+\varepsilon\Delta)R_{\Lambda'}.
%&=  R_{\Pi_{d}\Lambda(k)}\left(
	%\begin{array}{cc}
 %-x V+V^2+\frac\varepsilon 2\Delta&  \frac\varepsilon2 \Delta\\
%  \frac\varepsilon2 \Delta&  x V+V^2+\frac\varepsilon2\Delta
%\end{array}\right)R_{\Pi_{d}\Lambda(k)}\\
%&:=\left(
%	\begin{array}{cc}
 %-x V+V^2+\frac\varepsilon 2\Delta&  \frac\varepsilon2 \Delta\\
 % \frac\varepsilon2 \Delta&  x V+V^2+\frac\varepsilon2\Delta
%\end{array}\right)_{{\Pi_{d}\Lambda(k)}}.
\end{align*}
Then  the set of all eigenvalues of $A_k$ is given by 
$$\{-(\sigma+k\cdot\omega)^2+\zeta_l(k)\}.$$
So we have 
\begin{align*}
\|A_k^{-1}\|&\leq \max_{1\leq l\leq \#\Lambda'}|-(\sigma+k\cdot\omega)^2+\zeta_l(k)|^{-1}\\
&= \max_{1\leq l\leq \#\Lambda'}|\sigma+k\cdot\omega-\sqrt{\zeta_l(k)}|^{-1}\cdot|\sigma+k\cdot\omega+\sqrt{\zeta_l(k)}|^{-1}.
\end{align*}
%\begin{align*}
%&\ \ \ \{\sigma+k\cdot\omega:\ \|A^{-1}(\sigma+k\cdot\omega)\|\leq e^{\frac12N_1^{\rho_2}}\}\\
%&\subset\bigcup_{1\leq l\leq  \Pi_d\Lambda(k)} [\zeta_l-\frac12 e^{-\frac 13 N_1^{\rho_2}},\  \zeta_l+\frac12 e^{-\frac 13 N_1^{\rho_2}}].
%\end{align*}
Now assume that 
\begin{align*}
\min_{\sqrt{N_1}\leq L\leq N_1,  \mathcal{ER}_d(L)\ni\Lambda'\subset \Pi_d\Lambda(k), \xi=\pm1, 1\leq l\leq \#\Lambda'}|\sigma+k\cdot\omega+\xi\sqrt{\zeta_l}|>e^{-\frac 14 L^{\rho_2}}.
%\sigma+k\cdot\omega\notin \bigcup_{1\leq l\leq \#\Pi_d\Lambda(k)} [\zeta_l-\frac12 e^{-\frac 23 N_1^{\rho_2}},\  \zeta_l+\frac12 e^{-\frac 23 N_1^{\rho_2}}],
\end{align*}
From $\alpha\in{\rm DC}_{\nu}$ and   applying  Lemma \ref{Boulem} with $E=(\sigma+k\cdot\omega)^2$  gives the sublinear bound on the number of bad blocks (of size  $M\sim N_1^c$ for some $0<c\ll1$) contained in $\Pi_d\Lambda(k)$.    Next, applying  the resolvent identity of \cite{HSSY}  (cf. Lemma 4.2) yields for $B_k=R_{\Pi_d\Lambda(k)}({D}(\sigma)+\varepsilon\Delta)R_{\Pi_d\Lambda(k)},$  %$A(\sigma+k\cdot\omega)$ has good estimates
\begin{align*}
\|B^{-1}_k\|&\leq e^{\frac 12 N_1^{\rho_2}},\\
|B_k^{-1}(n;n'')|&\leq e^{-\frac12|\log\varepsilon|\cdot|n-n''|}\ {\rm for}\ |n-n''|\geq {N_1}^{\rho_3}.
\end{align*}
Note that 
\begin{align*}
\tilde G_{\Lambda}(\sigma)=\bigoplus_{k\in\Pi_b\Lambda}B_k^{-1},
\end{align*}
where 
\begin{align*}
\tilde G_\Lambda(\sigma)=(R_\Lambda ({D}(\sigma)+\varepsilon\Delta)R_\Lambda)^{-1}.
\end{align*}
By taking account of all $k\in\Pi_b\Lambda,$ we obtain  that if 
\begin{align}
\nonumber& \min_{k\in\Pi_b\Lambda,\sqrt{N_1}\leq L\leq N_1,  \mathcal{ER}_d(L)\ni \Lambda'\subset \Pi_d\Lambda(k), \xi=\pm1,  1\leq l\leq \#\Lambda'}|\sigma+k\cdot\omega+\xi\sqrt{\zeta_l(k)}|\\
\label{2mel}&> e^{-\frac 14 L^{\rho_2}},
%\sigma+k\cdot\omega\notin \bigcup_{1\leq l\leq \#\Pi_d\Lambda(k)} [\zeta_l-\frac12 e^{-\frac 23 N_1^{\rho_2}},\  \zeta_l+\frac12 e^{-\frac 23 N_1^{\rho_2}}],
\end{align}
then 
\begin{align*}
\|\tilde G_{\Lambda}(\sigma)\|&\leq e^{\frac12N_1^{\rho_2}},\\
|\tilde G_{\Lambda}(\sigma)((k,n);(k'',n''))|&\leq \delta_{k,k'}e^{-\frac12|\log\varepsilon| \cdot|n-n''|}\ {\rm for}\ |n-n''|\geq N_1^{\rho_3}.
\end{align*}
Finally, since  $|n|>2N_1$,  we know that ${T}_{\phi}$ has the decay estimate
\begin{align*}
{T}_{\phi}((k,n);(k'',n''))\leq Ce^{-2\gamma N_1}.
\end{align*}
Using the Neumann series argument  gives that $\Lambda$  is $\sigma$-good  assuming \eqref{2mel} holds true.  This proves (1) of Lemma \ref{lsclem2}.  
\begin{rem}\label{remn1}
We would also like to remark that if we take into account of all these $\Lambda\in (k,n)+\mathcal{ER}_0(N_1)$ with $|(k,n)|\leq 10N$ and $|n|>2N_1$, then we get a sequence $\{\sqrt{\zeta_l}\}_{1\leq l\leq N^C}$. As a result,   there is a set $\Sigma_{N_1}\subset \R$ satisfying  ${\rm meas}(\Sigma_{N_1})\leq N^Ce^{-\frac14 N_1^{\rho_2/2}}\leq e^{-\frac15 N_1^{\rho_2/2}}$ so that,  for $\sigma\not\in\Sigma_{N_1}$, all $\Lambda\in(k,n)+\mathcal{ER}_0(N_1)$ with $|(k,n)|\leq 10N$ and $|n| >2N_1$ are $\sigma$-good. 
\end{rem}
%{\bf Step 2.}  In this step we need an important result originated from Bourgain \cite{},  which together with estimates in {\bf Step 1}  will lead to the exponential off-diagonal decay of $A^{-1}(\sigma+k\cdot\omega)$ for $\sigma+k\cdot\omega\notin \Sigma_{\Lambda, k}.$  

%{\bf Step 2.} Now assume $\sigma+k\cdot\omega\notin \Sigma_{\Lambda,k}$ for all $k\in\Pi_b\Lambda$. Using estimates of {\bf Step 1} provides the upper bound of $\|A^{-1}(\sigma+k\cdot\omega)\|$.  
%\begin{align}
%&\ \ \ \ \tilde G_{\Lambda}((k,n);(k',n'))\\
%& =G_{\Pi_d(k)}(n;n^*)
%-\frac\varepsilon2\sum_{n^*,n^{**}}G_{\Pi_d(k)}(n;n^*)\Delta(n^*;n^{**})\tilde G_\Lambda((k,n^{**});(k',n'))
%\end{align}
%we obtain % the good estimates of $\tilde G_\Lambda(\sigma),$  i.e., 
%\begin{align*}
%\|\tilde G_{\Lambda}(\sigma)\|&\leq e^{N_1^{\rho_2}},\\
%|\tilde G_{\Lambda}(\sigma)((k,n);(k',n'))|&\leq \delta_{k,k'}e^{-\frac12|\log\varepsilon| \cdot|n-n'|}\ {\rm for}\ |n-n'|\geq N_1^{\rho_3}.
%\end{align*}
%Note that $|n|>2N_1$. Then $\mathcal{T}_{\phi}$ has decay estimates
%\begin{align*}
%\mathcal{T}_{\phi}((k,n);(k',n'))\leq Ce^{-2\gamma N_1}.
%\end{align*}
%So applying the Neumann series argument implies the $\sigma$-good of $\Lambda$.  %good estimates of  $G_\Lambda (\sigma).$  
%Finally, it suffices to take account of all these $\Lambda\subset [-N,N]^b\times [-3N, 3N]^d$.  This proves {\bf Claim 2}.
\item[(2)] The conclusion is just a corollary of (1) by replacing  $N_1$ with $N_2$ in Remark~\ref{remn1}. 
\end{itemize}

This completes the proof of  Lemma \ref{lsclem2}.%{\bf Claim 2}.
\end{proof}
By combining Lemma \ref{lsclem1} and Lemma \ref{lsclem2},  we are ready to prove LDE at the scale $N.$ We first recall an important lemma. % (i.e.,  Proposition 14.1 in \cite{Bou05}). 

\begin{lem}[Cartan's estimate,  Proposition 14.1 in \cite{Bou05}]\label{mcl}
Let $T(\sigma)$ be a self-adjoint $N\times N$ matrix-valued function of a parameter $\sigma\in[-\xi, \xi]$  satisfying the following conditions:
\begin{itemize}
\item[(i)] $T(\sigma)$ is real analytic in $\sigma$ and has a holomorphic extension to
\begin{align*}
\mathbb{D}=\left\{z\in\mathbb{C}: \ |\Re z|\leq \xi,\ |\Im z|\leq \xi_1\right\}
\end{align*}
satisfying $\sup\limits_{z\in \mathbb{D}}\|T(z)\|\leq K_1,\  K_1\geq 1.$

\item[(ii)] For each $\sigma\in[-\xi, \xi]$, there is a subset $V\subset [1,N]$ with $\#V \leq M$ such that 
\begin{align*}
\|(R_{[1,N]\setminus V}T(\sigma)R_{[1,N]\setminus V})^{-1}\|\leq K_2, \ K_2\geq 1.
\end{align*}
\item[(iii)] Assume %for some $a\in[-\frac12,\frac12],$
\begin{align*}
%\|T^{-1}(a)\|\leq K_3,\ K_3\geq 1.
\mathrm{meas}\left(\{\sigma\in[-{\xi}, {\xi}]: \ \|T^{-1}(\sigma)\|\geq K_3\}\right)\leq 10^{-3}\xi_1(1+K_1)^{-1}(1+K_2)^{-1}.
\end{align*}
\end{itemize}
Let $0<\epsilon\leq (1+K_1+K_2)^{-10 M}.$ 
 Then we have
\begin{align}\label{mc5}
\mathrm{meas}\left(\left\{\sigma\in\left[-{\xi}/{2}, {\xi}/{2}\right]:\  \|T^{-1}(\sigma)\|\geq \frac1\epsilon\right\}\right)\leq Ce^{\frac{-c\log \frac1\epsilon}{M\log( M+K_1+K_2+K_3)}},
\end{align}
where $C, c>0$ are some absolute constants.
\end{lem}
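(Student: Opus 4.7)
\medskip

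\noindent\textbf{Proof proposal for Lemma \ref{mcl}.} The plan is to reduce the matrix-valued statement to a scalar Cartan-type estimate applied to the determinant of a Schur complement, trading the dimension $N$ for the much smaller parameter $M$ coming from hypothesis (ii).

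First, I would exploit condition (ii) to perform a Schur complement reduction at each real $\sigma$. Writing $T(\sigma)$ in block form with respect to the $M$-element set $V=V(\sigma)$,
$$T(\sigma)=\begin{pmatrix} A(\sigma) & B(\sigma)\\ B(\sigma)^* & D(\sigma)\end{pmatrix},$$
with $\|D^{-1}(\sigma)\|\leq K_2$, the Schur complement $S(\sigma)=A(\sigma)-B(\sigma)D^{-1}(\sigma)B^*(\sigma)$ is $M\times M$, and the standard identity (cf.\ Lemma B.1 in \cite{CSZ22}) gives
$$\|T^{-1}(\sigma)\|\leq C(1+K_2)^2\bigl(1+\|S^{-1}(\sigma)\|\bigr).$$
Since $S$ is $M\times M$, Cramer's rule yields $\|S^{-1}(\sigma)\|\leq \|S(\sigma)\|^{M-1}/|\det S(\sigma)|$, and the sup bound on $T$ together with $\|D^{-1}\|\leq K_2$ gives $\|S\|\leq (1+K_1)(1+K_2)^2$. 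Thus the event $\{\|T^{-1}(\sigma)\|\geq 1/\epsilon\}$ forces $|\det S(\sigma)|\leq (1+K_1+K_2)^{CM}\epsilon$.

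The second step is to promote $\det S(z)$ to a holomorphic function on (a slightly shrunken) disk inside $\mathbb{D}$. Condition (ii) only gives invertibility of $D$ for real $\sigma$; to get an analytic $D^{-1}(z)$ I would pick a grid of real base points, perturb in the complex direction on scale $(1+K_1+K_2)^{-1}$, and cover $\mathbb{D}$ by a bounded number of disks on which the Neumann series for $D^{-1}(z)$ converges. On such a disk one has the a priori sup bound $|\det S(z)|\leq (1+K_1)(1+K_2)^{CM}\leq e^{CM\log(M+K_1+K_2)}$.

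Third, condition (iii) supplies a quantitative lower bound at \emph{many} points: outside a set of measure $\leq 10^{-3}\xi_1(1+K_1)^{-1}(1+K_2)^{-1}$ one has $\|T^{-1}\|<K_3$, hence by the Schur reduction $|\det S(\sigma)|\geq c(1+K_1+K_2+K_3)^{-CM}$. In particular there exists at least one $\sigma_0\in[-\xi,\xi]$ realizing this lower bound. Now I would invoke the scalar Cartan--Boutroux lemma for a holomorphic function $f=\det S$ on a disk, with $\sup|f|\leq e^{AM}$ and $|f(\sigma_0)|\geq e^{-AM}$ where $A\simeq\log(M+K_1+K_2+K_3)$: the measure of $\{\sigma\in[-\xi/2,\xi/2]:|f(\sigma)|<\epsilon'\}$ is bounded by $C\exp(-c\log(1/\epsilon')/(AM))$ (this is the standard Jensen-plus-covering argument: bound the zeros of $f$ inside the disk by $O(AM)$ via Jensen, then apply the Cartan covering theorem). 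Substituting $\epsilon'=(1+K_1+K_2)^{CM}\epsilon$ and using the hypothesis $\epsilon\leq(1+K_1+K_2)^{-10M}$ to absorb the prefactor into the logarithm produces exactly the exponent $-c\log(1/\epsilon)/(M\log(M+K_1+K_2+K_3))$ claimed in \eqref{mc5}.

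The main obstacle is the holomorphic extension of the Schur complement: condition (ii) is only posited on the real axis, so the set $V(\sigma)$ may jump with $\sigma$ and $D^{-1}(z)$ need not extend to all of $\mathbb{D}$ with a single choice of $V$. The workaround is to shrink $\xi_1$ by a universal constant, partition $[-\xi,\xi]$ into intervals on which $V$ can be taken locally constant, and run the Cartan estimate on each piece. Tracking the numerology so that the final bound matches \eqref{mc5} (in particular that $\log K_1$, $\log K_2$, $\log K_3$ all enter additively inside the single logarithm $\log(M+K_1+K_2+K_3)$) is the other place that requires care.
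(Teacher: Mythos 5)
This lemma is not proved in the paper at all: it is quoted verbatim as Proposition 14.1 of \cite{Bou05}, so the only meaningful comparison is with Bourgain's own argument. Your proposal reconstructs exactly that argument in outline: Schur reduction with respect to the $\sigma$-dependent set $V$ of size $\leq M$, passage to the scalar function $\det S(z)$, an upper bound of size $e^{CM\log(M+K_1+K_2)}$, a lower bound at good points supplied by (iii) via $S^{-1}=R_VT^{-1}R_V$, and then the scalar Cartan/Jensen estimate, with the hypothesis $\epsilon\leq(1+K_1+K_2)^{-10M}$ absorbing the prefactors. So the strategy is the right one.

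There is, however, one genuine gap in the way you use hypothesis (iii). You extract ``at least one $\sigma_0\in[-\xi,\xi]$'' with $\|T^{-1}(\sigma_0)\|\leq K_3$ and then invoke Cartan on ``a disk''. But $\det S$ is only holomorphic \emph{locally}, on disks of radius $\sim\xi_1(1+K_1)^{-1}(1+K_2)^{-1}$ centered at real points where a fixed $V$ works (the radius comes from a Cauchy estimate $\|T'\|\lesssim K_1/\xi_1$ together with the Neumann series for $D^{-1}$, not from an absolute scale $(1+K_1+K_2)^{-1}$ as you wrote, and the number of such disks covering $[-\xi,\xi]$ is $\sim(\xi/\xi_1)(1+K_1)(1+K_2)$, not bounded --- harmless, but it should enter as a summation over intervals). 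The Cartan step needs a point with the lower bound \emph{inside each such disk}; a single global $\sigma_0$ gives nothing on the disks far from it, where a different $V$ and a different branch of $\det S$ are in play and the function could a priori be uniformly tiny. This is precisely what the specific constant $10^{-3}\xi_1(1+K_1)^{-1}(1+K_2)^{-1}$ in (iii) is calibrated for: the exceptional set cannot cover any interval of length comparable to the radius of analyticity, so every local disk contains (indeed, near its center) a point where $\|T^{-1}\|<K_3$ and hence $|\det S|\geq K_3^{-M}$. Your closing paragraph about partitioning $[-\xi,\xi]$ into intervals with locally constant $V$ shows you see the covering issue, but as written the argument never ties (iii) to the existence of a good point in every piece, which is the step that makes the whole scheme work; once that is stated, the rest of your outline goes through and reproduces \eqref{mc5} up to the usual bookkeeping.
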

%In contrast, we introduce 
%\begin{lem}[Lemma 2 in \cite{Bou09}]%\label{mcl}
%Let $T(\sigma)$ be a self-adjoint $N\times N$ matrix function of a parameter $\sigma\in[-\frac12, \frac12]$  satisfying the following conditions:
%\begin{itemize}
%\item[(i)] $T(\sigma)$ is real analytic in $\sigma\in [-1,1]$ and has a holomorphic extension to
%\begin{align*}
%\mathbb{D}=\left\{z\in\mathbb{C}: \ |z|\leq 1\right\}
%\end{align*}
%satisfying $\sup\limits_{z\in \mathbb{D}}\|T(z)\|\leq K_1,\  K_1\geq 1.$

%\item[(ii)] There is a subset $V\subset [1,N]$ with $|V|\leq M$ such that 
%\begin{align*}
%\sup_{z\in\mathbb{D}}\|(R_{[1,N]\setminus V}T(z)R_{[1,N]\setminus V})^{-1}\|\leq K_2, \ K_2\geq 1.
%\end{align*}
%\item[(iii)] Assume for some $a\in[-\frac12,\frac12],$
%\begin{align*}
%\|T^{-1}(a)\|\leq K_3,\ K_3\geq 1.
%\end{align*}
%\end{itemize}
%Then for all $t>0,$
%\begin{align}\label{mc5}
%\mathrm{mes}\left(\left\{\sigma\in\left[-{1}/{2}, {1}/{2}\right]:\  \|T^{-1}(\sigma)\|\geq e^{-t}\right\}\right)\leq Ce^{-\frac{ct}{M\log(K_1K_2K_3)}},
%\end{align}
%where $C, c>0$ are some absolute constants.
%\end{lem}
%Now, we can prove Lemma \ref{lsclem}. 
\begin{proof}[Proof of Lemma \ref{lsclem}]

Recall that
\begin{align*}
N_2=N_1^{\widetilde C},\  N=N_2^{\widetilde C}.
\end{align*}
%where $\widetilde C>1$ will be specified below.  
We assume that for $K=N_1, N_2$ and $\Lambda\in (0,n)+\mathcal{ER}_0(K)$ with $|n|\leq 2K$, $G_{\Lambda}(\sigma)$ satisfies the LDE.  
%For $\Lambda\in \mathcal{ER}(K)$, we call $\Lambda$ is $\sigma$-good if 
%\begin{align*}
%\|G_{\Lambda}(\sigma)\|&\leq e^{K^{\rho_2}},\\
%|G_{\Lambda}(\sigma)(j;j')|&\leq e^{-\gamma_K |j-j'|}\ {\rm for}\ |j-j'|\geq K^{\rho_3},
%\end{align*}
%where $\gamma_K=\gamma-K^{-\kappa}.$

We first prove the sublinear bound on the number of  $\sigma$-bad $N_1$-elementary regions with centers in 
$[-N, N]^b\times( [-3N, 3N]^d \setminus [-2N_1, 2N_1]^d)$ for all $\sigma\in \R.$ For this purpose, an additional restriction on $\omega$ is required. Recalling (1) of Lemma \ref{lsclem2} and taking into account of all $\Lambda\in (k,n)+\mathcal{ER}_0(N_1)$ with $|(k,n)|\leq 10N$ and $|n|>2N_1$ yield   a sequence  $\{\sqrt{\zeta_l}\}_{\leq l\leq N^C}\subset \R$ with  $C=C(b,d)>0$ and each $\zeta_l$ depending only on $\alpha,\theta_0, m_0 $ (but not on $\sigma,\omega$), so that if $\Lambda=(k,n)+\mathcal{ER}_0(N_1)$ satisfies $|(k,n)|\leq 10N$,  $|n|>2N_1$ and 
$$\min_{k\in\Pi_b\Lambda, \xi=\pm1, 1\leq l\leq N^C}|\sigma+k\cdot\omega+\xi\sqrt{\zeta_l}|>e^{-\frac 14 N_1^{\rho_2/2}},$$
then $\Lambda$ is $\sigma$-good. So we can define the set
\begin{align}\label{tilomg}
\widetilde\Omega_N=\bigcap_{1\leq l,l'\leq N^C, \xi=\pm1, \xi'=\pm1, 0<|k|\leq 2N}\left\{\omega\in\Omega:\ |k\cdot\omega+\xi\sqrt{\zeta_l}-\xi'\sqrt{\zeta_{l'}}|>2 e^{-\frac 14 N_1^{\rho_2/2}}\right\}. 
\end{align}
Then we have ${\rm meas}(\Omega\setminus \widetilde\Omega_N)\leq e^{-\frac 15 N_1^{\rho_2/2}}.$ 
In the following we always assume $\omega\in \widetilde\Omega_N.$ Suppose that $\Lambda'\in(k',n')+\mathcal{ER}_0(N_1)$ satisfying  $|k'|\leq N$ and  $2N_1<|n'|\leq 10N$  is $\sigma$-bad. Then there are some $k_*\in \Pi_b\Lambda'\subset [-N,N]^b$, $\xi_*=\pm1$  and some $l_*\in[1, N^C]$ so that 
\begin{align*}
|\sigma+k_*\cdot\omega+\xi_*\sqrt{\zeta_{l_*}}|\leq e^{-\frac14 N_1^{\rho_2/2}}. 
\end{align*}
Now  let $\Lambda''\in(k'',n'')+\mathcal{ER}_0(N_1)$ with  $|k''|\leq N$ and  $2N_1<|n''|\leq 10N$   be another  $\sigma$-bad region. 
  %be another $\sigma$-bad elementary $N_1$-region. 
  From $\omega\in \widetilde\Omega_N,$ we  have 
\begin{align}
k_*\in \Pi_b\Lambda''.
\end{align}
In other words, we have established  all $\sigma$-bad elementary $N_1$-regions $\Lambda$  with centers $(k,n)$ satisfying $|k|\leq N$ and $2N_1<|n|\leq 10N$  must obey  
\begin{align}\label{krestr}
\Pi_b\Lambda\subset [k_*-2N_1, k_*+2N_1]^b. 
\end{align}
Next, we estimate the number of $\sigma$-bad elementary  $N_1$-regions satisfying \eqref{krestr}.  This needs to control the $n$-directions. We would like to apply  
Lemma~\ref{Boulem}  with $E_k=(\sigma+k\cdot\omega)^2$ and $|k-k_*|\leq 2N_1.$  Then there is $\Theta_k=\Theta_{N_1}(E_k)\subset \R$ with ${\rm meas}(\Theta_k)\leq e^{-N_1^{\rho_4}}$ so that if $\theta\notin \Theta_k$, then for all $Q\in \mathcal{ER}_{d,0}(N_1)$, we have 
\begin{align*}
\|\widetilde T^{-1}_{Q}(E_k;\theta)\|&\leq e^{\sqrt{N_1}},\\
|\widetilde T^{-1}_{Q}(E_k;\theta)(n;n')|&\leq e^{-\frac12|\log\varepsilon|\cdot|n-n'|}\ {\rm for}\ |n-n'|\geq {N_1}^{\rho_3},
\end{align*}
where $\widetilde T_Q(E_k;\theta)$ is given by \eqref{TQ} with $E=E_k.$
As done in the proof of Lemma~\ref{lsclem1}, we can also regard $\Theta_k$ as a semi-algebraic set of degree at most $N_1^C$. So the set of singular $\theta$ can be described as 
\begin{align*}
\bigcup_{k\in\Z^b,\ |k-k_*|\leq 2N_1}\Theta_k=\bigcup_{1\leq l\leq N_1^{C_1}} \tilde I_l,
\end{align*}
where each $\tilde I_l$ is an interval of length $|\tilde I_l|\leq e^{-\frac14 N_1^{\rho_4}}$.  Recall that $\rho_2>\frac23$. By the Neumann series argument and since $|{T}_\phi((k,n);(k',n'))|\leq Ce^{-2\gamma N_1}$ for $|n|>2N_1$, we get that  if $\Lambda \in (k,n)+\mathcal{ER}_0(N_1)$  is $\sigma$-bad satisfying  $|n|>2N_1$ and \eqref{krestr}, then 
\begin{align*}
\theta_0+n\cdot\alpha\in \bigcup_{1\leq l\leq N_1^{C_1}} \tilde I_l.
\end{align*}
Recall that $\alpha\in{\rm DC}_\nu$. This together with the pigeonhole principle implies the number of $\sigma$-bad $N_1$-regions $\Lambda\in (k,n)+\mathcal{ER}_0(N_1)$ satisfying \eqref{krestr} and $|n|>2N_1$ is at most $N_1^{2C_1}. $  Otherwise, there must be some $n\neq n'$ so that both $\theta_0+n\cdot\alpha\in\tilde I_l$ and $\theta_0+n'\cdot\alpha\in\tilde I_l$ for some $1\leq l\leq N_1^{C_1}$. Then  we have
\begin{align*}
N^{-C}\leq |(n-n')\cdot\alpha|\leq e^{-\frac12 N_1^{\rho_4}},
\end{align*}
a contradiction. 
In conclusion, assuming  $\omega\in\widetilde\Omega_N$,  we  have proven  that for all $\sigma\in\R$,   the number of $\sigma$-bad elementary $N_1$-regions $\Lambda$ with centers $(k,n)$ satisfying  $|k|\leq N$ and $2N_1<|n|\leq 10N$  is at most 
\begin{align*}
N_1^{2C_1}\leq N^{\rho_2/4}/2
\end{align*}
if $\widetilde C^2\geq C/\rho_2$ for some $C=C(b,d)>0.$  Combining with  Lemma~\ref{lsclem1},  we get  actually  the sublinear bound  $N^{\rho_2/4}$ of  all $\sigma$-bad elementary $N_1$-regions  with centers belonging to $\Lambda\subset  [-N,N]^b\times [-10N,10N]^{d}$.

%We are ready to apply Lemma \ref{mcl}.  % and assume $\sigma\notin\Sigma_{N_2}$ with $\Sigma_{N_2}$ being given by (2) of {\bf Claim 2}. 
%We first let 
%\begin{align*}
%\widetilde C^2\rho_1<\rho_2,
%\end{align*}
%which implies 
%\begin{align*}
%e^{-\frac 23 N_1^{\rho_2}}< e^{-N^{\rho_1}}.
%\end{align*}
%So  it suffices to   fix $\Lambda(\tilde N)\in (0,n)+\mathcal{ER}_0(\tilde N)$ with $|n|\leq 2N_1.$ 
Now let $\tilde N\in[\sqrt{N}, N]$ and $\Lambda(\tilde N)\in (0,n)+\mathcal{ER}_0(\tilde N)$ with $|n|\leq 3N$. We  apply  Lemma~\ref{mcl}
with  \begin{align*}
T(\sigma)={{H}}_{\Lambda(\tilde N)}(\sigma), \ \xi=\xi_1=e^{- 10N_1^{\rho_2}}.
\end{align*}
It remains to verify the assumptions of Lemma \ref{mcl}.
%Obviously, $K_1=O(1)$.
Obviously, one has
\begin{align*}
K_1=O(\tilde N).%,  M=|\bar\Lambda|\leq C(b,d)\widetilde N^{1/{10}},
%&& ||G_{\Lambda\setminus \Lambda_1}(E;x)||\leq
%K_2=e^{2\sqrt{N_1}}.
\end{align*}
We let $V$ be the union of all $\sigma$-bad $\Lambda\in\mathcal{ER}(N_1)$ with centers belonging to $[-N, N]^b\times [-3N, 3N]^d$. Then by the above sublinear bound conclusion and appying the resolvent identity, we have %by choosing  $\widetilde C\geq C/\rho_1$, 
\begin{align}\label{mb}
M=\#V\leq   N^{\rho_2/4},\ 
%&& ||G_{\Lambda\setminus \Lambda_1}(E;x)||\leq
K_2=e^{2{N_1}^{\rho_2}}.
\end{align}
To apply the Cartan's  estimate, we need the scale $$N_2\sim N_1^{\widetilde C}.$$  Recall that the LDE hold  at the scale $N_2$. %for $y$ being outside a set  of measure at most $e^{-{N_2^{c_1}}}$.
%It implies  \eqref{ldt1newi} and \eqref{ldt2newi} hold at scale $N_2$ for all $\theta+k\omega$ and $|k| \leq N_3$ except a    set
%with measure less than $(2N_3+1)^de^{-{N_2^{c_1}}}$.
Applying the resolvent identity  and using  (2) of Lemma \ref{lsclem2} yield 
\begin{align*}
 \|T^{-1}(\sigma)\|\leq CN_2^Ce^{{N_2}^{\rho_2}}\leq e^{2{N_2}^{\rho_2}}=K_3 
\end{align*}
for $\sigma$ away from a set  of measure at most (since $\rho_1<\rho_2/2$)
$$C\tilde N^{C}e^{-{N_2^{\rho_1}}}+e^{-\frac15{N_2^{\rho_2/2}}}\leq e^{-{N_2^{\rho_1}}/{2}}.$$
It follows from  the assumption  $$\widetilde C\rho_1>\rho_2$$ 
that
\begin{equation*}
10^{-3}\xi_1(1+K_1)^{-1}(1+K_2)^{-1}> e^{-{N_2^{\rho_1}}/{2}}.
\end{equation*}
%which is not sufficient to imply (iii). The argument does no work   !!!
This verifies (iii) of Lemma \ref{mcl}.
If $\epsilon=e^{-\tilde N^{\rho_2}}$,  then one has   $\epsilon<(1+K_1+K_2)^{-10M}.$  Cover $[-10\tilde N,  10\tilde N]$ with disjoint intervals of length $e^{- 10N_1^{\rho_2}}$ and take into account of all $\Lambda(\tilde N)$ with $\tilde N\in [\sqrt{N}, N]$.
Then by \eqref{mc5} of Lemma \ref{mcl}, one obtains 
since $\rho_1<\rho_2/5$ that 
\begin{equation*}
\mathrm{meas}(\Sigma_{N})\leq CN^Ce^{ 10N_1^{\rho_2}}e^{-\frac{c{{\tilde N}^{\rho_2}}}{CN_1N_2  N^{\rho_2/4}\log  \tilde N}}\leq e^{-N^{\rho_2/5}}\leq  e^{-N^{\rho_1}}.
\end{equation*}
In addition, if $\sigma\notin\Sigma_N$, then  for all $\tilde N\in [\sqrt{N}, N]$,  $\Lambda\in (0,n)+\mathcal{ER}_0(\tilde N)$ with $|n|\leq 3 N$,
\begin{align}\label{sublg}
\|G_{\Lambda}(\sigma)\|&\leq e^{\tilde N^{\rho_2}}.
%|G_{\Lambda}(\sigma)(j;j')|&\leq e^{-\gamma_N|j-j'|}\ {\rm for}\ |j-j'|\geq N^{\rho_3},
\end{align}

Finally,  for $\sigma\notin\Sigma_N$,  it suffices to prove the exponential off-diagonal decay of $G_{\Lambda}(\sigma)$ for $\Lambda\in (0,n)+\mathcal{ER}_0( N)$ with $|n|\leq 2 N$.  This will be completed using the coupling lemma (i.e., Lemma 4.2 of \cite{HSSY}).  %(cf. Lemma 3.4 )% with $\tau=\frac12, b=\rho_2$.  
The sublinear bound can be established as follows: Assume that  $\mathcal{F}$ is a (any) family  of pairwise disjoint bad $M$-regions in  $\Lambda$ with $\sqrt{N} + 1 \leq  M\leq  2\sqrt{N} + 1$.   Then from Lemma~\ref{lsclem1}, 
%{\color {red} What is claim 1 here?? }{\bf Claim 1},  
$\#\mathcal{F}\leq N^{\rho_2/4}\leq \frac{N^{\rho_2}}{M}  $ since $\rho_2>\frac23.$
On the other hand, the sub-exponential growth of $\|G_\Lambda(\sigma)\|$ has been  given by \eqref{sublg}.  As a result, we get  the exponential off-diagonal decay estimate  with the decay rate $\gamma_N=\gamma_{N_2}-N^{-\kappa}. $

Collecting all the restrictions on $\widetilde C, \rho_1, \rho_2$ yields 
\begin{align*}
1>\rho_2>2/3,\, 0<\rho_1<\rho_2/5,\ \widetilde C^2\geq C/\rho_2, \ \widetilde C>\frac{\rho_2}{\rho_1}.
\end{align*}
%So it suffices to first  let  ${\frac{\rho_2}{\rho_1}}>C/\sqrt{\rho_2}$, i.e., 
%$$\rho_1<c\rho_2,$$
%and then choose $\widetilde C \in({\frac{C}{\sqrt{\rho_2}}}, \frac{\rho_2}{\rho_1}).$
This completes the  proof of Lemma~\ref{lsclem}.

\end{proof}
	{\begin{rem} We cannot use Lemma \ref{mcl} requiring  
%There is a gap in the previous arguments using Cartan's estimates. In fact, we first let
 all $\Lambda\in(k,n)+\mathcal{ER}_0(N_1)$ with $|n|>2N_1$ are $\sigma$-good. This leads to removing  a set (of $\sigma$) of measure at most 
\begin{align*}
e^{-\frac23 N_1^{\rho_2}}<e^{-N^{\rho_1}}.
\end{align*} 
So we must have 
\begin{align*}
\frac 23N_1^{\rho_2}>N^{\rho_1}.
\end{align*}
To apply Lemma \ref{mcl}, one uses the sublinear bound on number of bad $\Lambda\in\mathcal{ER}(N_1)$
and applies  the resolvent identity. This will lead to the bound 
\begin{align*}
K_2=e^{\frac43 N_1^{\rho_2}}.
\end{align*}
To fulfill the condition (iii) of Lemma \ref{mcl}, we need to another scale 
\begin{align*}
N_1<N_2<N
\end{align*}
and use the resolvent identity to remove further a set (of $\sigma$) of measure at most 
\begin{align*}
e^{-\frac 12 N_2^{\rho_1}}.
\end{align*}
Recalling the measure bound of (iii) in Lemma \ref{mcl}, it requires that
\begin{align*}
10^{-3}\xi_1(1+K_1)^{-1}(1+K_2)^{-1}>e^{-\frac 12 N_2^{\rho_1}},
\end{align*}
that is 
\begin{align*}
{\frac 43 N_1^{\rho_2}}<\frac12 N_2^{\rho_1}.
\end{align*}
This is impossible since 
\begin{align*}
N_2^{\rho_1}<N^{\rho_1}<\frac 23N_1^{\rho_2}.
\end{align*}
\end{rem}}

\begin{proof}[Proof of Theorem~\ref{ldtthm}]
The proof is just a combination of Lemmas~\ref{inilem},~\ref{intlem} ~and~\ref{lsclem}. %and Lemma \ref{lsclem}.
\end{proof}

\section{Nonlinear Analysis}
	Relying on the analysis in section~\ref{ldtsec},  we will prove the existence of Anderson localized states using a Newton scheme and the Lyapunov-Schmidt decomposition. 
	We iteratively solve the $Q$  and then the $P$-equations.
	The large deviation estimates in $\sigma$ in Theorem~\ref{ldtthm} will be converted into estimates in the amplitudes $a$ by using 
	a semi-algebraic projection lemma (cf. Lemma~\ref{proj} below), and amplitude-frequency modulation, $\omega=\omega(a)$.

	Throughout this section we use the following notation for convenience. Given $X,Y\geq0$,  we write $X\lesssim Y$ if there is some $C=C(b,d)>0$ so that $X\leq CY.$ We write  $X\sim Y$ if there is some $C=C(b,d)>1$ so that   $C^{-1}X\leq Y\leq C  X.$
	
	%The key point is definitely the estimate of Green's functions for linearized operators when solving the $P$-equation. It turns out the LDT (cf. Theorem \ref{ldtthm}) plays the central role, which %combined with the projection lemma of Bourgain (based on Yomdin-Gromov triangulation theorem,  cf. Lemma \ref{proj} below) will lead to converting estimates in $\sigma$ into that of $\omega$, %and finally $a.$
	%\subsection{}
	\subsection{The Newton scheme}
Assuming that the approximate solutions $q^{(l)}$ of \eqref{Peq} have been constructed  for $1\leq l\leq r$, we aim to construct $q^{(r+1)}$. Toward that purpose, we let 
$$ q^{(r+1)}=q^{(r)}+\Delta_{r+1} q.$$
Consider the linearized equation at $q^{(r)}$
\begin{align*}
 {{F}}( q^{(r)})+{{H}}\Delta_{r+1}  q=0,
\end{align*}
where 
\begin{align*}
H={{H}}(q^{(r)})&=D(0)+\varepsilon \Delta+\delta{T}_{ q^{(r)}}
%&={{D}}_\omega+\varepsilon {{T}}+\delta{T}_{ q^{(r)}}.
\end{align*}
with ${T}_{ q^{(r)}}$ given by \eqref{toep} for $\tilde q=q^{(r)}$. 
{Due to the small divisors difficulty,  the invertibility of the infinitely  dimensional operator $H$ cannot be expected.  For this reason, we  would like to deal with the truncated finitely dimensional operator $H_\Lambda=R_\Lambda H R_\Lambda$  and establish  estimates on $H_\Lambda^{-1}$. The error of this truncation can be well controlled based on the exponential off-diagonal decay estimates of both $H$ and $H_\Lambda^{-1}$, which is compatible with the Newton iteration, cf. page 374 of  \cite{Bou98} or  chapter 18 of \cite{Bou05} for details.}  More precisely,  we may  instead solve the following smoothed (or truncated)  linearized equation 
\begin{align*}
 {{F}}( q^{(r)})+R_{[-M^{r+1}, M^{r+1}]^{b+d}\setminus\mathcal{S}}{{H}}R_{[-M^{r+1},  M^{r+1}]^{b+d}\setminus\mathcal{S}}\Delta_{r+1} q=0,
\end{align*}
which leads to estimating the Green's function 
\begin{align*}
{{G}}_{r+1}= \left(R_{[-M^{r+1}, M^{r+1}]^{b+d}\setminus\mathcal{S}}{{H}} R_{[-M^{r+1}, M^{r+1}]^{b+d}\setminus\mathcal{S}}\right)^{-1}. 
\end{align*}
The following estimates  suffice for the convergence of the approximate solutions with exponential decay: 
\begin{align*}
\| {{G}}_{r+1}\|&\leq  M^{(r+1)^C}, \\
| {{G}}_{r+1}((k',n');(k,n))|&\leq e^{-c|k-k'|-c |n-n'|}\ {\rm for} \ |k-k'|+|n-n'|\geq (r+1)^C
\end{align*}
for some $c, C>0.$ The main purpose of this section is  thus to establish the above estimates assuming $0<\varepsilon+\delta\ll1$ and additional restrictions on $\omega$ (and then $a$).  %This will be finished via Fr\"ohlich-Spencer \cite{FS83} multi-scale analysis, semi-algebraic sets theory and matrix-valued Cartan's estimates of Bourgain-Goldstein-Schlag \cite{BGS02} (cf. also \cite{Bou05,BW08, Wan16}). 
%\begin{rem}
%We want to remark that the above procedures only give values of $q^{(l)}\big|_{\mathcal{S}^c}$ and the values $q^{(l)}\big|_{\mathcal{S}}$ will be determined with an additional implicit function theorem. 
%\end{rem}

	\subsection{Extraction of parameters}
		Note that  $u^{(0)}$ (cf. \eqref{u0}) can be represented in the symmetric form %coordinates as 
\begin{align}\label{lattq0}
u^{(0)}(t, n)=\sum_{(k,n)\in\mathcal{S}}q^{(0)}(k,n)\cos(e_l\cdot\omega^{(0)}t),%e^{i\omega_l^{(0)} t}.
%{\bar z}^{(0)}(t,n)=
\end{align} 
where $
\mathcal{S}=\{(e_l,n^{(l)})\}_{l=1}^b\cup\{(-e_l,n^{(l)})\}_{l=1}^b
$ is given by Theorem \ref{mthm} and 
\begin{align}\label{q0defn}
q^{(0)}(e_l,n^{(l)})=q^{(0)}(-e_l,n^{(l)})=a_l/2\ {\rm for}\ 1\leq l\leq b.
\end{align}
The solutions $q$ on $\mathcal S$ are held {\it fixed}:  $q=q^{(0)}$ on $\mathcal S$, the 
$Q$-equations are used instead to solve for the frequencies. 
Solving  the equations at the initial step
\begin{align*}
{F}\big|_{\mathcal{S}}\left(q^{(0)}\right)=0
\end{align*}
leads to, for $1\leq l\leq b$, the initial modulated frequencies
\begin{align}
\nonumber\omega_l^{(1)}&=\sqrt{(\omega_l^{(0)})^2+C_{p+1}^{p/2}2^{-p}a_l^p\delta}\\
\label{omgdef}&=\omega_l^{(0)}+\frac{C_{p+1}^{p/2}2^{-p}a_l^p\delta}{\sqrt{(\omega_l^{(0)})^2+C_{p+1}^{p/2}2^{-p}a_l^p\delta}+\omega_l^{(0)}}
\end{align}
satisfying $$\det \left(\frac{\partial \omega^{(1)}}{\partial a}\right)\sim \delta^b.$$

Along the way, 
%and denoting the final modulated frequency by $\omega:=\omega^{\infty}$,
we will show that $\omega^{(r)}=\omega^{(0)}+O(\delta)$, for all $r=0, 1, 2, ...$. 
% If we denote the range of $\omega=\omega(a)$ to be $\Omega$.  
So $\omega^{(r)}\in \Omega$, the frequency set in Theorem~\ref{ldtthm}.
Hence Theorem~\ref{ldtthm} is at our disposal to solve the $P$-equations,
and moreover we can regard $\omega\in\Omega$
as  an independent parameter.

\subsection{The initial iteration steps}
	Let $M$ be a large integer. %Recall that  $F=\mathcal{F}_\omega\big|_{\mathcal{S}^c}.$
From \eqref{mk1thm} of Theorem \ref{clusthm}, we can set $L=(\varepsilon+\delta)^{-\frac{1}{10^4db^4}}$ and $\eta=(\varepsilon+\delta)^{\frac{1}{8b}}$. 
	Let $q^{(0)}(\omega,a)=q^{(0)}(a)$  with $(\omega,a)\in\Omega\times [1,2]^b=[-C\delta, C\delta]^b\times [1,2]^b$.
	 %and $\Omega$ being given in the previous subsection.  % \eqref{omgset}.  
	
	We start by constructing $q^{(1)}(\omega, a)$. Obviously, 
	$${F}(q^{(0)})=O(\varepsilon+\delta)$$%O(\sqrt{\varepsilon+\delta})$$
	%for $|k|\leq (\varepsilon+\delta)^{-\frac{1}{10^4db^4}},$
	 and ${\rm supp}\  {F}(q^{(0)})\subset \Lambda_C$ for some $C=C(p, \mathcal{S})>0.$ It suffices to estimate
	\begin{align*}
	H_M^{-1}=(R_{\Lambda_M\setminus\mathcal{S}}({D}(0)+\varepsilon\Delta+\delta{T}_{q^{(0)}})R_{\Lambda_M\setminus\mathcal{S}})^{-1}.
	\end{align*}
	It follows from  \eqref{mk1thm} and the Neumann series argument  that 
	\begin{align*}
	\|H_M^{-1}\|&\leq 2(\delta+\varepsilon)^{-\frac{1}{4b}},\\
	|H_M^{-1}(j;j')|&\leq 2(\delta+\varepsilon)^{-\frac{1}{4b}} e^{-c|j-j'|}\  {\rm for}\ j\neq j'\in \Lambda_M\setminus\mathcal{S} 
	\end{align*}
	since $M\ll (\varepsilon+\delta)^{-\frac{1}{10^4db^4}}$ is fixed. We define 
	\begin{align*}
	\Delta_1q=-H_M^{-1}{F}(q^{(0)}),
	\end{align*}
	and consequently
	\begin{align*}
	q^{(1)}=q^{(0)}+\Delta_1q=q^{(0)}-H_M^{-1}{F}(q^{(0)}).
	\end{align*}
	%The whole $q^{(1)}$ is then obtained by 

	%Substituting $q^{(1)}$ into the $Q$-equations and using the implicit function theorem,
	%\begin{align*}
	%q^{(1)}(\pm e_l,n^{(l)})=a_l/2\ (1\leq l\leq b),\ {F}\big|_{\mathcal{S}}(q^{(1)})=0,
	%\end{align*}which 
	%yields the modulated frequencies at the second step:

	%\begin{align*}
	%\omega_l^{(2)}=\omega_l^{(1)}+\frac{(\varepsilon+\delta) \varphi_l^{(1)}(\omega, a)}{\sqrt{\omega_l^2+(\varepsilon+\delta) \varphi_l^{(1)}(\omega, a)}+\omega_l} \ (1\leq l\leq b),
	%\end{align*}
	%where $\omega_l^{(1)}$  is given  by \eqref{omgdef} and $\varphi_l^{(1)}(\omega,a)$ is some smooth function of $\omega,a$.

        Substituting $q^{(1)}$ into the $Q$-equations, we obtain
	%\begin{align*}
	%\omega_l^2&=(\omega_l^{(0)})^2+\varepsilon[\frac{2}{a_l}(\Delta q^{(1)})(\pm e_l,n^{(l)})](\omega,a)\\
	%&\ \ +\delta[\frac{2}{a_l}(q^{(1)})_{*}^{p+1}(\pm e_l,n^{(l)})](\omega,a)\\
	%&:=(\omega_l^{(0)})^2+(\delta+\varepsilon)\varphi_l^{(1)}(\omega,a),
	%\end{align*}
	\begin{align*}
	\omega_l^2&=(\omega_l^{(0)})^2+\varepsilon[\frac{2}{a_l}(\Delta q^{(1)})(\pm e_l,n^{(l)})](\omega,a)\\
	&\ \ +\delta[\frac{2}{a_l}(q^{(1)})_{*}^{p+1}(\pm e_l,n^{(l)})](\omega,a), \, \l= 1, 2, ..., b.
	\end{align*}
	%where 
	%$\omega=(\omega_l)_{l=1}^b$  is defined  by \eqref{omgdef} and $\varphi_l^{(1)}(\omega,a)$ is  smooth in  $\omega,a$.
	Since the second and the third terms are smooth in $\omega$ and $a$,  using the. implicit function theorem yields $\omega^{(2)}=\omega^{(2)}(a)=\{\omega_l^{(2)}(a)\}_{l=1}^b$, written in the form,
	%\begin{align*}
	%\omega_l^{(2)}(a)=\omega_l^{(0)}+\frac{(\varepsilon+\delta) \varphi_l^{(1)}(\omega_l^{(1)}(a), a)}{\sqrt{(\omega_l^{(0)})^2+(\varepsilon+\delta) \varphi_l^{(1)}(\omega_l^{(1)}(a), a)}+\omega_l^{(0)}} \ (1\leq l\leq b).
	%\end{align*}}
	\begin{align*}
	\omega_l^{(2)}(a)=\omega_l^{(0)}+(\varepsilon+\delta) \varphi_l^{(2)}(a), \,  \l= 1, 2, ..., b,
	%{\sqrt{(\omega_l^{(0)})^2+(\varepsilon+\delta) \varphi_l^{(1)}(\omega_l^{(1)}(a), a)}+\omega_l^{(0)}} \ (1\leq l\leq b).
	\end{align*}
	with a smooth function $\varphi_l^{(2)}$. (Compare with \eqref{omgdef}.)
	Define
	$$\Delta_rq=-H_{M^{r}}^{-1}{F}(q^{(r-1)}),$$
	and
	$$q^{(r)}=q^{(r-1)}+\Delta_rq.$$
	The above constructions can be performed for $r_0$ steps,  with $r_0$ satisfying
	\begin{align*}
	M^{r_0}\leq  (\varepsilon+\delta)^{-\frac{1}{10^4db^4}}<M^{r_0+1}.
	\end{align*}
	So we have obtained  $q^{(l)}=q^{(l)}(\omega,a)$ ($1\leq l\leq r_0$)  for all $(\omega ,a)\in \Omega\times [1,2]^b$. 
	
	\subsection{The Inductive Theorem}

	We introduce the Inductive Theorem.
	  \begin{thm}
	 For $r\geq r_0,$ we have
	%  {\bf Inductive Assumptions ($r\geq r_0$):}
	\begin{itemize}
	\item[{\bf(Hi)}] ${\rm supp}\ q^{(r)}\subset \Lambda _{M^r}$. 
	%\begin{rem}
	%We have $q^{(r)}=(q_1^{(r)},q_2^{(r)})\in\C^2.$
	%\end{rem}
	\item[{\bf (Hii)}] $\|\Delta_r q\|<\delta_r,\ \|\partial \Delta_r q\|<{\bar\delta}_{r}$, where $\partial$ refers to derivation in $\omega$ or $a$,   $$q^{(r)}=q^{(r-1)}+\Delta_r q,$$
	 and $\|\cdot\|=\sup_{\omega,a}\|\cdot\|_{\ell^2(\Z_*^{b+d})}.$
	\begin{rem}
	The size of $\delta_r, \bar\delta_r$ will satisfy $\log\log \frac{1}{\delta_r+\bar\delta_r}\sim r.$ Precisely, we have 
	\begin{align*}
	\delta_r<\sqrt{\varepsilon+\delta}e^{-(\frac{4}{3})^r},\ \bar\delta_r<\sqrt{\varepsilon+\delta}e^{-\frac12(\frac{4}{3})^r}.
	\end{align*}
	\end{rem}
	\item[{\bf (Hiii)}] $|q^{(r)}(k,n)|\leq e^{-c(|k|+|n|)}$ for some  $c>0.$
	\begin{rem}
	The constant $c>0$ will decrease slightly along the iterations but remain bounded away from $0$. This will become clear in the proof. We also remark that $q^{(r)}$ can be defined as a $C^1$ function on the entire parameter space $(\omega,a)\in\Omega\times [1,2]^b$ by using a standard extension argument, cf. \cite{Bou98, BW08, KLW24}.  So one can apply the implicit function theorem to solve the $Q$-equations 
	%by letting 
	%\begin{align*}
	%q^{(r)}(\pm e_l,n^{(l)})=a_l/2\ {\rm for}\ 1\leq l\leq b.
	%\end{align*}
	%This leads to 
	leading to 
		\begin{align}\label{algeq}
	\omega_l^{(r)}(a)=\omega_l^{(0)}+(\varepsilon+\delta)\varphi_l^{(r)}(a) \ (1\leq l\leq b),
	%\omega_l+(\varepsilon+\delta) \varphi_l^{(r)}(\omega, a) \ (1\leq l\leq b),
	\end{align}
	%and then 
	%\begin{align*}
	%\omega_l^{(r)}=\omega_l+\frac{(\varepsilon+\delta) \varphi_l^{(r)}(\omega, a)}{\sqrt{\omega_l^2+(\varepsilon+\delta) \varphi_l^{(r)}(\omega, a)}+\omega_l} \ (1\leq l\leq b),
	%\omega_l+(\varepsilon+\delta) \varphi_l^{(r)}(\omega, a) \ (1\leq l\leq b),
	%\end{align*}
	where   $\|\partial \varphi^{(r)}\|=\sup_{1\leq l\leq b}\|\partial \varphi_l^{(r)}\|\lesssim 1$ and $ \varphi^{(r)}=( \varphi^{(r)}_l)_{l=1}^b.$ By ({\bf Hii}), we have 
	\begin{align*}
	|\varphi^{(r)}-\varphi^{(r-1)}|\lesssim (\varepsilon+\delta)\|q^{(r)}-q^{(r-1)}\|\lesssim {(\varepsilon+\delta)}\delta_r.
	\end{align*}
	  %and $\det (\frac{\partial \omega}{\partial a})\geq c\delta^{b}$ for some $c>0.$ 
	Denote by $\Gamma_r$ the graph of $\omega^{(r)}=\omega^{(r)}(a)$. We have $\|\Gamma_r-\Gamma_{r-1}\|\lesssim{(\varepsilon+\delta)}\delta_r.$ Recall that $\omega$ is given by \eqref{omgdef} and $\varphi^{(0)}=0$. Thus we  have a diffeomorphism from $\omega^{(r)}\in\Omega$ to $a\in[1,2]^b$. So we also write $a^{(r)}=a^{(r)}(\omega)$ for $\omega\in\Omega.$
	\end{rem}
	
	\item[{\bf (Hiv)}] There is  a collection $\mathcal{I}_r$ of intervals $I\subset \Omega\times [1,2]^b$ of size $M^{-r^C}$, so that 
	\begin{itemize}
	\item[{\bf(a)}] On each $I\in\mathcal{I}_r,$ $q^{(r)}(\omega,a)$ is given by a rational function in $(\omega,a)$
of degree at most $M^{r^3}$;
	\item[{\bf (b)}] For $(\omega,a)\in\bigcup_{I\in\mathcal{I}_r}I$, 
	\begin{align*}
	\|F(q^{(r)})\|\leq \kappa_r,\ \|\partial F(q^{(r)})\|\leq \bar\kappa_r,
	\end{align*}
	where $\partial$ refers to derivation in $\omega$ or $a$, and $\log\log\frac{1}{\kappa_r+\bar\kappa_r}\sim r$. More precisely, we have 
	\begin{align*}
	\kappa_r<\sqrt{\varepsilon+\delta}e^{-(\frac{4}{3})^{r+2}},\ \bar\kappa_r<\sqrt{\varepsilon+\delta}e^{-\frac12(\frac{4}{3})^{r+2}}.
	\end{align*}
	\item[{\bf (c)}] For $(\omega,a)\in\bigcup_{I\in\mathcal{I}_r}I$ and  $H=H({q^{(r-1)}})$, one has
	\begin{align}
	\label{hivc1}\|H_N^{-1}\|&\leq M^{r^C},\\
	\label{hivc2}|H_N^{-1}(j;j')|&\leq e^{-c|j-j'|}\  {\rm for}\ |j-j'|>r^C,	\end{align}
	where $j=(k,n), j'=(k',n')$ and $H_N$ refers to the restriction of $H$ on $[-N,N]^{b+d}\setminus \mathcal{S}$;
	\item[{\bf (d)}] Each $I\in\mathcal{I}_r$ is contained in some $I'\in\mathcal{I}_{r-1}$ and 
	\begin{align*}
	{\rm meas}\left(\Pi_a(\Gamma_{r-1}\cap(\bigcup_{I'\in\mathcal{I}_{r-1}}I'\setminus\bigcup_{I\in\mathcal{I}_r}I))\right)\leq M^{-\frac{r}{C(b)}},
	\end{align*}
	where $C(b)>0$ depends only on $b$, and  $\Pi_a$ denotes the projection of the set on the $a$-variables. 
	\end{itemize}
	\end{itemize}
	\end{thm}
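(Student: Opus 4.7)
The proof proceeds by induction on $r$, the base case $r=r_0$ being essentially the construction given in the preceding subsection (the initial scales are handled by pure Neumann series once $\eta$ and $L$ are chosen as in Section~3.2). Assuming the properties $(\mathbf{Hi})$--$(\mathbf{Hiv})$ hold at stage $r$, I would construct $q^{(r+1)}=q^{(r)}+\Delta_{r+1}q$, where $\Delta_{r+1}q=-H_{M^{r+1}}^{-1}F(q^{(r)})$ and $H=H(q^{(r)})=D(0)+\varepsilon\Delta+\delta T_{q^{(r)}}$ is restricted to $\Lambda_{M^{r+1}}\setminus\mathcal S$. The heart of the argument is verifying that, after removing a small parameter set, $(\mathbf{Hivc})$ holds for $H_{M^{r+1}}^{-1}$; the remaining items $(\mathbf{Hi})$--$(\mathbf{Hiii})$, $(\mathbf{Hiva,b,d})$ will then follow from the standard Newton-scheme bookkeeping.

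To obtain $(\mathbf{Hivc})$ I would first apply Theorem~\ref{ldtthm} to $H(\sigma)=H(q^{(r)})$ at scale $N=M^{r+1}$, with $\phi=\delta\cdot q^{(r)}_*$. By $(\mathbf{Hiii})$ this $\phi$ enjoys the exponential decay required in Section~3, and by the modulation formula \eqref{algeq} one has $\omega^{(r)}\in\Omega=\omega^{(0)}+[-C\delta,C\delta]^b$. Theorem~\ref{ldtthm} gives the $(\rho_1,\rho_2,\rho_3,\gamma_N,N)$-LDE away from a $\sigma$-set of measure $e^{-N^{\rho_1}}$, valid for $\omega\in{\rm DC}(N)\cap\widetilde\Omega_N$. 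The Green's function $H_N^{-1}$ is recovered by setting $\sigma=0$, but since one needs control for \emph{all} relevant $(k,n)\in\Lambda_N\setminus\mathcal S$, one applies the LDE on each $N$-elementary region shifted by $(k\cdot\omega,n)$ and glues with the resolvent/coupling identity, as in Section~3.2. This yields \eqref{hivc1}--\eqref{hivc2} provided the pair $(\omega,\sigma=0)$ avoids a semi-algebraic bad set.

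The crucial step is converting the $\omega$-excision into an $a$-excision along $\Gamma_r$. Here I would invoke the semi-algebraic projection lemma of Bourgain (cited as Lemma~\ref{proj}) to the semi-algebraic set $\widetilde\Omega_N^c$ of degree $\leq N^C$ and measure $\leq e^{-N^\zeta}$; since the diffeomorphism $\omega=\omega^{(r)}(a)$ has Jacobian comparable to $\delta^b$ (by $(\mathbf{Hiii})$ and \eqref{omgdef}), the pullback to $a$-space is still a semi-algebraic set of controlled degree and of measure $\lesssim\delta^{-b}e^{-N^\zeta}\ll M^{-r/C(b)}$. Combined with the Diophantine excision ${\rm DC}(N)$, whose total measure is $\ll\delta^b$ on $\Omega$, this gives the measure bound $(\mathbf{Hivd})$. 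The rational-function representation $(\mathbf{Hiva})$ follows because $q^{(r+1)}$ is obtained from $q^{(r)}$ by one application of $H_{M^{r+1}}^{-1}$, which is a matrix inverse of polynomial entries, hence rational of controlled degree (multiplying the degrees by at most a factor comparable to $M^{b+d}\cdot(r+1)^C$, which remains $\leq M^{(r+1)^3}$).

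Finally, the quadratic convergence gives $(\mathbf{Hii})$ and $(\mathbf{Hivb})$: since $F(q^{(r+1)})=F(q^{(r)})+H\Delta_{r+1}q+O(\|\Delta_{r+1}q\|^2)=R_{\Lambda_{M^{r+1}}^c}F(q^{(r)})+\delta\cdot O(\|\Delta_{r+1}q\|^2)$, the off-diagonal exponential decay of $H_{M^{r+1}}^{-1}$ together with the tail bound on $F(q^{(r)})$ from $(\mathbf{Hiii})$ yields $\kappa_{r+1}\leq\kappa_r^{4/3}$ (hence the doubly-exponential rate stated), and the parameter derivatives are bounded by differentiating the Newton iteration and absorbing the loss $M^{(r+1)^C}$ into the square root exponent. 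Covariance $(\mathbf{Hi})$ is immediate from the support of $H_{M^{r+1}}^{-1}$, and the decay $(\mathbf{Hiii})$ is preserved with a slight loss $c\mapsto c-N^{-\kappa}$ coming from $\gamma_N$, the total loss being summable in $r$. The main obstacle, and the place where all previous sections are bundled together, is the simultaneous verification that the excised parameter set for $(\mathbf{Hivc})$ can be expressed as a semi-algebraic set of polynomially bounded degree after pullback by $\omega^{(r)}(\cdot)$, so that Lemma~\ref{proj} applies uniformly in $r$; this is the analogue, in the deterministic quasi-periodic setting, of the parameter-reservoir arguments of \cite{BW08,GSW23,KLW23}.
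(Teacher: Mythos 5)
Your skeleton (Newton step, LDT, semi-algebraic considerations, transfer from $\omega$ to $a$) is the right one, but the way you propose to verify ({\bf Hiv, c}) has a genuine gap, and the role you assign to Lemma~\ref{proj} is misplaced. In the paper, the inverse on $\Lambda_{M^{r+1}}\setminus\mathcal S$ is \emph{not} obtained by applying Theorem~\ref{ldtthm} at the full scale $N=M^{r+1}$ to $H(q^{(r)})$; it is obtained by combining the inductive bound on the core box $[-M^r/2,M^r/2]^{b+d}$ with fresh estimates only at the much smaller scale $N_1=(\log N)^{C'}$ on the annulus $W$, and --- crucially --- with $q^{(r)}$ replaced by $q^{(r_1)}$, $r_1\sim\log N_1$, which is legitimate because $\|T_{q^{(r)}}-T_{q^{(r_1)}}\|\lesssim\delta_{r_1}<e^{-N_1}$. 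This replacement is not cosmetic: by ({\bf Hiv, a}) the degree of $q^{(r)}$ as a rational function is $M^{r^3}$, so the bad set you would feed into the projection lemma at stage $r$ has degree $M^{Cr^3}$, and the projection bound $B^{C(d)}\epsilon$ with $\epsilon\sim M^{-r}$ (and likewise $B^{C(d)}\epsilon^{-1}\eta^{1/d}$) is then worthless. Only after passing to $q^{(r_1)}$, whose degree is $M^{Cr_1^3}$ with $r_1\sim\log\log N$, do the degree and measure satisfy the hypotheses of Lemma~\ref{proj} and produce the bound $(\varepsilon+\delta)^{b+1}M^{-(r+1)/C(b)}$. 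Your proposal, which works directly with $q^{(r)}$ and the full-scale LDT, cannot close this bookkeeping.

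Second, the projection lemma is not used to convert the $\omega$-excision into an $a$-excision: that conversion is the easy final step, a Jacobian estimate using $\det(\partial\omega/\partial a)\sim\delta^b$, $\varepsilon\lesssim\delta$, and the closeness of $\Gamma_r$ to $\Gamma_{r_1}$. Lemma~\ref{proj} is needed earlier and for a different purpose: on the annulus the boxes with $|n|\leq 2N_1$ have $|k|\geq M^r/2$, and by the Toeplitz property $H^{-1}_Q(q^{(r_1)})=G_{Q_0}(k\cdot\omega)$, so one must bound the measure of $\omega$ for which the point $(\omega,k\cdot\omega)$ falls into the semi-algebraic bad set $\mathcal K\subset\Omega\times\R$ of the $\sigma$-variable, for \emph{some} $M^r/2\leq|k|\leq N$. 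This is where the steepness parameter $\epsilon\sim M^r$ enters, and since $|k|\geq M^r/2$ does not force every component $|k_l|$ to be large, the lemma has to be applied $b$ times over the possible directions (as in Bourgain's argument). You also leave untreated the boxes with $|n|>2N_1$, which in the paper are handled by a direct $\omega$-excision via the second Melnikov conditions with the eigenvalue roots $\sqrt{\zeta_l}$ of Lemma~\ref{lsclem2}, yielding \eqref{tqr11}--\eqref{tqr12} there. Without the intermediate scale $N_1$, the $q^{(r_1)}$-approximation, and the correct use of the projection lemma on the sections $\sigma=k\cdot\omega$, the measure and degree estimates behind \eqref{hivc1}--\eqref{hivc2} and ({\bf Hiv, d}) do not go through; the remaining items ({\bf Hi})--({\bf Hiii}), ({\bf Hiv, a, b}) are indeed standard once ({\bf Hiv, c}) is in place, as you say.
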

	
	\subsection{Proof of the Inductive Theorem}
	
	%\begin{proof}
	We assume that the Inductive Theorem  holds  for $r_0\leq l\leq r.$ We will prove it for  $l=r+1.$
	  It suffices to establish ({\bf Hiv, c}) and ({\bf Hiv, d}),  and the other inductive  assumptions can be verified  as  in    \cite{BW08} and \cite{KLW24}. For this purpose, we set $N=M^{r+1}$ and 
	  $$N_1=(\log N)^{C'},$$% \ M^{r_1}\leq N_1<M^{r_1+1},$$
	where $C'>1$ will be specified below.  
	
	To establish ({\bf Hiv, c}), we first make approximations on $T_{q^{(l)}}$  for different $q^{(l)}$. It follows from 
	\begin{align}
	\|T_{q^{(r)}}-T_{q^{(r-1)}}\|\lesssim \|\Delta_r q\|<\delta_r,\ \log\log\frac{1}{\delta_r}\sim r,
	\end{align} 
	and assumptions on  $R_{M^r}H({q^{(r-1)}})R_{M^r}$ that  $R_{M^r}H({q^{(r)}})R_{M^r}$ also satisfies essentially  the estimates \eqref{hivc1} and \eqref{hivc2}.  For the set $$ W=([-N,N]^{b+d}\setminus [-M^r/2, M^r/2]^{b+d})\setminus{\mathcal{S}},$$ we use $N_1$-regions  to do estimates. We let
	$$r_1=\left[\frac{\log N_1}{\log \frac43}\right]+1.$$
	Then 
	\begin{align}\label{tq1}
	\|T_{q^{(r)}}-T_{q^{(r_1)}}\|\lesssim\delta_{r_1}\lesssim\sqrt{\varepsilon+\delta}e^{-(\frac43)^{r_1}}< e^{-N_1}.\end{align}
	So we can estimate  $H_{W}^{-1}(q^{(r)})$ with  $H_Q^{-1}(q^{(r_1)})$ for $Q\subset W$.  More precisely, we  want  to  show the following estimates hold for every $Q\in\mathcal{ER}(N_1)$ with $Q\subset W$:
	\begin{align}
\label{tqr11}\|H_{Q}^{-1}(q^{(r_1)})\|&\leq e^{N_1^{\rho_2}},\\
\label{tqr12}|H_{Q}^{-1}(q^{(r_1)})(j;j')|&\leq e^{-c|j-j'|}\  {\rm for}\ |j-j'|>N_1^{\rho_3}.	
\end{align}
This requires additional  restrictions on $(\omega, a).$ For this, we divide into the following cases:\\
	{\bf Case 1.}  Assume $Q\in (k,n)+\mathcal{ER}_0(N_1)$ with $|n|>2N_1$ and $Q\subset W.$  Denote by $\mathcal{C}_1$ the set of all these $Q$.   To estimate $H_Q^{-1}(q^{(r_1)})$, we can impose the following condition
	\begin{align*}
	\min_{\xi=\pm1, 1\leq l\leq N_1^C}|k\cdot\omega+\xi\sqrt{\zeta_l}|>e^{-\frac14N_1^{\rho_2/2}},
	\end{align*}
where $\zeta_l>1/2$ are given by  Lemma \ref{lsclem2}. For $k=0$, the above condition holds trivially, since $\zeta_l>1/2.$ For $k\neq 0$, we can remove $\omega$ directly.  %To remove $\omega$, the above condition makes sense only if $k\neq 0$. However, for the case $k=0$,   the relevant operator 
%\begin{align*}
%R_{\{n:\ (0,n)\in Q\}} \left(
%	\begin{array}{cc}
 %V^2+\frac\varepsilon 2\Delta&  \frac\varepsilon2 \Delta\\
 % \frac\varepsilon2 \Delta&  V^2+\frac\varepsilon2\Delta
%\end{array}\right)R_{\{n:\ (0,n)\in Q\}},
%\end{align*}
%is positive definite since $\inf_nV^2(n)\geq 1$ and $0<\varepsilon\ll1$.  This combined with  arguments  in the proof of {\bf Claim 2} ( cf. the proof of Lemma \ref{lsclem}) yields  $H^{-1}_{k=0, (0,n)\in Q}(q^{(r_1)})$ has desired estimates. So  it is not necessary to remove $\omega$ for the case $k=0.$ 
Assume that  $C'\rho_2>3.$ So by taking account of all $Q\in\mathcal{C}_1$ and by Fubini's theorem,  we can find $I_1\subset \Omega\times[1,2]^b$ with 
$${\rm meas}(I_1)\leq N^Ce^{-\frac14 (\log N)^{C'\rho_2}}\leq  e^{-(\log N)^2}\ll (\varepsilon +\delta)^bM^{-r/2},$$
 so that,  for all $(\omega, a)\notin I_1$ and all  $Q\in \mathcal{C}_1$,  $H_Q^{-1} (q^{(r_1)})$    satisfies  \eqref{tqr11} and \eqref{tqr12}. \\
  {\bf Case 2.} Assume $Q\in (k,n)+\mathcal{ER}_0(N_1)$ with $|n|\leq 2N_1$ and $Q\subset W.$  Denote by $\mathcal{C}_2$ the set of all these $Q$.  In this case it must be that $|k|\geq M^r/2.$ We will use the projection  lemma  of Bourgain   and LDT to remove $\omega.$ 
 For any $Q\in\mathcal{C}_2$, we can write $Q=(k,0)+Q_0 $ with $Q_0\in (0,n)+\mathcal{ER}_0(N_1)$ for $|n|\leq 2N_1,$ and $M^r/2\leq |k|\leq N.$  This motivates us to  consider 
 $$H(\sigma)=H(q^{(r_1)};\sigma)=D(\sigma)+\varepsilon \Delta+\delta T_{q^{(r_1)}},$$
which has been investigated in section \ref{ldtsec}. 
 Recall that $G_Q(\sigma)$ denotes the Green's function of $H(\sigma)$ restricted to $Q$. Then the  Toeplitz property in the $k$-direction of $H(\sigma)$ implies 
 $$H^{-1}_Q(q^{(r_1)})=G_Q(0)=G_{Q_0}(k\cdot\omega).$$
 Note that $\Omega_{N_1}$ given by Theorem \ref{ldtthm} is a semi-algebraic set of degree at most $N_1^C$ (for $N_1>(\varepsilon+\delta)^{-\frac{1}{10^4db^4}}$) and of measure at least $\delta^b-O(N_1^{-C})$.  We may assume $\Pi_{\omega}I\subset \Omega_{N_1}$ for each $I\in\mathcal{I}_{r_1}.$ 	
	So  we can apply  the LDE at the scale $N_1$ on each $I\in\mathcal{I}_{r_1}$:  For all $Q\in (0,n)+\mathcal{ER}_0(N_1)$ with $|n|\leq 2N_1,$ we have for $\sigma$ outside a set  of  measure at most $e^{-N_1^{\rho_1}}$, the following bounds,
		\begin{align}
\label{GQ1}\|G_{Q}(\sigma)\|&\leq e^{N_1^{\rho_2}},\\
\label{GQ2}|G_{Q}(\sigma)(j;j')|&\leq e^{-c|j-j'|}\  {\rm for}\ |j-j'|>N_1^{\rho_3}.	
\end{align}
For the usage of projection lemma of  Bourgain in the present setting (i.e., ${\rm meas}(\Omega)\sim \delta^b$), we need to make more precise descriptions of  the admitted  $\sigma$ first.  We say $\sigma $ is $Q$-bad if either \eqref{GQ1} or \eqref{GQ2} fails.  %for some $Q_0\in (0,n)+\mathcal{ER}_0(N_1)$ with $|n|\leq 2N_1.$
\begin{lem}\label{sgmlem}
Let $C'\rho_1>3$. Fix  $Q\in (0,n)+\mathcal{ER}_0(N_1)$ with $|n|\leq 2N_1.$  Denote by $\Sigma_{Q}$ the set of  $Q$-bad $\sigma\in\R$. Then 
\begin{align*}
\Sigma_Q\subset \bigcup_{1\leq l\leq N_1^C}J_l,
\end{align*}
where each $J_l$ is an interval of length $\sim (\varepsilon +\delta).$
\end{lem}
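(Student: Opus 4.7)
The plan is to identify an explicit perturbative good set of $\sigma\in\R$, determined by the diagonal resonances of ${D}(\sigma)$, on which both \eqref{GQ1} and \eqref{GQ2} follow from a one-line Neumann series argument. Its complement will then be precisely the union of intervals required by the lemma. Concretely, enumerating the sites of $Q$ as $\{(k^{(l)},n^{(l)})\}_{l=1}^{\#Q}$, define
$$\widetilde\Sigma_Q=\bigcup_{1\leq l\leq \#Q,\ \xi=\pm1} J_l^\xi,\qquad J_l^\xi=\left\{\sigma\in\R:\ \bigl|\sigma+k^{(l)}\cdot\omega+\xi\mu_{n^{(l)}}\bigr|<C_0(\varepsilon+\delta)\right\},$$
where $C_0$ is a large absolute constant to be fixed. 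Since $\#Q\leq(2N_1+1)^{b+d}$ and each $J_l^\xi$ has length $2C_0(\varepsilon+\delta)\sim\varepsilon+\delta$, $\widetilde\Sigma_Q$ is a union of at most $N_1^C$ intervals of the required length, so it suffices to prove $\Sigma_Q\subset\widetilde\Sigma_Q$.

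Fix $\sigma\notin\widetilde\Sigma_Q$. Using $\mu_n\in[1,2]$ and the factorization $\mu_n^2-(\sigma+k\cdot\omega)^2=-(\sigma+k\cdot\omega+\mu_n)(\sigma+k\cdot\omega-\mu_n)$, the diagonal ${D}_Q(\sigma)$ of ${H}_Q(\sigma)=R_Q{H}(\sigma)R_Q$ satisfies
$$\inf_{(k,n)\in Q}|\mu_n^2-(\sigma+k\cdot\omega)^2|\geq C_0(\varepsilon+\delta)\cdot\bigl(2-C_0(\varepsilon+\delta)\bigr)\gtrsim C_0(\varepsilon+\delta),$$
so $\|{D}_Q(\sigma)^{-1}\|\lesssim (C_0(\varepsilon+\delta))^{-1}$. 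The off-diagonal perturbation $P=\varepsilon\Delta+\delta {T}_{q^{(r_1)}}$ satisfies $\|P\|\leq C(\varepsilon+\delta)$ and, entrywise, $|P(j;j')|\leq C(\varepsilon+\delta)e^{-\gamma'|j-j'|}$ for some $\gamma'$ slightly smaller than $\gamma$, where the bound on ${T}_{q^{(r_1)}}$ uses (\textbf{Hiii}) and the stability of exponential decay under $p$-fold convolution. Choosing $C_0$ large so that $\|P{D}_Q(\sigma)^{-1}\|\leq 1/2$, Neumann series yields
$$\|G_Q(\sigma)\|\leq 2\|{D}_Q(\sigma)^{-1}\|\lesssim(\varepsilon+\delta)^{-1}\ll e^{N_1^{\rho_2}},$$
which gives \eqref{GQ1}.

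For \eqref{GQ2}, expand $G_Q(\sigma)={D}_Q(\sigma)^{-1}\sum_{s\geq0}(-P{D}_Q(\sigma)^{-1})^s$ and estimate each chain sum using $|P(j;j')|\leq C(\varepsilon+\delta)e^{-\gamma'|j-j'|}$ together with $\|{D}_Q^{-1}\|\lesssim(C_0(\varepsilon+\delta))^{-1}$. A standard Combes--Thomas type calculation gives, for $j\neq j'$,
$$|G_Q(\sigma)(j;j')|\leq \frac{C}{\varepsilon+\delta}\,e^{-\gamma''|j-j'|}$$
with $\gamma''$ a little below $\gamma'$. Since the scale constraint $N\geq(\varepsilon+\delta)^{-1/(10^4db^4)}$ (from Case 2) and $N_1=(\log N)^{C'}$ with $C'\rho_3>1$ imply $N_1^{\rho_3}\gg\log(\varepsilon+\delta)^{-1}$, for $|j-j'|\geq N_1^{\rho_3}$ the prefactor $(\varepsilon+\delta)^{-1}$ is absorbed at the cost of a slightly smaller exponential rate $c<\gamma''$, yielding \eqref{GQ2}. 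Hence $\sigma\notin\Sigma_Q$, which proves $\Sigma_Q\subset\widetilde\Sigma_Q$. The only delicate point is calibrating the effective decay rate so that it survives the absorption of the $(\varepsilon+\delta)^{-1}$ prefactor while remaining uniform across scales; this is guaranteed by the choice $N_1=(\log N)^{C'}$ together with $C'\rho_3>1$.
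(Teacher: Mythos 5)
Your proposal is correct and follows essentially the same route as the paper: cover the resonant $\sigma$ by the $\leq N_1^C$ intervals $\{|\sigma+k\cdot\omega\pm\mu_n|\lesssim\varepsilon+\delta\}$, $(k,n)\in Q$, and on the complement run a Neumann series off the diagonal $D_Q(\sigma)$, noting that the resulting bound $\lesssim(\varepsilon+\delta)^{-1}$ (and the prefactor in the off-diagonal estimate) is harmless because $N\gtrsim(\varepsilon+\delta)^{-1/(10^5db^4)}$ and $C'\rho_1>3$ give $(\varepsilon+\delta)^{-1}<e^{N_1^{\rho_1/2}}\ll e^{N_1^{\rho_2}}$, which is exactly the paper's inequality \eqref{edn1}. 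The only cosmetic difference is that you spell out the product lower bound on the diagonal and the chain-sum absorption of the $(\varepsilon+\delta)^{-1}$ prefactor, which the paper leaves to the ``standard Neumann series'' citation.
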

\begin{proof}[Proof of Lemma \ref{sgmlem}]
Note that $N=M^{r+1}\geq M^{r_0+1}>(\varepsilon +\delta)^{-\frac{1}{10^5db^4}}$ from our assumptions. So we have  
\begin{align*}
N_1\sim (\log N)^{C'}>c\log^{C'} \frac{1}{\varepsilon+\delta}>\log^{2/\rho_1} \frac{1}{\varepsilon+\delta}
\end{align*}
assuming $C'\rho_1>3,$  which implies 
\begin{align}\label{edn1}
(\varepsilon+\delta)^{-1}<e^{N_1^{\rho_1/2}}.
\end{align}
%First, suppose that  for some  $C>1,$
%\begin{align}
%C(\varepsilon +\delta)^{-1}\leq e^{\frac12 N_1^{\rho_2}}.
%\end{align}
So we can define for each $j=(k,n)\in Q$ and $\xi =\pm 1$ the set 
\begin{align*}
J_{j,\xi}=\{\sigma\in\R:\ |\sigma+k\cdot\omega+\xi \mu_n|\leq C (\varepsilon+\delta)\},
\end{align*}
where $C>1$ depends  only on $\|\Delta\|, \|T_{q^{(r_1)}}\|.$
From \eqref{edn1} and the Neumann series argument, we must have 
\begin{align*}
\Sigma_Q\subset \bigcup_{j\in Q,\xi=\pm1} J_{j,\xi}.
\end{align*}
This proves Lemma \ref{sgmlem}. % assuming \eqref{}.  

\end{proof}

Now fix $I_0\in\mathcal{I}_{r_1}.$   Solving  the $Q$-equation at $r=r_1$ leads to the graph $\Gamma_{r_1}$. Then $\tilde I=\Pi_\omega(\Gamma_{r_1}\cap I_0)$ is an interval of size $\sim \varepsilon+\delta.$
For $\omega\in \tilde I$, let $\tilde \Sigma$ be the set of $\sigma\in\R$ so that  either \eqref{GQ1} or \eqref{GQ2} fails  for some $Q\in (0,n)+\mathcal{ER}_0(N_1)$ with $|n|\leq 2N_1$.  Then by Lemma \ref{sgmlem}, $\tilde \Sigma$ can be covered by $N_1^C$ intervals of size $\sim (\varepsilon+\delta).$
Pick $J$ to be one of such intervals  and consider $\mathcal{K}:=\tilde I\times (\tilde \Sigma\cap J)\subset I_1\times J\subset \R^b\times \R.$
We will show that $\mathcal{K}$ is a semi-algebraic set of degree $\deg \mathcal{K}\leq N_1^CM^{Cr_1^3}$ and measure ${\rm meas}(\mathcal{K})\leq C(\delta+\varepsilon)^b e^{-N_1^{\rho_1}}$. This measure bound follows directly from the LDT at scale $N_1$ and the Fubini's theorem. For the semi-algebraic description of $\mathcal{K}$, we need the following lemma. 
\begin{lem}[Tarski-Seidenberg Principle, \cite{Bou07}]\label{tsp}
Denote by $(x,y)\in\mathbb{R}^{d_1+d_2}$ the product variable. If ${S}\subset\mathbb{R}^{d_1+d_2}$ is semi-algebraic of degree $B$, then its projections $\Pi_x{S}\subset\mathbb{R}^{d_1}$ and
 $\Pi_y{S}\subset\mathbb{R}^{d_2}$ are semi-algebraic of degree at most $B^{C}$, where $C=C(d_1,d_2)>0$.
\end{lem}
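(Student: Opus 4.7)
The plan is to reduce by induction to the case of eliminating a single coordinate and iterate $d_2$ times; a one-step bound of the form $B \mapsto B^{C_0}$ then compounds to give the claimed $B^{C(d_1,d_2)}$. So the core task is: given a semi-algebraic set $\mathcal{S} \subset \R^{d_1+1}$ described by a Boolean combination $\Phi$ of sign conditions on polynomials $P_1(x,y), \ldots, P_s(x,y)$ with $\max_i \deg P_i \leq B$ and $s \leq B$ (these two quantities jointly controlling what is being called the ``degree'' of $\mathcal{S}$), produce a description of $\Pi_x \mathcal{S}$ by a Boolean combination of sign conditions of polynomials $Q_j(x)$ whose number and degrees are bounded polynomially in $B$.

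The key observation is that for each fixed $x$, the truth value of $\exists y : \Phi(P_1(x,y), \ldots, P_s(x,y))$ depends only on the combinatorial arrangement of the real roots of the univariate polynomials $P_i(x, \cdot)$ together with the signs the $P_i(x, \cdot)$ take on the resulting cells of $\R_y$. This combinatorial data is in turn determined by the signs, as $x$ varies, of a family of polynomials in $x$ alone arising from classical elimination theory applied to pairs $(P_i, P_j)$ regarded as polynomials in $y$: leading coefficients in $y$, discriminants, pairwise resultants, and principal subresultant coefficients. There are $O(s^2)$ such auxiliary polynomials $Q_j(x)$, each of degree $O(B^2)$. One then expresses $\Pi_x\mathcal{S}$ as a Boolean combination $\Psi$ of the sign conditions of the $Q_j$ by enumerating, via the Thom encoding of real roots, exactly which sign patterns of the $Q_j$ are compatible with at least one $y$ satisfying $\Phi$.

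The main technical obstacle is the precise bookkeeping of degree growth and Boolean complexity in a single projection step, so as to guarantee a uniform polynomial blow-up $B \mapsto B^{C_0}$ independent of the particular shape of $\Phi$. This is exactly what is controlled by the Milnor--Thom bound on the number of realizable sign conditions of a family of $s$ polynomials of degree $\leq B$ in $\R^{d_1}$, which is $(sB)^{O(d_1)}$, coupled with the subresultant degree estimates above. Iterating the one-step reduction $d_2$ times and absorbing the iterated constants into a single exponent $C = C(d_1, d_2)$ then gives the claimed bound on $\deg \Pi_x \mathcal{S}$ (and symmetrically for $\Pi_y \mathcal{S}$).
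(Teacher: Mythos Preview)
The paper does not supply its own proof of this lemma; it is quoted verbatim as a known result from \cite{Bou07} (the effective form ultimately going back to algorithmic real algebraic geometry, e.g.\ Basu--Pollack--Roy \cite{BPR96}), and is used as a black box. So there is no paper proof to compare against.

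Your outline is the standard one and is essentially correct: reduce to eliminating a single real variable, replace the family $\{P_i(x,y)\}$ by the family of leading $y$-coefficients, principal subresultant coefficients, and derivatives needed for Thom encoding, observe that this is a family of $B^{O(1)}$ polynomials in $x$ of degree $B^{O(1)}$, and iterate. Two minor points worth tightening if you ever write this out in full. First, be explicit about which convention for ``degree of a semi-algebraic set'' you are using (number of polynomials plus maximum of their degrees, versus the product, versus sum of degrees); the bound $B\mapsto B^{C_0}$ holds under any of these, but the bookkeeping differs. Second, the one-step constant $C_0$ depends on the ambient dimension at that step, so after $d_2$ projections the exponent is a product $\prod_{k} C_0(d_1+k)$ rather than a simple power $C_0^{d_2}$; this is still of the form $C(d_1,d_2)$, so the conclusion is unaffected, but the sentence ``a one-step bound of the form $B\mapsto B^{C_0}$ then compounds'' slightly understates the dependence.
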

We define $X\subset I_0\times J$ to be the set of all $(\omega,a,\sigma)$ so that either \eqref{GQ1} or \eqref{GQ2} fails  for some $Q\in (0,n)+\mathcal{ER}_0(N_1)$ with $|n|\leq 2N_1$. Then  we can regard $X$ as a semi-algebraic set of degree $N_1^C$.   From ({\bf Hiv})  and solving the $Q$-equation for $r=r_1$, we know that $\Gamma_{r_1}\cap I_0$ is given by an  algebraic equation  in $\omega ,a$ (cf. \eqref{algeq}) of degree $M^{Cr_1^3}$. 
Note also that 
\begin{align*}
\mathcal{K}=\Pi_{\omega, a}\left(X\cap ((\Gamma_{r_1}\cap I_0)\times \R)\right)
\end{align*}
which together with Lemma \ref{tsp} implies  $\deg \mathcal{K}\leq N_1^CM^{Cr_1^3}.$
%We 
%We first divide $\Pi_a I_0$ into disjoint subintervals of size $\sim\varepsilon+ \delta$, denoted by $\{ I_{a,l}\}_{1\leq l\leq \delta^{-b}}$.  Fixing  $(\omega,a)\in I_0$, we let $\mathcal{K}_{\omega,a}\subset\R$ be the set of $\sigma$ so that either \eqref{GQ1} or \eqref{GQ2} fails  for some $Q\in (0,n)+\mathcal{ER}_0(N_1)$ with $|n|\leq 2N_1$. Then by ({\bf Hiv.a}) (for $r=r_1$) and the standard argument, $\mathcal{K}_{\omega,a}$ can be regarded as a semi-algebraic set of degree $\deg \mathcal{K}_{\omega,a}\leq N_1^C.$  Using Lemma \ref{sgmlem}, we know $\mathcal{K}_{\omega,a}\subset \bigcup_{1\leq l\leq N_1^C} J_l$ with each $|J_l|\sim \varepsilon+ \delta.$ In the following we fix one   $I_{a,l}$ and one $J_{l'}$ from the above reductions. and consider now the set 
%\begin{align}
%\mathcal{K}_{l,l'}= (\Pi_\omega I_0\times I_{a,l})\times J_{l'}
%\end{align}
%Since $\Gamma_{r_1}\cap I_0$ is defined by an equation in $(\omega,a)$ of degree at most $M^{Cr_1^3},$ we have 
%$$\deg\mathcal{K}_{l,l'}\cap (\Gamma_{r_1}\times \R)\leq N_1^CM^{r_1^3}.$$
%By the LDT mentioned as above, for each $(\omega,a)$, the $(\omega,a)$-section set $\{\sigma\in\R:\ (\omega, a,\sigma)\in\mathcal{K}_{l,l'}\}$ has measure at most $e^{-N_1^{\rho_1}}.$ So applying the Fubini's theorem, we obtain 
%$${\rm meas}_{b+1}( \mathcal{K}_{l,l'}\cap (\Gamma_{r_1}\times \R))\leq e^{-N_1^{\rho_1}}.$$
 Our aim is to estimate 
 \begin{align}\label{hypest}
{\rm meas}_b(\bigcup_{M^r/2\leq |k|\leq N}\{\omega:\  (\omega, k\cdot\omega)\in \mathcal{K}\}).
 \end{align}
At this stage, we need a projection lemma  of  Bourgain.
\begin{lem}[\cite{Bou07}]\label{proj}
 Let ${S}\subset[0,1]^{d=d_1+d_2}$ be a semi-algebraic set of degree $\deg({S})=B$ and $\mathrm{meas}_d({S})\leq\eta$, where
$\log B\ll \log\frac{1}{\eta}.$
Denote by $(x, y)\in[0,1]^{d_1}\times[0,1]^{d_2}$ the product variable. Suppose
$ \eta^{\frac{1}{d}}\leq\epsilon.$
Then there is a decomposition of ${S}$ as
\begin{align*}
{S}={S}_1\cup{S}_2
\end{align*}
with the following properties: The projection of ${S}_1$ on $[0,1]^{d_1}$ has small measure
$$\mathrm{meas}_{d_1}(\Pi_{x}{S}_1)\leq {B}^{C(d)}\epsilon,$$
and ${S}_2$ has the transversality property
\begin{align*}
\mathrm{meas}_{d_2}(\mathcal{L}\cap {S}_2)\leq B^{C(d)}\epsilon^{-1}\eta^{\frac{1}{d}},
\end{align*}
where $\mathcal{L}$ is any $d_2$-dimensional hyperplane in $\R^d$ s.t.,
$\max\limits_{1\leq j\leq d_1}|\Pi_\mathcal{L}(\bm e_j)|<{\epsilon},$
where we denote by $\bm e_1,\cdots,{\bm e}_{d_1}$ the $ x$-coordinate vectors.
\end{lem}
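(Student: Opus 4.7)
The plan is to prove this via the Yomdin--Gromov semi-algebraic parametrization theorem, which is the standard engine behind quantitative projection bounds of this flavor (see \cite{Bou05, Bou07}). First I would apply Yomdin--Gromov to decompose $\mathcal{S}$ into at most $B^{C(d)}$ pieces, each of the form $\phi_i((0,1)^{d_i})$ with $d_i \le d$ and $\phi_i$ a $C^r$ map into $[0,1]^d$ having all partial derivatives of order $\le r$ bounded by $1$. Pieces of dimension $d_i < d$ carry no $d$-dimensional measure and can be absorbed. Subdividing each parameter cube further (at the cost of another $B^{C(d)}$ factor), one may assume each piece has image of diameter $\lesssim \eta^{1/d}$.

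Next comes a dichotomy on the differential. On each sub-piece, write $D\phi_i = (A_i, B_i)$ according to the splitting $\mathbb{R}^d = \mathbb{R}^{d_1} \oplus \mathbb{R}^{d_2}$, where $A_i$ is the $x$-component. Assign the sub-piece to $\mathcal{S}_1$ if $\|A_i\| \le \epsilon$ (the tangent is $\epsilon$-close to the $y$-subspace), and to $\mathcal{S}_2$ otherwise. For $\mathcal{S}_1$-pieces, smoothness and the diameter control ensure $\|A_i\| \lesssim \epsilon$ throughout the sub-piece, so its $x$-projection is trapped in an $\epsilon^{d_1}$-ball; summing over $\le B^{C(d)}$ such pieces gives $\operatorname{meas}_{d_1}(\Pi_x \mathcal{S}_1) \le B^{C(d)} \epsilon$ under the standing hypothesis $\epsilon \ge \eta^{1/d}$.

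For $\mathcal{S}_2$ I would use quantitative transversality. The hypothesis $\max_j |\Pi_{\mathcal{L}}(\mathbf{e}_j)| < \epsilon$ means that the $d_2$-plane $\mathcal{L}$ makes angle $O(\epsilon)$ with the $y$-subspace, hence is $\epsilon$-transverse to any cell whose tangent has a singular value $\ge \epsilon$ in the $x$-direction. By the implicit function theorem on each sub-piece, $\mathcal{L} \cap \phi_i(Q_i)$ is locally the graph of a function of $d_2$ variables with Lipschitz constant $O(\epsilon^{-1})$, so its $d_2$-measure is bounded by $C \epsilon^{-1} (\eta^{1/d})^{d_2}$. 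A covering estimate on how many such pieces of diameter $\eta^{1/d}$ can meet a single $\mathcal{L}$, combined with the total cell count $B^{C(d)}$, then produces the claimed bound $\operatorname{meas}_{d_2}(\mathcal{L} \cap \mathcal{S}_2) \le B^{C(d)} \epsilon^{-1} \eta^{1/d}$.

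The main obstacle is extracting the correct quantitative form of Yomdin--Gromov with polynomial degree dependence (the exact exponent $C(d)$ must be traceable through the parametrization) and threading the scales $\eta^{1/d}$, $\epsilon$, and the singular-value threshold through the dichotomy so that the final exponents come out to precisely $\epsilon$ on the $\mathcal{S}_1$ side and $\epsilon^{-1} \eta^{1/d}$ on the $\mathcal{S}_2$ side, all while keeping the degree inflation $B^{C(d)}$ absorbable by the hypothesis $\log B \ll \log(1/\eta)$. In particular, the correct balance of cell diameter against the dichotomy threshold is what forces the characteristic root $\eta^{1/d}$ in the transversality estimate rather than a higher power of $\eta$.
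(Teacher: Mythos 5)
First, note that the paper does not actually prove this lemma: it is quoted from \cite{Bou07}, and the accompanying remark points to the Yomdin--Gromov triangulation theorem \cite{Gro87} and to \cite{BN19} for a complete proof. So your general route (semi-algebraic parametrization plus a dichotomy between ``nearly vertical'' pieces, sent to $\mathcal{S}_1$, and pieces transverse to the $y$-subspace, sent to $\mathcal{S}_2$) is indeed the strategy of the cited source, and in that sense you are on the intended track.

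However, as written the sketch has genuine gaps, not just unfilled routine details. The central one is the cell count: Yomdin--Gromov gives $B^{C(d)}$ parametrizing maps with unit $C^r$ bounds, but refining so that each image has diameter $\lesssim\eta^{1/d}$ requires subdividing each parameter cube into $\sim\eta^{-d_i/d}$ subcubes, i.e.\ the cost is a power of $1/\eta$, not ``another $B^{C(d)}$ factor''. Since $\log B\ll\log\frac1\eta$, a factor $\eta^{-c}$ cannot be absorbed into $B^{C(d)}$, and with the corrected count both of your final summations fail: summing $\epsilon^{d_1}$ over $\sim B^{C}\eta^{-1}$ pieces does not give $B^{C}\epsilon$, and the $\mathcal{S}_2$ bound needs a count of the cells that a single plane $\mathcal{L}$ can meet, which is exactly the crux you defer to ``a covering estimate''. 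A symptom that something is off is that your argument uses the hypothesis ${\rm meas}_d(\mathcal{S})\leq\eta$ only to choose the cell diameter; if the count really stayed $B^{C(d)}$ you could run the same argument with an arbitrarily small diameter and ``prove'' bounds that are false already for $\mathcal{S}=[0,1]^d$. Relatedly, the dichotomy itself is ill-posed for full-dimensional pieces: the tangent space of a $d$-dimensional cell is all of $\R^d$ (and the norm of the $x$-component of the rescaled differential scales with the cell size, so every small cell would land in $\mathcal{S}_1$). In the genuine argument the measure hypothesis enters structurally: since ${\rm meas}_d(\mathcal{S})\leq\eta$, every point of $\mathcal{S}$ lies within $C\eta^{1/d}$ of the boundary, a $(d-1)$-dimensional semi-algebraic set of degree $B^{C}$, so one parametrizes that lower-dimensional set (getting $\sim B^{C}\eta^{-(d-1)/d}$ plates of size $\eta^{1/d}$) and runs the angle dichotomy on the $(d-1)$-dimensional tangent planes; the bookkeeping that yields exactly $B^{C}\epsilon$ and $B^{C}\epsilon^{-1}\eta^{1/d}$ is delicate and is carried out rigorously in \cite{BN19}, which is why the paper simply cites the result rather than reproving it.
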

\begin{rem} This Lemma admits scaling.
%If $\mathcal{S}_1\times \mathcal{S}_2\subset [0,\delta]^{d=d_1+d_2}$ and $\eta=\delta^d\eta$,  then we have the estimate $\mathrm{mes}_{d_1}(\Pi_{x_1}\mathcal{S}_1)\leq \delta^{d} {B}^{C(d)}\epsilon$, 
It is based on the Yomdin-Gromov triangulation theorem \cite{Gro87}; for a complete proof, see \cite{BN19}. 
\end{rem}
Taking $\eta=C(\varepsilon+\delta)^be^{-N_1^{\rho_1}}, \epsilon =M^r/10$,  %and identifying  $(\Gamma_{r_1}\cap I_0)\times \R$ as an interval on $\R^{b+1}$,  
we obtain  by  $C'\rho_1>3$ that
$$C(\varepsilon+\delta)^be^{-\frac{1}{b+1} N_1^{\rho_1}}\leq C(\varepsilon+\delta)^be^{-\frac{1}{b+1}(\log N )^{C'\rho_1}}\ll \epsilon.$$
We have a decomposition 
	$$ \mathcal{K}=\mathcal{K}_1\cup\mathcal{K}_2,$$
	where ${\rm meas}\ (\Pi_\omega\mathcal{K}_1)\leq C(\varepsilon+\delta)^{b+1}N_1^CM^{Cr_1^3}M^{-r}\leq (\varepsilon+\delta)^{b+1}M^{-\frac r2}$ and the factor $(\varepsilon+\delta)^{2b+1}$ comes from scaling when applying Lemma \ref{proj}. 
	Note however that $|k|\geq M^r/2$ can not ensure $\min_{1\leq l\leq b}|k_l|\geq M^r/2$.  So the hyperplane $\{(\omega,k\cdot\omega)\}$  does not satisfy  the steepness condition. To address this issue,  we will  need to take into account of all possible directions  and apply Lemma \ref{proj} $b$ times, as done by Bourgain  (cf.  (3.26) of \cite{Bou07} and also \cite{LW24}). This then leads to an upper bound  $\
	C(\varepsilon+\delta)^{b+1}M^{-\frac{r+1}{C(b)}}$ on  \eqref{hypest}, where $C(b)>0$ only depends on $b$.    
	
	Taking into account of all $J$,  we have shown the existence of $\tilde I_1\subset \tilde I$ with ${\rm meas}_b(\tilde I_1)\leq C(\varepsilon+\delta)^{b+1}N_1^CM^{-\frac{r+1}{C(b)}}\leq C(\varepsilon+\delta)^{b+1}M^{-\frac{r+1}{C(b)}}$ so that for $\omega\in \tilde I\setminus \tilde I_1$,  the estimates \eqref{tqr11} and \eqref{tqr12} hold  for all  $Q\in\mathcal{C}_2$.  Let $I_0$ range over $\mathcal{I}_{r_1}$. The total measure removed from $\Pi_\omega\Gamma_{r_1}$ is at most $ C(\varepsilon+\delta)^{b+1}M^{r_1^C}M^{-\frac{r+1}{C(b)}}\leq (\varepsilon+\delta)^{b+1}M^{-\frac{r+1}{C(b)}} $. Since \eqref{tqr11} and \eqref{tqr12} allow $O(e^{-N_1})$ perturbation of $(\omega, a)$, and $\|\Gamma_r-\Gamma_{r_1}\|\lesssim \delta_{r_1}\ll e^{-N_1}$, we obtain a subset $\Gamma_r'\subset  \Gamma_r$ with ${\rm meas}(\Pi_\omega\Gamma_r')\leq  (\varepsilon+\delta)^{b+1}M^{-\frac{r+1}{C(b)}}$ so that \eqref{tqr11} and \eqref{tqr12} hold on 
	\begin{align*}
	\bigcup_{I\in\mathcal{I}_{r_1}}(I\cap (\Gamma_r\setminus\Gamma_r')),
	\end{align*}
	and hence on 
	\begin{align*}
	\bigcup_{I\in\mathcal{I}_{r}}(I\cap (\Gamma_r\setminus\Gamma_r')).
	\end{align*}
	
	Since  $M^{r^C}+e^{N_1^{\rho_2}}\ll  M^{(r+1)^C}$,  the  estimates \eqref{hivc1} and \eqref{tqr11} allow $O(M^{-(r+1)^C})$ perturbation of $(\omega, a)$. Combining conclusions in the above two cases and the  resolvent identity of \cite{HSSY} (cf. Lemma 3.10) gives a collection $\mathcal{I}_{r+1}$ of intervals of size $M^{-(r+1)^C}$
so that for $I\in\mathcal{I}_{r+1}$, 	
		\begin{align*}
\|H_{N}^{-1}(q^{(r)})\|&\leq M^{-(r+1)^C},\\
|H_{N}^{-1}(q^{(r)})(j;j')|&\leq e^{-c_{r+1}|j-j'|}\  {\rm for}\ |j-j'|>(r+1)^{C},
	\end{align*}
which concludes ({\bf Hiv, c}) at the scale $r+1.$ We remark that in the above exponential  off-diagonal decay estimates, $c_{r+1}\geq c_r-(\log M)^8(r+1)^{-8}$, which implies  $\inf_{r}c_r>0$. This explains why we need the sublinear distant off-diagonal decay (i.e., $\rho_3\in(0,1)$) in LDT.

Finally, we have 
\begin{align*}
{\rm meas}_b\left(\Pi_\omega(\Gamma_r\cap(\bigcup_{I'\in\mathcal{I}_{r}}I'\setminus\bigcup_{I\in\mathcal{I}_{r+1}}I))\right)\leq {\rm meas}(\Pi_\omega\Gamma_r')\leq (\varepsilon+\delta)^{b+1}M^{-\frac{r+1}{C(b)}}.
\end{align*}
Recalling that $\omega\to a$ is a $C^1$ diffeomorphism and $\det (\frac{\partial \omega}{\partial a})\sim \delta^{b}$, we obtain since $\varepsilon\lesssim \delta$ that 
\begin{align*}
{\rm meas}\left(\Pi_a(\Gamma_r\cap(\bigcup_{I'\in\mathcal{I}_{r}}I'\setminus\bigcup_{I\in\mathcal{I}_{r+1}}I)))\right)\leq M^{-\frac{r+1}{C(b)}},
\end{align*}
	which yields ({\bf Hiv, d}) at the scale $r+1.$
	\hfill $\square$
	
	%Finally, let us  the exponential decay of $q^{(r)}.$
	
	%\end{proof}
	
	\begin{proof}[Proof of Theorem \ref{mthm}]
	The proof of Theorem \ref{mthm} is a direct corollary of the inductive theorem. We refer to \cite{BW08, KLW24} for further details. 
	\end{proof}
	%\subsection{The convergence}
		\section*{Acknowledgements}

 This work was supported by the National Key R\&D Program (2021YFA1001600).   Y. Shi was  supported by the NSFC  (12522110)  and 
W.-M. Wang acknowledges support from the
CY Initiative of Excellence, ``Investissements d'Avenir" Grant No. ANR-16-IDEX-0008. The authors would like to thank the reviewer for valuable suggestions. 
\section*{Declaration of competing interest}
On behalf of all authors, the corresponding author states that there is no conflict of interest.
\section*{Data availability}
Data sharing not applicable to this article as no datasets were generated or analysed during the current
study. 
\bibliographystyle{alpha}
%\bibliography{References}

\end{document}